\newcolumntype{Y}{>{\centering\arraybackslash}X}
\newcolumntype{P}{>{\raggedleft\arraybackslash}X}
\newcolumntype{C}[1]{>{\centering\arraybackslash}m{#1}}
\newcolumntype{L}{>{\raggedright\arraybackslash}X}
\def\spacingset#1{\renewcommand{\baselinestretch}%
{#1}\small\normalsize} \spacingset{1}
\newcommand{\IQR}{\textup{IQR}}
\newcommand{\Median}{\textup{Med}}
\newcommand{\FDP}{\textup{FDP}\xspace}
\newcommand{\FDPex}{\textup{FDX}\xspace}
\newcommand{\FDR}{\textup{FDR}\xspace}
\newcommand{\FWER}{\textup{FWER}\xspace}
\newcommand{\cn}{c_n}
\newcommand\sbullet[1][.5]{\mathbin{\vcenter{\hbox{\scalebox{#1}{$\bullet$}}}}}
\newcommand{\footremember}[2]{%
    \footnote{#2}
    \newcounter{#1}
    \setcounter{#1}{\value{footnote}}%
}
\newcommand{\footrecall}[1]{%
    \footnotemark[\value{#1}]%
}
\title{Causal Inference for Genomic Data\\with Multiple Heterogeneous Outcomes}
\author{Jin-Hong Du\footremember{cmustats}{Department of Statistics and Data Science, Carnegie Mellon University, Pittsburgh, PA 15213, USA.}\footremember{cmumld}{Machine Learning Department, Carnegie Mellon University, Pittsburgh, PA 15213, USA.} \and 
Zhenghao Zeng\footrecall{cmustats} \and
Edward H. Kennedy\footrecall{cmustats} \and
Larry Wasserman\footrecall{cmustats} \footrecall{cmumld} \and 
Kathryn Roeder\footrecall{cmustats} \footremember{cmucbd}{Computational Biology Department, Carnegie Mellon University, Pittsburgh, PA 15213, USA.}}
\date{\today}
\begin{document}

\maketitle

\begin{abstract}
    With the evolution of single-cell RNA sequencing techniques into a standard approach in genomics, it has become possible to conduct cohort-level causal inferences based on single-cell-level measurements. However, the individual gene expression levels of interest are not directly observable; instead, only repeated proxy measurements from each individual's cells are available, providing a derived outcome to estimate the underlying outcome for each of many genes. In this paper, we propose a generic semiparametric inference framework for doubly robust estimation with multiple derived outcomes, which also encompasses the usual setting of multiple outcomes when the response of each unit is available. To reliably quantify the causal effects of heterogeneous outcomes, we specialize the analysis to standardized average treatment effects and quantile treatment effects. Through this, we demonstrate the use of the semiparametric inferential results for doubly robust estimators derived from both Von Mises expansions and estimating equations. A multiple testing procedure based on Gaussian multiplier bootstrap is tailored for doubly robust estimators to control the false discovery exceedance rate. Applications in single-cell CRISPR perturbation analysis and individual-level differential expression analysis demonstrate the utility of the proposed methods and offer insights into the usage of different estimands for causal inference in genomics.
\end{abstract}

\noindent%
{\it Keywords:}
Derived outcomes; doubly robust estimation; multiple testing; quantile treatment effects; scRNA-seq data.
\vfill

\clearpage

\section{Introduction}

    In observational studies, causal inference on multiple outcomes is increasingly prevalent in scientific discoveries \citep{imbens2015causal}. 
    Recent advances in high-throughput techniques have enabled the collection of large-scale repeated measurements across multiple subjects in various domains. However, subject-level outcomes, such as averages or inter-correlations of measurements within each subject, are often unobserved. Instead, researchers rely on repeated measurements to construct estimates of these outcomes, referred to as \emph{derived outcomes} (\Cref{fig:causal-diagram}). For example, advancements in single-cell RNA sequencing (scRNA-seq) techniques \citep{review:2023} have enabled large-scale repeated gene expression measurements across multiple cells for each individual. These measurements allow researchers to construct derived outcomes (e.g., the sample average of gene expressions for an individual) as proxies for subject-level outcomes (e.g., the true average gene expression for that individual), facilitating individual-level comparisons \citep{zhang2022ideas}. The goal of subject-level causal inference is to determine which unobserved outcomes are causally affected by treatment by comparing derived outcomes between treatment and control groups. However, challenges such as unobservability of outcomes, subject heterogeneity \citep{qiu2023unveiling}, and non-identical outcome distributions limit the reliability of existing causal inference methods.

    One major challenge of subject-level causal inference on scRNA-seq data is the unobservability of the outcomes. Individual gene expression levels of subjects are not directly measurable; instead, repeated measurements from heterogeneous cells are aggregated to serve as proxies.
    Additionally, variability among subjects may arise from latent states unique to each individual that influence gene expression patterns but remain hidden from direct observation \citep{gcate2023}.
    Consequently, derived outcomes often violate the classic assumption of being independent and identically distributed due to biological processes, experimental conditions, and inherent cellular heterogeneity.
    To analyze subject-level brain functional connectivities, prior work by \citet{qiu2023unveiling} attempts to estimate average treatment effects (ATEs) using inverse probability weighting (IPW) estimators \citep{imbens2004nonparametric, tsiatis2006semiparametric}. However, their approach relies on accurate propensity score modeling and assumes outcome homogeneity, which may not hold for genomic data.

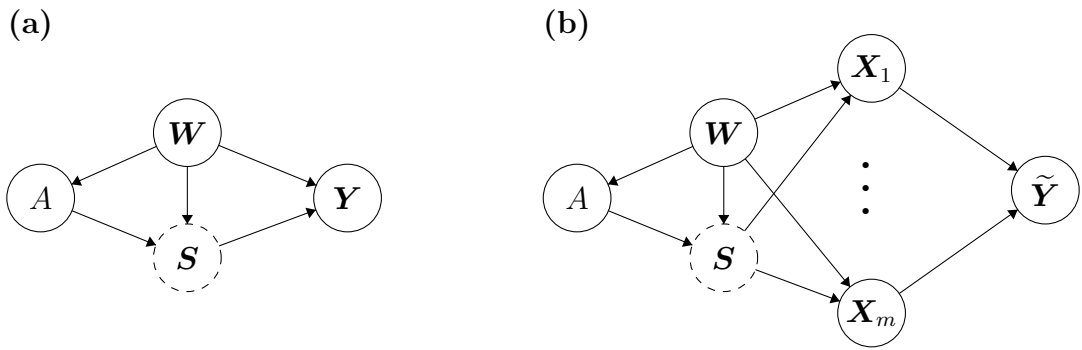
\begin{figure}[!t]
    \centering
    \begin{tikzpicture}[baseline,scale=0.15]
    \tikzstyle{every node}+=[inner sep=0pt]
    \node[font=\bf] at (13,-3) {(a)};
    \draw [black] (41.1,-18.3) circle (3);
    \draw (41.1,-18.3) node {$\bY$};
    \draw [black,dashed] (26.9,-23.6) circle (3);
    \draw (26.9,-23.6) node {$\bS$};
    \draw [black] (13.9,-18.3) circle (3);
    \draw (13.9,-18.3) node {$A$};
    \draw [black] (26.9,-12.5) circle (3);
    \draw (26.9,-12.5) node {$\bW$};
    \draw [black] (16.68,-19.43) -- (24.12,-22.47);
    \fill [black] (24.12,-22.47) -- (23.57,-21.7) -- (23.19,-22.63);
    \draw [black] (24.16,-13.72) -- (16.64,-17.08);
    \fill [black] (16.64,-17.08) -- (17.57,-17.21) -- (17.17,-16.3);
    \draw [black] (26.9,-15.5) -- (26.9,-20.6);
    \fill [black] (26.9,-20.6) -- (27.4,-19.8) -- (26.4,-19.8);
    \draw [black] (29.68,-13.63) -- (38.32,-17.17);
    \fill [black] (38.32,-17.17) -- (37.77,-16.4) -- (37.39,-17.33);
    \draw [black] (29.71,-22.55) -- (38.29,-19.35);
    \fill [black] (38.29,-19.35) -- (37.37,-19.16) -- (37.71,-20.1);
    \end{tikzpicture}\hspace{2cm}
    \begin{tikzpicture}[baseline,scale=0.15]
    \tikzstyle{every node}+=[inner sep=0pt]
    \node[font=\bf] at (13,-3) {(b)};
    \draw [black] (40,-6.8) circle (3);
    \draw (40,-6.8) node {$\bX_1$};
    \draw [black] (55.4,-17.6) circle (3);
    \draw (55.4,-17.6) node {$\tilde{\bY}$};
    \draw [black,dashed] (26.9,-23.6) circle (3);
    \draw (26.9,-23.6) node {$\bS$};
    \draw [black] (40,-28.5) circle (3);
    \draw (40,-28.5) node {$\bX_m$};
    \draw [black] (13.9,-18.3) circle (3);
    \draw (13.9,-18.3) node {$A$};
    \draw [black] (26.9,-12.5) circle (3);
    \draw (26.9,-12.5) node {$\bW$};
    \draw [black] (42.46,-8.52) -- (52.94,-15.88);
    \fill [black] (52.94,-15.88) -- (52.58,-15.01) -- (52,-15.83);
    \draw [black] (42.45,-26.77) -- (52.95,-19.33);
    \fill [black] (52.95,-19.33) -- (52.01,-19.39) -- (52.59,-20.2);
    \draw [black] (16.68,-19.43) -- (24.12,-22.47);
    \fill [black] (24.12,-22.47) -- (23.57,-21.7) -- (23.19,-22.63);
    \draw [black] (28.74,-21.23) -- (38.16,-9.17);
    \fill [black] (38.16,-9.17) -- (37.27,-9.49) -- (38.06,-10.1);
    \draw [black] (29.71,-24.65) -- (37.19,-27.45);
    \fill [black] (37.19,-27.45) -- (36.62,-26.7) -- (36.27,-27.64);
    \draw (39.5,-15.65) node {$\sbullet[0.5]$};
    \draw (39.5,-17.65) node {$\sbullet[0.5]$};
    \draw (39.5,-19.65) node {$\sbullet[0.5]$};
    \draw [black] (24.16,-13.72) -- (16.64,-17.08);
    \fill [black] (16.64,-17.08) -- (17.57,-17.21) -- (17.17,-16.3);
    \draw [black] (26.9,-15.5) -- (26.9,-20.6);
    \fill [black] (26.9,-20.6) -- (27.4,-19.8) -- (26.4,-19.8);
    \draw [black] (29.65,-11.3) -- (37.25,-8);
    \fill [black] (37.25,-8) -- (36.32,-7.86) -- (36.72,-8.77);
    \draw [black] (28.8,-14.82) -- (38.1,-26.18);
    \fill [black] (38.1,-26.18) -- (37.98,-25.24) -- (37.21,-25.88);
    \end{tikzpicture}
    \caption{The causal diagram for the causal inference problems studied in this paper.
    \textbf{(a) Multiple outcomes.} For a cell, its gene expression $\bY\in\RR^p$ is causally affected by the treatment $A\in\RR$, the latent state $\bS\in\RR^\ell$ and covariate $\bW\in\RR^q$ such as batch effects.   
    \textbf{(b) Multiple derived outcomes.}
    In the subject-level studies, a subject's overall gene expression $\bY$ is not directly observed.
    Instead, repeated measurements of gene expressions $\bX_1,\ldots,\bX_m\in\RR^{d}$ of $m$ cells from the subject provides a proxy $\tilde{\bY}$ for $\bY$.
    See \Cref{sec:setup} for formal definitions.
    Note that the treatment effect of $A$ on $\bY$ (or $\tilde{\bY}$) is mediated by the latent state $\bS$ even conditioned on the covariate $\bW$.
    When conditioned on $\bW$ and $A$, the outcomes $Y_1,\ldots,Y_p$ within the same subject are still not independent and identically distributed.
    }\label{fig:causal-diagram}
\end{figure}

    A second challenge arises from the heterogeneity of gene expression data, which often exhibit variable scaling and right-skewed distributions, complicating comparisons across outcomes.
    This heterogeneity challenges the common use of ATEs since relying solely on mean differences in counterfactual distributions can lead to misleading conclusions. 
    To estimate the treatment effects, one can rely on propensity score modeling or outcome modeling.
    For instance, one common strategy in scRNA-seq analyses is to model outcomes directly using parameter models such as Poisson or Negative Binomial models \citep{sarkar2021separating}, or zero-inflated models \citep{jiang2022statistics}.
    While using either IPW or regression estimators may seem intuitive, they are both sensitive to model specification.

    Doubly robust (DR) estimators \citep{robins1994estimation, scharfstein1999adjusting} offer a promising solution to mitigate sensitivity to model specification by combining IPW and outcome modeling. 
    DR estimators are consistent as long as either of the two nuisance estimators is consistent, and $\sqrt{n}$-consistent whenever the nuisance estimators converge at only $n^{-1/4}$ rates (or more generally if the product of their errors is of the order $n^{-1/2}$).
    Additionally, if both the nuisance models are correctly specified, in the sense that the product of their errors is smaller order than $n^{-1/2}$, the DR estimators achieve the semiparametric efficiency bound for the unrestricted model, allowing the regression and propensity score functions to be estimated flexibly at slower than $n^{-1/2}$ rates in a wide variety of settings \citep{laan2003unified,tsiatis2006semiparametric}.
    Recent work introduces a structure-agnostic framework for functional estimation, demonstrating that DR estimators are optimal for estimating ATEs when only black-box estimators of the outcome model and propensity score are available \citep{balakrishnan2023fundamental, jin2024structure}. These results suggest that DR estimators cannot be improved upon without making additional structural assumptions. Given these advantages, exploring doubly robust estimation in settings with multiple heterogeneous outcomes is crucial, as it mitigates model misspecification and enables reliable statistical testing when nonparametric methods are used for outcome and propensity score estimation.

    The unobservability of subject-level outcomes, heterogeneity in gene expression distributions, and sensitivity to model specification collectively challenge traditional causal inference methods in scRNA-seq studies. 
    To address these issues, we propose a semiparametric inference framework to handle multiple derived outcomes effectively. 
    Specifically, (i) we define causal estimands that capture meaningful counterfactual differences across multiple outcomes and establish identification conditions under hierarchical models where outcomes of interest are unobserved (\Cref{fig:causal-diagram}(b)); and (ii) we develop robust and efficient estimators tailored for these estimands. Additionally, we extend multiple-testing procedures to control statistical errors during simultaneous inference on high-dimensional derived outcomes. 
    Together, these methodological advancements provide a unified approach incorporating doubly robust estimation to handle multiple derived outcomes effectively.
    
    Focusing on multiple derived outcomes, we first establish generic results on semiparametric inference with doubly robust estimators.
    It also encompasses the usual setting of multiple outcomes when the response of each unit is available.
    By utilizing finite-sample maximal inequality for finite maximums, we obtain interpretable conditions of uniform estimation error control for the empirical process terms and the asymptotic variances.
    We derive the uniform convergence rates, in terms of only finitely many moments of the influence functions' envelope and the maximum second moments of the individual estimation errors, allowing for the number of outcomes $p$ to be potentially exponentially larger than the sample size $n$.

    To address the challenges of outcome heterogeneity in single-cell data, we further specialize our analysis to standardized average treatment effects for comparing treatment effects across different outcomes on a common scale and quantile treatment effects for robustness against outliers. This demonstrates how generic semiparametric inferential results for DR estimators derived from von Mises expansions and estimating equations can be applied in high-dimensional settings. Furthermore, we adapt Gaussian approximation results from \citet{chernozhukov2013gaussian} to DR estimators and implement a step-down procedure to control false discovery (exceedance) rates with guaranteed power \citep{genovese2006exceedance}.

    Our exploration includes two real-world application scenarios of the proposed causal inference methods.
    (1) \emph{Single-cell CRISPR perturbation analysis}: 
    Gene expressions from single cells are compared between perturbation and control groups in CRISPR experiments to identify target genes of individual perturbations and analyze the effects of perturbations \citep{dixit2016perturb}, as shown in \Cref{fig:causal-diagram}(a).
    (2) \emph{Individual level differential expression analysis}: 
    Aggregated gene expressions from individual subjects under two conditions (case and control) are analyzed to identify genes intrinsically affected by these conditions across subjects, corresponding to \Cref{fig:causal-diagram}(b).
    By applying our methods to these datasets, we demonstrate their practical utility while highlighting the strengths and limitations of different causal estimands. 
    These findings emphasize the importance of suitable causal inference techniques for the accurate interpretation of genomic data.

\textbf{Organization.}
In \Cref{sec:semi-inf-multi-outcomes}, we review and extend the classic semiparametric inference framework to the setting of multiple outcomes.
In \Cref{sec:setup}, we set up the problem of interest and discuss the identification conditions for the causal estimands.
In \Cref{sec:standarized-effects}, we analyze two DR estimators for standardized and quantile treatment effects and study their statistical properties.
In \Cref{sec:inference}, we study the statistical error of simultaneous inference and propose a multiple testing procedure for controlling the false discovery rate.
In \Cref{sec:simu} and \Cref{sec:real-data}, we conduct simulations and scRNA-seq data analysis to validate the proposed simultaneous causal inference method.
A detailed review of related work is provided in \Cref{app:related-work}.

\textbf{Notation.} Throughout our exposition, we will use the following notational conventions. 
    We denote scalars in non-bold lower or upper case (e.g., $X$), vectors in bold upper case (e.g., $\bX$), and matrices in non-italic bold upper case (e.g., $\Xb$).
    For $a,b\in\RR$, we write $a\vee b = \max\{a,b\}$ and $a\wedge b = \min\{a,b\}$.
    For any random variable $X$, its $L_q$ norm is defined by $\|X \|_{L_q}= (\EE[| X |^q] )^{1/q}$ for $q=1,\ldots,\infty$.
    For $p\in\NN$, $[p] :=\{1,\ldots,p\}$.
    For a set $\cA$, let $|\cA|$ be its cardinality.
    For (potentially random) measurable functions $f$, we denote expectations with respect to $Z$ alone by $\PP f(Z) = \int f \rd\PP$, and with respect to both $Z$ and the observations where $f$ is fitted on by $\EE [f(Z)]$.
    The empirical expectation is denoted by $\PP_n f(Z) = \frac{1}{n}\sum_{i=1}^nf(Z_i)$ for $\iid$ samples $Z_1,\ldots,Z_n$ of $Z$.
    Similarly, the population and empirical variances are denoted by $\VV$ and $\VV_n$, respectively.
    We write the (conditional) $L_p$ norm of $f$ as $\|f\|_{L_p} = \left[\int f(z)^p \rd \PP(z) \right]^{1/p}$ for $p\geq1$.
    We use ``$o$'' and ``$\cO$'' to denote the little-o and big-O notations and let ``$\op$'' and ``$\Op$'' be their probabilistic counterparts.
    For sequences $\{a_n\}$ and $\{b_n\}$, we write $a_n\ll b_n$ or $b_n\gg a_n$ if $a_n=o(b_n)$; $a_n\lesssim b_n$ or $b_n\gtrsim a_n$ if $a_n=\cO(b_n)$; and $a_n\asymp b_n$ if $a_n=\cO(b_n)$ and $b_n=\cO(a_n)$. All the constants $c,c_1,c_2$ and $C, C_1,C_2$ may vary from line to line.
    Convergence in distribution and probability are denoted by ``$\dto$'' and ``$\pto$''.

\section{Semiparametric inference with multiple outcomes}\label{sec:semi-inf-multi-outcomes}
    Prior to delving into our main topic, this section takes a brief excursion into the formulation of semiparametric inference within the context of multiple outcomes, laying the foundation for addressing our specific problem in subsequent sections.

    Let $\bZ_1,\ldots,\bZ_n$ be observations of i.i.d. samples from a population $\cP$.
    In the presence of multiple outcomes, we are interested in estimating $p$ target estimands $\tau_j:\cP\mapsto \RR$ for $j=1,\ldots,p$. 
    Suppose $\tau_j$ admits a von Mises expansion; that is, there exists an influence function $\varphi_j(z;\PP)$ with $\int \varphi_j(z;\PP) \rd \PP(z) = 0$ and $\int \varphi_j(z;\PP)^2 \rd \PP(z) <\infty$, such that
    \begin{align*}
        \tau_j(\overline{\PP}) - \tau_j({\PP}) = \int \varphi_j(z;\overline{\PP}) \rd (\overline{\PP} - {\PP}) + T_{{\rm R},j},
    \end{align*}
    where $T_{{\rm R},j}$ is a second-order remainder term (which means it only depends on products or squares of differences between $\overline{\PP}$ and ${\PP}$).
    The influence function quantifies the effect of an infinitesimal contamination at the point $z$ on the estimate, standardized by the mass of the contamination.
    Its historical development and definition under diverse sets of regularity conditions can be found at \citet[Section 2.1]{hampel2011robust}.    
    The above expansion suggests a one-step estimator that corrects the bias of the plug-in estimator $\tau_j(\hat{\PP})$:
    \begin{align}
        \hat{\tau}_j(\PP) &:= \tau_j(\hat{\PP}) + \PP_n\{\varphi_j(\bZ;\hat{\PP})\}, \label{eq:one-step-est}
    \end{align}
    where $\hat{\PP}$ is an estimate of $\PP$.
    For many estimands, such as ATE and expected conditional covariance, the one-step estimator is also a DR estimator.
    Although for certain estimands like expected density, the standard one-step estimator is not doubly robust, it still has nuisance errors that consist of a second-order term \citep{kennedy2021semiparametric}.
    Then, the one-step estimator $\hat{\tau}_j$ of the $j$th target estimand $\tau_j$ admits a three-term decomposition of the estimation error \citep[Equation (10)]{kennedy2022semiparametric}\footnote{For certain estimands, such as average treatment effects and expected conditional covariance functionals, it is usually more convenient to use the uncentered influence functions $\phi_j(\bZ;\PP) := \varphi_j(\bZ;\PP) +\tau_j(\PP)$ in the decomposition.}:
    \begin{align}
        \hat{\tau}_j(\PP) - \tau_j(\PP) = \underbrace{(\PP_n - \PP) \{\varphi_j(\bZ; \PP)\}}_{T_{{\rm S},j}} + \underbrace{(\PP_n - \PP) \{\varphi_j(\bZ; \hat{\PP}) - \varphi_j(\bZ; \PP)\}}_{T_{{\rm E},j}} + T_{{\rm R},j}.
        \label{eq:decomposition}
    \end{align}
    In the above decomposition, the first term after $\sqrt{n}$-scaling has an asymptotic normal distribution by the central limit theorem. 
    That is, $\sqrt{n}\,T_{{\rm S},j} \dto \cN(0,\Var[\varphi_j(\bZ; \PP)])$.    
    The higher-order term $T_{{\rm R},j}$ is usually negligible and has an order of $\op(n^{-1/2})$ under certain conditions.
    On the other hand, the empirical process term $T_{{\rm E},j}$ will be asymptotically negligible (i.e., $\op(n^{-1/2})$) under Donsker assumption \citep{van2000asymptotic} or sample splitting \citep{chernozhukov2018double, kennedy2020sharp}, because it is a sample average with a shrinking variance. 
    In our problem setting with an increasing number of outcomes, uniform control over all the empirical process terms $T_{{\rm E},j}$ for $j=1,\ldots,p$ is desired to facilitate the construction of simultaneous confidence intervals.
    Below is an extension of Lemma 2 from \citet[Appendix B]{kennedy2020sharp} to the setting of multiple outcomes.

    \begin{lemma}[Uniform control of the empirical process terms]\label{lem:emp-process-term}
        Let $\PP_n$ denote the emprical measure over $\cD_n=(\bZ_1,\ldots,\bZ_n)\in \cZ^n$, and let $g_j:\cZ\to \RR$ be a (possibly random) function independent of $\cD_n$ for $j=1,\ldots, p$ with $p \geq 2$.
        Let $G(\cdot) := \max_{1\leq j\leq p}| g_j(\cdot) |$ denote the envelope of $g_1,\ldots,g_p$.
        If $\max_{1\leq j\leq p}\|g_j\|_{L_2}< \infty$ and $\|G\|_{L_q}<\infty$ for some $q\in\NN\cup\{\infty\}$, then the following statements hold:
        \begin{align*}
            \EE\left[\max_{j=1,\ldots,p}\ |(\PP_n-\PP)g_j | \;\middle|\; \{g_j\}_{j=1}^p \right] &\lesssim \left(\frac{\log p}{n}\right)^{1/2} \max_{1\leq j\leq p} \|g_j\|_{L_2} + \left(\frac{\log p}{n}\right)^{1-1/q} \|G\|_{L_q}.
        \end{align*} 
    \end{lemma}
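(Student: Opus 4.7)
The plan is to exploit the independence of $\{g_j\}$ from $\cD_n$ by conditioning on $\{g_j\}_{j=1}^p$, so that inside the conditional expectation each $g_j$ may be treated as a deterministic function and $(\PP_n-\PP)g_j$ is a centred average of i.i.d.\ random variables. The desired bound has the form of a Gaussian ($\sqrt{\log p /n}$) piece plus a truncation ($(\log p/n)^{1-1/q}$) piece, which suggests a standard truncation-plus-Bernstein argument: split $g_j(Z)=g_j(Z)\mathbf{1}\{G(Z)\le M\}+g_j(Z)\mathbf{1}\{G(Z)>M\}$ for a level $M>0$ to be optimised at the end, absorb the centring drift into the corresponding tail piece, and bound the two resulting processes separately.

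For the bounded piece, write $h_j^M(Z)=g_j(Z)\mathbf{1}\{G(Z)\le M\}-\PP[g_j\mathbf{1}\{G\le M\}]$. These are centred, satisfy $|h_j^M|\le 2M$ almost surely, and have variance at most $\|g_j\|_{L_2}^2$. A union bound of Bernstein's inequality over $j=1,\ldots,p$, followed by integrating the tail probability, yields the standard maximal bound
\begin{align*}
\EE\Bigl[\max_{1\le j\le p}\Bigl|\tfrac{1}{n}\textstyle\sum_i h_j^M(Z_i)\Bigr|\;\Big|\;\{g_j\}\Bigr] \;\lesssim\; \sqrt{\tfrac{\log p}{n}}\,\max_{1\le j\le p}\|g_j\|_{L_2} \;+\; \tfrac{M\log p}{n}.
\end{align*}
For the tail piece, a crude union bound is wasteful and unnecessary: I bound the maximum by the envelope, and use Markov's inequality in the form $\EE[G\mathbf{1}\{G>M\}]\le M^{1-q}\|G\|_{L_q}^q$ to get, uniformly in $j$,
\begin{align*}
\EE\Bigl[\max_{1\le j\le p}\Bigl|\tfrac{1}{n}\textstyle\sum_i g_j(Z_i)\mathbf{1}\{G(Z_i)>M\}\Bigr|\;\Big|\;\{g_j\}\Bigr] + \max_{j}\bigl|\PP[g_j\mathbf{1}\{G>M\}]\bigr|\;\lesssim\; M^{1-q}\|G\|_{L_q}^q.
\end{align*}

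Combining the two pieces, the conditional expectation of the maximum is at most a constant times $\sqrt{\log p/n}\,\max_j\|g_j\|_{L_2}+M\log p/n+M^{1-q}\|G\|_{L_q}^q$. Balancing the two $M$-dependent terms by choosing $M=(n/\log p)^{1/q}\|G\|_{L_q}$ makes both of them equal to $(\log p/n)^{1-1/q}\|G\|_{L_q}$, which gives the claimed bound. The case $q=\infty$ is handled in the same way but without truncation, taking $M=\|G\|_{L_\infty}$ directly in Bernstein's inequality.

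The only real subtlety is keeping the centring honest across the truncation: one must verify that $\PP[g_j\mathbf{1}\{G>M\}]$ carries the same $M^{1-q}\|G\|_{L_q}^q$ control as the empirical tail, so the ``$\PP$'' piece of $\PP_n-\PP$ does not ruin the balance. This is immediate from Markov, but it is what allows us to state the bound only in terms of $\max_j\|g_j\|_{L_2}$ and $\|G\|_{L_q}$ rather than requiring additional moment assumptions on individual $g_j$'s.
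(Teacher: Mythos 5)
Your proof is correct, but it takes a genuinely different route from the paper. The paper's proof is a one-line reduction: it conditions on $\{g_j\}$, observes that $(\PP_n-\PP)g_j$ is a centred average of i.i.d.\ terms, and invokes the finite-sample maximal inequality of \citet[Proposition B.1]{kuchibhotla2022least} (restated as \Cref{lem:maximal-ineq}), which delivers exactly the two-term bound $\sqrt{V_{n,p}\log(1+p)}+(\log(1+p))^{1-1/q}(\sum_i\EE[\xi_i^q])^{1/q}$ after dividing by $\sqrt{n}$. You instead reprove that inequality from scratch: truncation of the envelope at level $M$, Bernstein plus a union bound and tail integration for the bounded part (giving $\sqrt{\log p/n}\max_j\|g_j\|_{L_2}+M\log p/n$), a Markov/H\"older bound $\EE[G\mathbf 1\{G>M\}]\le M^{1-q}\|G\|_{L_q}^q$ for both the empirical and population tail pieces, and optimisation $M=(n/\log p)^{1/q}\|G\|_{L_q}$, which correctly balances the two $M$-dependent terms to $(\log p/n)^{1-1/q}\|G\|_{L_q}$; the $q=\infty$ case without truncation is also handled correctly. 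Your version is self-contained and makes transparent where each of the two terms comes from, and you rightly flag the one real subtlety (controlling the centring $\PP[g_j\mathbf 1\{G>M\}]$ across the truncation). What the citation buys the paper is brevity, explicit constants, and applicability to non-identically-distributed summands; what your argument buys is independence from the external reference at the cost of unspecified constants in the Bernstein maximal bound. Both are valid proofs of the stated lemma.
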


    The proof of \Cref{lem:emp-process-term} utilizes a finite-sample maximal inequality established in \citet{kuchibhotla2022least} for high-dimensional estimation problems.
    When specified to particular target estimands, \Cref{lem:emp-process-term} suggests that $T_{{\rm E},j}$ in \eqref{eq:decomposition} can be uniformly controlled over $j=1,\ldots,p$, if one can derive the uniform $L_2$-norm bound and the $L_q$-norm bound of the envelope for the estimation error of the influence functions $g_j = \varphi_j(\bZ;\hat{\PP}) - \varphi_j(\bZ;\PP)$.
    In particular, if $g_j$'s are bounded, and $\log p \cdot (\max_{1\leq j\leq p} \|g_j\|_{L_2}^2 \vee n^{-1/2} ) = o(1)$, then we have that $\max_{1\leq j\leq p}T_{{\rm E},j} = \op(n^{-1/2})$, which is negliable after scaled by $\sqrt{n}$.
    This allows the number of outcomes $p$ to be potentially exponentially larger than the number of samples $n$. 
    It is important to note that similar bounds on the empirical process term can still be derived even when the nuisance functions are not trained on an independent sample, provided certain complexity measures of the function class $\cF_j$ that $g_j$ belongs to are properly bounded; see \Cref{rmk:donsker} in \Cref{app:subsec:emp} for more details.

    \begin{remark}[One-step estimator from estimating equations]
        Above, we construct the one-step estimator based on the influence function $\varphi_j$ of $\tau_j$ from von Mises expansion.
        One can also construct efficient estimators of pathwise differentiable functionals through estimating equations, which is related to the quantile estimand as we will discuss in \Cref{subsec:standarized-quantile}.
    \end{remark}
    
    When $T_{{\rm E},j}+T_{{\rm R},j}=\op(n^{-1/2})$, by central limit theorem we have that $\sqrt{n}(\hat{\tau}_j(\PP) - \tau_j(\PP)) \dto \cN(0,\sigma_j^2)$ where $\sigma_j^2 = \Var[\varphi_j(\bZ; \PP)]$. 
    To construct confidence intervals, one can use the sample variance $\hat{\sigma}_j^2 = \VV_n[\varphi_j(\bZ;\hat{\PP})]$ to consistently estimate the asymptotic variance.
    To derive the properties of test statistics and confidence intervals in high dimensions when $p\gg n$, it is necessary to establish strong control on the uniform convergence rate of the variance estimates over $j=1,\ldots,p$; see, for example, \citet[Proposition 2]{qiu2023unveiling} and \citet[Comment 2.2]{chernozhukov2013gaussian}.
    In this regard, the following lemma provides general conditions for bounding the uniform estimation error.

    \begin{lemma}[Uniform control of the variance estimates]\label{lem:var}
    Denote ${\varphi}_j = \varphi_j(\bZ; {\PP})$, $\hat{\varphi}_j = \varphi_j(\bZ; \hat{\PP})$, $\Phi = \max_{1\leq j\leq p}|\hat{\varphi}_j-\varphi_j|$, and $\Psi = \max_{1\leq j\leq p}|\varphi_j|$.
    Suppose the following conditions hold:
    \begin{enumerate}[label=(\arabic*)]
        \item Envelope: $\max_{1\leq j\leq p}|\hat{\varphi}_j+\varphi_j|\lesssim 1$ and $\|\Psi\|_{L_q} + \max_{k=1,2}\|\Phi^k\|_{L_q} \lesssim r_{1n}$ for some $q>1$,
        
        \item Estimation error:  $\max_{1\leq j\leq p}\|\hat{\varphi}_j-\varphi_j\|_{L_2} \lesssim r_{2n}$, $\max_{1\leq j\leq p} |\PP[\hat{\varphi}_j - {\varphi}_j]| \lesssim r_{3n}$,
    \end{enumerate}
     with probability tending to one.
    Then, it holds that 
        \[\max_{1\leq j\leq p} |\hat{\sigma}_j^2 - \sigma_j^2| \lesssim \Op\left( \left(\frac{\log p}{n}\right)^{1 - 1/q} r_{1n} + \left(\frac{\log p}{n}\right)^{1/2} r_{2n} + r_{3n} \right).\]
    \end{lemma}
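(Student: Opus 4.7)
The plan is to decompose the variance-estimation error into empirical-process pieces and bias pieces, apply \Cref{lem:emp-process-term} to the former, and use conditions~(1)--(2) to control the latter. Since $\PP\varphi_j=0$ by definition of the influence function, $\sigma_j^2=\PP\varphi_j^2$ and $\hat\sigma_j^2=\PP_n\hat\varphi_j^2-(\PP_n\hat\varphi_j)^2$. Using the algebraic identity $\hat\varphi_j^2-\varphi_j^2=(\hat\varphi_j-\varphi_j)(\hat\varphi_j+\varphi_j)$ and adding/subtracting $\PP\hat\varphi_j^2$, I would split
\[
\hat\sigma_j^2-\sigma_j^2 \;=\; \underbrace{(\PP_n-\PP)(\hat\varphi_j^2-\varphi_j^2)}_{(\mathrm{I})} \;+\; \underbrace{(\PP_n-\PP)\varphi_j^2}_{(\mathrm{II})} \;+\; \underbrace{\PP(\hat\varphi_j^2-\varphi_j^2)}_{(\mathrm{III})} \;-\; \underbrace{(\PP_n\hat\varphi_j)^2}_{(\mathrm{IV})},
\]
and bound $\max_j$ of each piece separately.

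For (I), I would invoke \Cref{lem:emp-process-term} with $g_j=\hat\varphi_j^2-\varphi_j^2$: the pointwise bound $|\hat\varphi_j+\varphi_j|\lesssim 1$ from condition~(1) yields the envelope $|g_j|\lesssim\Phi$ with $\|G\|_{L_q}\lesssim r_{1n}$, while $\|g_j\|_{L_2}\lesssim\|\hat\varphi_j-\varphi_j\|_{L_2}\lesssim r_{2n}$, producing a contribution of order $(\log p/n)^{1/2}r_{2n}+(\log p/n)^{1-1/q}r_{1n}$. For (II), I would apply the same lemma with $g_j=\varphi_j^2$; the triangle inequality $|\varphi_j|\leq \tfrac{1}{2}|\hat\varphi_j+\varphi_j|+\tfrac{1}{2}\Phi$ implied by condition~(1) gives $\|\Psi^2\|_{L_q}\lesssim 1+\|\Phi^2\|_{L_q}\lesssim r_{1n}$, yielding an additional $(\log p/n)^{1-1/q}r_{1n}$ summand at leading order. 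For (III), I would expand $\PP(\hat\varphi_j^2-\varphi_j^2)=2\PP[\varphi_j(\hat\varphi_j-\varphi_j)]+\|\hat\varphi_j-\varphi_j\|_{L_2}^2$, bounding the second summand by $r_{2n}^2$ and using $\PP\varphi_j=0$ together with the integrated-bias condition $|\PP(\hat\varphi_j-\varphi_j)|\lesssim r_{3n}$ to control the first summand. Finally, for (IV), I would decompose $\PP_n\hat\varphi_j=(\PP_n-\PP)\varphi_j+(\PP_n-\PP)(\hat\varphi_j-\varphi_j)+\PP(\hat\varphi_j-\varphi_j)$, bound each summand uniformly (the first two via \Cref{lem:emp-process-term} with envelopes $\Psi$ and $\Phi$, and the third by $r_{3n}$), and observe that squaring produces only strictly higher-order terms.

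The main obstacle is item~(III): a direct Cauchy--Schwarz bound on $\PP[\varphi_j(\hat\varphi_j-\varphi_j)]$ yields only $\sigma_j\cdot r_{2n}$, whereas matching the stated bound requires the sharper $r_{3n}$. The key is to exploit the mean-zero property $\PP\varphi_j=0$ so that only the \emph{integrated}-bias quantity $|\PP(\hat\varphi_j-\varphi_j)|$ governed by condition~(2) — rather than the $L_2$ rate $r_{2n}$ — drives this piece, with any $r_{2n}^2$-type remainders absorbed into the $(\log p/n)^{1/2}r_{2n}$ summand. Item~(II) also requires care in bounding the envelope $\|\Psi^2\|_{L_q}$ via the algebraic relation in condition~(1), but this is routine once the decomposition is in place; summing the four contributions then delivers the claimed rate.
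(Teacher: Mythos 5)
Your decomposition and your treatment of terms (I), (II), and (IV) are sound and close in spirit to the paper's argument (the paper groups things slightly differently, writing $T_{1j}=\PP_n[\hat\varphi_j^2-\varphi_j^2]$, $T_{2j}=(\PP_n-\PP)[\varphi_j^2]$, and a linear rather than squared centering term, but the empirical-process pieces are handled exactly as you propose, via \Cref{lem:emp-process-term}). The problem is your term (III), which you yourself flag as the main obstacle: the resolution you propose does not work. The quantity $\PP[\varphi_j(\hat\varphi_j-\varphi_j)]$ is a cross-moment, and the mean-zero property $\PP\varphi_j=0$ gives you no leverage on it unless $\hat\varphi_j-\varphi_j$ were (conditionally) constant; it cannot be reduced to the integrated bias $|\PP(\hat\varphi_j-\varphi_j)|$. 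A concrete check: take $\hat\varphi_j=(1+\epsilon)\varphi_j$, so that $\PP[\hat\varphi_j-\varphi_j]=0$ (hence $r_{3n}$ may be taken arbitrarily small) while $\PP[\varphi_j(\hat\varphi_j-\varphi_j)]=\epsilon\,\sigma_j^2$, which is exactly the Cauchy--Schwarz order $\sigma_j\|\hat\varphi_j-\varphi_j\|_{L_2}$ and produces a contribution of order $r_{2n}$, not $(\log p/n)^{1/2}r_{2n}+r_{3n}$. So the step ``exploit $\PP\varphi_j=0$ so that only $|\PP(\hat\varphi_j-\varphi_j)|$ drives this piece'' is a genuine gap, not a technicality.

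The fix, which is what the paper actually does, is to never expand the square around $\varphi_j$: bound the bias term wholesale by the envelope condition,
\[
\bigl|\PP[\hat\varphi_j^2-\varphi_j^2]\bigr|=\bigl|\PP[(\hat\varphi_j-\varphi_j)(\hat\varphi_j+\varphi_j)]\bigr|\lesssim \PP\bigl[|\hat\varphi_j-\varphi_j|\bigr],
\]
and let this $L_1$ error be the $r_{3n}$ term. (Note this reveals that the lemma really requires the second estimation-error condition to be read as an $L_1$-norm bound $\max_j\PP[|\hat\varphi_j-\varphi_j|]\lesssim r_{3n}$ rather than a bound on $|\PP[\hat\varphi_j-\varphi_j]|$; this is how the lemma is invoked downstream, where $r_{3n}$ is taken to be the $L_2$ rate. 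Your example above shows the absolute-expectation reading would make the stated conclusion false.) With that substitution in (III), and keeping your bounds for (I), (II), and (IV), the argument goes through; your observation that controlling the envelope of $\varphi_j^2$ requires passing through $\Psi\lesssim 1+\Phi$ is a reasonable patch for a genuine looseness in the stated hypotheses.
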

    
    Note that $r_{1n}$ and $r_{2n}$ are allowed to potentially diverge.
    We will utilize \Cref{lem:var} for the purpose of multiple testing, as demonstrated in \Cref{sec:inference}. 
    Typically, to accommodate an exponentially large number of outcomes $p$ relative to $n$ while maintaining valid statistical inference, it suffices to ensure that variance estimates are consistent at any polynomial rate of the sample size $n$;  specifically, $\max_{1\leq j\leq p} |\hat{\sigma}_j^2 - \sigma_j^2| = \op(n^{-\alpha})$ for some constant $\alpha>0$.

\section{Subject-level causal inference with multiple outcomes}\label{sec:setup}

    We consider an increasingly popular study design where scRNA-seq data are collected from multiple individuals and the question of interest is to find genes that are causally differentially expressed between two groups of individuals, based on repeated single-cell measurements.

    \subsection{Causal inference with multiple derived outcomes}\label{subsec:general}
    
    Suppose a subject can be either in the case or control group, indicated by a binary random variable $A\in\{0,1\}$ and we sequence the expressions $\Xb=[\bX_{1},\ldots,\bX_{m}]^{\top}\in\RR^{m\times d}$ of $d$ genes in $m$ cells\footnote{For notational simplicity, we treat the number of cell $m$ as fixed across subjects, though the method also applies when the number of cell $m_i$ varies for subjects $i=1,\ldots,n$.}, along with subject-level covariates $\bW\in\RR^{q}$.
    Let $\Xb(a)$ denote the \emph{potential response} of gene expressions.
    To characterize the complex biological processes, suppose $\bS(a) \in\RR^\ell$ is the latent potential state after receiving treatment $A=a$, which fully captures the effects of the treatment on the individual.
    We assume $\bX_1(a),\ldots,\bX_m(a)$ are conditionally independent and identically distributed given $\bS(a)$ and $\bW$\footnote{Technically, the potential response can be denoted as $\bX_m(\bS(a))$; however, because $\bS(a)$ is unobservable and the variable to intervene is $A$, we use a simplified notation $\bX_m(a)$ to denote the potential response.}.
    Marginally, however, they can be highly dependent because of repeated measurements from the same individual.
    As an example of genomics data, $\bS$ can be the chromatin accessibility that governs the translation and expression of genes, while $\bX_m$ is the resulting expression level of those genes.

    Suppose the collection of treatment assignment, covariates, subject level parameters, and potential responses $(A,\bW,\bS(0),\bS(1),\Xb(0),\Xb(1)) $ is from some super-population $\cP$.
    We require the consistency assumption on the observed response.
    \begin{assumption}[Consistency]\label{asm:consistency}
        The observed response is given by $\Xb=A \Xb(1) + (1-A)\Xb(0)$.
    \end{assumption}

    When comparing gene expressions between two groups of individuals, the $p$-dimensional subject-level parameter of interest, is the \emph{potential outcome} $\bY(a)\in\RR^p$, a functional that maps the conditional distribution of $\bX_1(a)$ given $\bS(a)$ and $\bW$ to $\RR^p$:    
    \begin{align}
        \bY(a) =  \EE\left[f (\bX_1(a)) \,\middle|\, \bS(a),\bW\right],   \label{eq:Ya-expectation}
    \end{align}
    for some prespecified function $f:\RR^{d}\rightarrow \RR^{p}$.
    The choice of $f$ should align with the user's research goals and the specific aspects of the data they intend to capture.
    For example, when $f$ is the identity map and $p=d$, the potential outcome $\bY(a)$ represents the conditional mean; 
    when $f(\bX) = \bX\bX^{\top} - \EE[\bX\bX^{\top}\mid \bS(a),\bW]$ and $p=d$, it represents the intrasubject covariance matrix.
    When considering conditional means among the potential responses $X_{1j}(a)$'s, it can also be nodewise regression coefficients as considered in \citet{qiu2023unveiling}.
    Intuitively, $\bY(a)$ is an individual / within-group characteristic that depends on the conditional distribution of $\bX_{1}(a)$ given $\bS(a), \bW$.
    From \Cref{asm:consistency}, we also have $\bY = A\bY(1)+(1-A)\bY(0)$.
    Compared to the classical causal inference setting, the subject-level outcome $\bY$ is not observed for each subject, while only the repeated measurements of gene expressions $\Xb$ from multiple cells are available and can be used to construct a derived outcome $\tilde{\bY}$.
    For a given $f$, we consider a statistic $\tilde{\bY}:= g(\Xb)$ for some function $g:(\bX_1,\ldots,\bX_m)\mapsto \tilde{\bY}$.
    There can be different choices of $g$ to estimate $\bY(a)$ by $\tilde{\bY}(a)$.
    For instance, if $f$ is a linear function for the potential outcome $\bY(a)$ in \eqref{eq:Ya-expectation}, then $g$ can be a simple sample average as a natural choice of the derived outcome; alternatively, $g$ can also be the median-of-means estimator as the derived outcomes.

    Under the derived outcomes framework, \citet{qiu2023unveiling} studied the IPW estimator for ATE:
    \begin{align}
        \qquad\qquad\tau_{j}^{\ATE} &= \EE[Y_{j}(1) - Y_{j}(0)] ,\qquad\qquad j=1,\ldots,p.\label{eq:ATE}
    \end{align}
    By focusing on the expected potential outcomes, we next describe the identification condition and semiparametric inferential results based on derived outcomes $\tilde{\bY}$.

    \paragraph{Identification.}    
    We require two extra classic causal assumptions for observational studies.

    \begin{assumption}[Positivity]\label{asm:positivity}
        The propensity score $\pi_a(\bW):=\PP(A=a \mid \bW) \in (0,1)$.
    \end{assumption}

    \begin{assumption}[No unmeasured confounders]\label{asm:nuc}
        $A\indep \Xb(a)\mid \bW$, for all $a\in\{0,1\}$.
    \end{assumption}

    The above assumptions on the propensity score and the potential responses are standard for observational studies in the causal inference literature \citep{imbens2015causal,kennedy2022semiparametric}.    
    \Cref{asm:positivity} suggests that both treated and control units of each subject can be found for any value of the covariate with a positive probability.
    \Cref{asm:nuc} ensures that the treatment assignment is fully determined by the observed covariate $\bW$.
    These assumptions are required to estimate functionals of $\Xb(a)$ with observed variables $(A,\bW,\Xb)$.
    
    Let $\bZ = (A,\bW,\Xb,\bY)$ denote the tuple of observed random variables and unobserved outcomes $\bY$.
    Because the outcome $\bY$ is not observed for each subject, we are interested in constructing a proxy $\tilde{\bY}= g(\Xb)$ of $\bY$ from repeated measurements $\bX_1,\ldots,\bX_m$ from the same subject. 
    Analogously, we denote $\tilde{\bY}(a) := g(\Xb(a))$ for the potential outcomes and quantify the bias as $\bDelta_m(a) := \EE[\tilde{\bY}(a) \mid \bW,\bS(a)] - \bY(a)$.
    Below, we introduce a notion of asymptotic unbiased estimate in \Cref{def:unbias}, where the expected bias is negligible uniformly over multiple outcomes.
    
    \begin{definition}[Asymptotic unbiased estimate]\label{def:unbias}
        For $a\in\{0,1\}$, the derived outcome $\tilde{\bY}(a)$ is asymptotic unbiased to $\bY(a)$ if the bias tends to zero: $\max_{1\leq j \leq p}|\EE[\Delta_{mj}(a)]| = o(1)$ as $m\rightarrow\infty$.
    \end{definition}

    Note that when $\tilde{\bY}(a)$ is marginally unbiased, i.e., $\EE[\bY(a)] = \EE[\tilde{\bY}(a)]$, it also implies that $\tilde{\bY}(a)$ is asymptotically unbiased to $\bY(a)$.
    Therefore, our framework also includes the common setting where all the outcomes $\bY(a) = \tilde{\bY}(a) = \bX_1(a)$ (with $m=1$) are observed.
    When $\tilde{\bY}(a)$ is an asymptotic unbiased estimate of $\bY(a)$, Lemma 1 from \citet{qiu2023unveiling} suggests that the counterfactual of unobserved outcomes can be identified asymptotically, as detailed in \Cref{prop:identification}.
    \begin{proposition}[Identification of linear functionals]\label{prop:identification}
        Under \Crefrange{asm:consistency}{asm:nuc}, if $\tilde{\bY}(a)$ is asymptotically unbiased to $\bY(a)$, then $\EE[\bY(a)]$ can be identified by $\EE[\EE[\tilde{\bY}\mid \bW,A=a]]$ as $m\rightarrow\infty$.
    \end{proposition}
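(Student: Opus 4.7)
The plan is to chain together the three causal assumptions with the asymptotic unbiasedness hypothesis to reduce the observable quantity $\EE[\EE[\tilde{\bY}\mid \bW, A=a]]$ to $\EE[\bY(a)]$ plus an expected bias term that vanishes in $m$. The proof is essentially a careful bookkeeping argument, coordinate by coordinate in $[p]$.

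First I would handle the observed-to-counterfactual step. By \Cref{asm:consistency}, $\Xb = A\Xb(1) + (1-A)\Xb(0)$, so $\tilde{\bY} = g(\Xb) = A\,g(\Xb(1)) + (1-A)\,g(\Xb(0)) = A\tilde{\bY}(1) + (1-A)\tilde{\bY}(0)$. Restricting to the event $\{A=a\}$ gives $\EE[\tilde{\bY}\mid \bW, A=a] = \EE[\tilde{\bY}(a)\mid \bW, A=a]$, which is well-defined by \Cref{asm:positivity}. Then \Cref{asm:nuc}, combined with the fact that $\tilde{\bY}(a) = g(\Xb(a))$ is a measurable function of $\Xb(a)$, yields $A \indep \tilde{\bY}(a) \mid \bW$, and hence $\EE[\tilde{\bY}\mid \bW, A=a] = \EE[\tilde{\bY}(a) \mid \bW]$.

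Next I would bring in $\bY(a)$ via the tower property, conditioning on $\bS(a)$. Using the definition of $\bDelta_m(a)$,
\begin{align*}
    \EE[\tilde{\bY}(a)\mid \bW]
    &= \EE\bigl[\EE[\tilde{\bY}(a)\mid \bW, \bS(a)]\,\big|\, \bW\bigr]
    = \EE[\bY(a) + \bDelta_m(a) \mid \bW].
\end{align*}
Taking an outer expectation over $\bW$ gives the identity $\EE[\EE[\tilde{\bY}\mid \bW, A=a]] = \EE[\bY(a)] + \EE[\bDelta_m(a)]$, which is the key algebraic fact.

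Finally I would pass to the limit componentwise. By \Cref{def:unbias}, $\max_{1\leq j \leq p}|\EE[\Delta_{mj}(a)]| = o(1)$ as $m\to\infty$, so in particular $\EE[\Delta_{mj}(a)] \to 0$ for each $j$, giving $\EE[\EE[\tilde{Y}_j\mid \bW, A=a]] \to \EE[Y_j(a)]$ as claimed. The main conceptual point to get right is distinguishing $\bS(a)$ (unobserved, mediates the treatment effect) from $\bW$ (observed, adjusts for confounding): the NUC assumption is invoked at the $\Xb(a)$ level rather than the $\bS(a)$ level, and $\bS(a)$ enters only through the inner conditioning in the tower step. There is no genuinely hard step here; the work is making sure the reduction $A\indep \tilde{\bY}(a)\mid \bW$ is justified and that the identification holds only in the limit $m\to\infty$, with the finite-$m$ bias captured cleanly by $\EE[\bDelta_m(a)]$.
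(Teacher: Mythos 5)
Your proof is correct and follows essentially the same route as the paper's: consistency and positivity give $\EE[\tilde{\bY}\mid \bW, A=a]=\EE[\tilde{\bY}(a)\mid \bW, A=a]$, the no-unmeasured-confounders assumption removes the conditioning on $A=a$, the tower property over $\bS(a)$ introduces $\bY(a)$ plus the bias $\bDelta_m(a)$, and asymptotic unbiasedness kills the bias uniformly in $j$. The only addition is your explicit remark that $A\indep\Xb(a)\mid\bW$ transfers to $\tilde{\bY}(a)=g(\Xb(a))$ by measurability, a detail the paper leaves implicit.
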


    \paragraph{Semiparametric inference.}
    When the target causal estimands are the expectation of the potential outcomes $\tau_j = \EE[Y_j(a)]$ for $j=1,\ldots,p$, one can adopt results from \Cref{sec:semi-inf-multi-outcomes} to establish the asymptotic normality of certain estimators under proper assumptions on the convergence rate of the nuisance function estimates.
    However, because $\bY$ is unobservable, we are not able to directly estimate its influence function and hence its influence-function-based one-step estimator \eqref{eq:one-step-est}.
    Instead, we can rely on the influence function of $\tilde{\tau}_j=\EE[\tilde{Y}_j(a)]$: $\tilde{\varphi}_j(\bZ; \PP) = \ind\{A=a\} \pi_a(\bW)^{-1} (\tilde{Y}_j - \mu_{aj}(\bW) ) + \mu_{aj}(\bW) - \tilde{\tau}_j(\PP)$, where $\pi_a(\bW) = \PP(A=a\mid\bW)$ and $\mu_{aj}(\bW) = \EE[\tilde{Y}_j\mid A=a,\bW]$ for $j=1,\ldots,p$, are the propensity score and regression functions, respectively.
    This, in turn, yields an analog of the one-step estimator \eqref{eq:one-step-est}: 
    \begin{align*}
        \hat{\tau}_j(\PP) &:= \tilde{\tau}_j(\hat{\PP}) + \PP_n\{\tilde{\varphi}(\bZ;\hat{\PP})\},
    \end{align*}
    which further implies the decomposition of the estimation error for the causal estimand $\tau_j$:
    \begin{align}
        \hat{\tau}_j(\PP) - \tau_j(\PP) = T_{{\rm S},j} + T_{{\rm E},j} + T_{{\rm R},j} + \EE[\Delta_{mj}], \label{eq:decomp-tilde}
    \end{align}
    where the sample average term $T_{{\rm S},j}$, the empirical process term $T_{{\rm E},j}$ and the reminder term $T_{{\rm R},j}$ are as in \eqref{eq:decomposition} with $\varphi_j$ replaced by $\tilde{\varphi}_j$.
    The asymptotic variance ${\sigma}_j^2 = \VV[\tilde{\varphi}_j(\bZ;\hat{\PP})]$ can be estimated by the empirical variance $\hat{\sigma}_j^2=\VV_n[\tilde{\varphi}_j(\bZ;\hat{\PP})]$ analogously.
    However, the application of \Cref{lem:emp-process-term,lem:var} would require the verification of conditions for the perturbed influenced function $\tilde{\varphi}_j$ instead of ${\varphi}_j$.
    Similar ideas apply to the one-step and doubly robust estimators of other target estimands.

    \subsection{Beyond average treatment effects}
    For single-cell gene expressions exhibiting different scales and skew-distributed, simply comparing the average treatment effects \eqref{eq:ATE} may not be reliable.
    One approach to improve on the naive estimand is to consider standardized average treatment effects (STE):
    \begin{align}
        \tau_{j}^{\STE} &= \frac{\EE[Y_{j}(1) - Y_{j}(0)]}{\sqrt{\Var[Y_{j}(0)]}} ,\qquad\qquad j=1,\ldots,p\label{eq:STE}
    \end{align}
    which allows for consistent and comparative analysis across different scales and variances, enhancing the interpretability and comparability of treatment effects in diverse and complex datasets \citep{kennedy2019estimating}.
    Another approach is to consider quantile effects (QTE):
    \begin{align}
        \tau_{j}^{\QTE_\varrho} &= Q_{\varrho}[Y_{j}(1)] - Q_{\varrho}[Y_{j}(0)],\qquad j=1,\ldots,p,\label{eq:QTE}
    \end{align}
    where $Q_{\varrho}[U]$ denote the $\varrho$-quantile of random variable $U$.
    In particular, when $\varrho=0.5$, the $\varrho$-quantile equals the median $Q_{\varrho}(U)= \Median(U)$, and we reveal one of the commonly used robust estimand $\tau_{j}^{\QTE} = \Median[Y_{j}(1)] - \Median[Y_{j}(0)]$ for location-shift hypotheses.
    QTE may be more robust and less affected by the outliers of gene expressions \citep{kallus2019localized,chakrabortty2022general}.

    Note that the identification condition and semiparametric inferential results in \Cref{subsec:general} do not apply directly to target estimands other than ATE.
    Therefore, efforts are required to generalize the results to include STE and QTE for multiple derived outcomes.
    This demonstrates the utility and validity of our semiparametric inferential framework on one-step estimators defined through the von Mises expansion and the formulation of estimating equations, respectively, as investigated next.

\section{Doubly robust estimation}\label{sec:standarized-effects}    
    In this section, we analyze the DR estimators for STE and QTE, which exemplify the application of general theoretical results in \Cref{sec:semi-inf-multi-outcomes} to specific target estimands.

    \subsection{Standarized average effects}\label{subsec:standarized-STE}

    Recall the standardized average treatment effects $\tau_j^{\STE}$ defined in \eqref{eq:STE}, for $j=1,\ldots,p$.
    The following lemma provides the identified forms of STE based on observational data.

    \begin{lemma}[Identification of standarized average effects]\label{lem:identification-STE}
        Under \Crefrange{asm:consistency}{asm:nuc}, if $\Var[Y_{j}(0)]>0$ and $\tilde{Y}_j(a)^k$ is asymptotically unbiased to $Y_j(a)^k$ for $k=1,2$ and $a=0,1$ such that $k+a\leq 2$, i.e., the bias of the derived outcomes $\Delta_{mkj}(a) := \EE[\tilde{Y}_j(a)^k \mid \bW,\bS(a)] - Y_j(a)^k$ satisfies that $\delta_m :=\max_{k,a}\max_{1\leq j\leq p}|\EE[\Delta_{mkj}(a)]|= o(1)$, then as $m\rightarrow\infty$, the standardized average treatment effect $\tau_j^{\STE}$ can be identified by $\tau_j^{\STE} = \tilde{\tau}_j^{\STE} + \op(1)$ where
        \begin{align}
            \tilde{\tau}_j^{\STE} := \frac{\EE[\EE(\tilde{Y}_j \mid A=1, \bW)] - \EE[\EE(\tilde{Y}_j \mid A=0, \bW)]}{\sqrt{ \EE[\EE(\tilde{Y}_j^2 \mid A=0, \bW)] - \EE[\EE(\tilde{Y}_j \mid A=0, \bW)]^2 }}. \label{eq:STE-ttau}
        \end{align}
    \end{lemma}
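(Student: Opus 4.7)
The plan is to apply Proposition~1 (identification of linear functionals) three times -- once for each counterfactual quantity that enters $\tau_j^{\STE}$ -- and then pass the resulting identifications through the square-root and ratio maps by continuity.

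The first two applications identify the counterfactual means. The lemma's hypothesis for $k=1$ and $a\in\{0,1\}$ is precisely that $\tilde Y_j(a)$ is asymptotically unbiased for $Y_j(a)$ in the sense of Definition~1. Proposition~1 then yields
\begin{align*}
\EE[Y_j(a)] \;=\; \EE\!\left[\EE[\tilde Y_j \mid A=a,\bW]\right] + o(1), \qquad a\in\{0,1\},
\end{align*}
as $m\to\infty$, with the error bounded by $|\EE[\Delta_{m1j}(a)]|\leq\delta_m$. The third application identifies the counterfactual second moment: viewing $\tilde Y_j(0)^2$ as a derived outcome for the target $Y_j(0)^2$, the lemma's hypothesis for $k=2,\ a=0$ is exactly the asymptotic unbiasedness condition required by Proposition~1, so
\begin{align*}
\EE[Y_j(0)^2] \;=\; \EE\!\left[\EE[\tilde Y_j^{\,2} \mid A=0,\bW]\right] + o(1).
\end{align*}
Combining these with $\Var[Y_j(0)] = \EE[Y_j(0)^2] - \EE[Y_j(0)]^2$ and using that $u\mapsto u^2$ is locally Lipschitz near the finite constant $\EE[Y_j(0)]$ produces
\begin{align*}
\Var[Y_j(0)] \;=\; \EE\!\left[\EE[\tilde Y_j^{\,2} \mid A=0,\bW]\right] - \EE\!\left[\EE[\tilde Y_j \mid A=0,\bW]\right]^{2} + o(1).
\end{align*}

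Finally, I would assemble these into the ratio defining $\tilde{\tau}_j^{\STE}$. Write $N=\EE[Y_j(1)]-\EE[Y_j(0)]$ and $D^2=\Var[Y_j(0)]$ for the true numerator and squared denominator, and $\tilde N,\tilde D^2$ for their observed-data counterparts in \eqref{eq:STE-ttau}. The preceding steps give $\tilde N = N + o(1)$ and $\tilde D^{\,2} = D^2 + o(1)$. Since $D^2>0$ is a fixed constant, for all $m$ large enough $\tilde D^{\,2}\geq D^2/2$, so continuity of the square root at $D^2$ gives $\tilde D = D + o(1)$. The algebraic identity
\begin{align*}
\frac{\tilde N}{\tilde D} - \frac{N}{D} \;=\; \frac{D(\tilde N - N) - N(\tilde D - D)}{D\,\tilde D}
\end{align*}
then delivers $\tilde{\tau}_j^{\STE} - \tau_j^{\STE} = o(1) = \op(1)$, which is the claim.

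The only genuine obstacle is the non-linearity: ensuring that the three deterministic $o(1)$ biases surviving from Proposition~1 propagate through both squaring (in forming $\Var$) and the square-root/ratio (in forming $\tilde\tau_j^{\STE}$). This is resolved entirely by the standing assumption $\Var[Y_j(0)]>0$, which keeps the denominator bounded below by $\sqrt{D^2/2}$ for all sufficiently large $m$; no additional moment or tail conditions are needed, since the statement is pointwise in $j$ and purely at the population level.
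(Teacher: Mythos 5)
Your proposal is correct and follows essentially the same route as the paper: apply \Cref{prop:identification} to the three counterfactual moments $\EE[Y_j(1)]$, $\EE[Y_j(0)]$, and $\EE[Y_j(0)^2]$, then propagate the $o(1)$ biases through the ratio using $\Var[Y_j(0)]>0$ to keep the denominator bounded away from zero. Your write-up merely spells out the continuity/Lipschitz steps that the paper leaves implicit.
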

    As suggested by \Cref{lem:identification-STE}, estimating STE requires estimating the conditional expectation of $\tilde{Y}_j^k$ given $A=a$ and $\bW$.
    For this purpose, we consider the DR estimator of treatment effect $\EE[\tilde{Y}_{j}(1)]$:
    \begin{align*}
        \tilde{\phi}_{akj}(\bZ; \pi_a,\bmu_a) &:= \frac{\ind\{A=a\}}{\pi_a(\bW)} (\tilde{Y}_j^k - \mu_{akj}(\bW) ) + \mu_{akj}(\bW),
    \end{align*}    
    where $\bmu_a:\RR^d\rightarrow\RR^{2\times p}$ is the mean regression function with entry $\mu_{akj}(\bW)=\EE[\tilde{Y}_j^k\mid \bW,A=a]$ and $\pi_a(\bW) = \PP(A=a \mid \bW)$ is the propensity score function.
    By plugging in the DR estimators for individual counterfactual expectations consisting in \eqref{eq:STE-ttau}, we obtain a natural estimator for the STE:
    \begin{align}
        \hat{\tau}_j^{\STE} &= \frac{\PP_n\{\tilde{\phi}_{11j}(\bZ; \hat{\pi}_1,\hat{\bmu}_1)  - \tilde{\phi}_{01j}(\bZ; \hat{\pi}_0,\hat{\bmu}_0) \}}{ \sqrt{\PP_n\{\tilde{\phi}_{02j}(\bZ; \hat{\pi}_0,\hat{\bmu}_0) \} -\PP_n\{\tilde{\phi}_{01j}(\bZ; \hat{\pi}_0,\hat{\bmu}_0) \}^2 }},\label{eq:STE-est}
    \end{align}
    which is also the DR estimator of $\tilde{\tau}_j^{\STE}$.
    The following theorem shows that under mild conditions, the above estimator $\hat{\tau}_j^{\STE}$ is doubly robust for estimating ${\tau}_j^{\STE}$, with the remainder terms uniformly controlled over all outcomes.

    \begin{theorem}[Linear expansion of STE]\label{thm:lin-STE}
        Under \Crefrange{asm:consistency}{asm:nuc} and the identification condition in \Cref{lem:identification-STE}, consider the one-step estimator \eqref{eq:STE-est}, where $\PP_n$ is the empirical measure over $\cD=\{\bZ_1,\ldots,\bZ_n\}$ and $(\hat{\pi}_a,\hat{\bmu}_a)$ is an estimate of $({\pi}_a,{\bmu}_a)$ for $a=0,1$ from samples independent of $\cD$.
        Suppose the following hold for $k=1,2$ and $a=0,1$ with probability tending to one:
        \begin{enumerate}[label=(\arabic*)]
            \item Boundedness: There exists $c,C>0$ and $\epsilon\in(0,1)$ such that $\max\{|Y_j|,|\tilde{Y}_j|\}<C$, $\max\{\|\mu_{akj}\|_{L_{\infty}}$, $\|\hat{\mu}_{akj}\|_{L_{\infty}}\}<C$, $\Var[Y_{j}(0)]>c$ for all $j\in[p]$, and $\pi_a, \hat{\pi}_a \in[\epsilon,1-\epsilon]$.

            \item\label{cond:nuisance} Nuisance: The rates of nuisance estimates are $\max_{j\in[p]}\|\hat{\mu}_{akj}-\mu_{akj}\|_{L_{2}} = \cO(n^{-\alpha})$ and $\|\hat{\pi}_a-\pi_a\|_{L_2} = \cO(n^{-\beta})$ for some $\alpha,\beta\in(0,1/2)$ such that $\alpha+\beta>1/2$.
            
        \end{enumerate}
        Then as $m,n,p\rightarrow\infty$, it holds that $\hat{\tau}_j^{\STE} - \tau_j^{\STE} = \PP_n\{\tilde\varphi_{j}^{\STE}\} + \varepsilon_j$, $j=1,\ldots,p$,
        where the residual terms satisfy $\max_{j\in[p]}|\varepsilon_j| = \Op(n^{-(\alpha+\beta)} + \vartheta^{\STE}\sqrt{(\log p)/n}  + (\log p) /n + \delta_m)$ with $\vartheta^{\STE}:=n^{-(\alpha \wedge \beta)}$ and the influence function is given by
        \begin{align}
            \tilde\varphi_{j}^{\STE} 
            &= 
            \frac{\tilde{\phi}_{11j}-\tilde{\phi}_{01j}}{\sqrt{\Var[\tilde{Y}_{j}(0)]}} -\tilde{\tau}_j^{\STE}\left[\frac{\tilde{\phi}_{02j}+\mathbb{E}[\tilde{Y}_{j}(0)^2] - 2 \mathbb{E}[\tilde{Y}_{j}(0)] \tilde{\phi}_{01j}}{2 \Var[\tilde{Y}_{j}(0)]}\right]. \label{eq:varphi-STE}
        \end{align}
    \end{theorem}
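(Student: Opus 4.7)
}
The plan is to decompose the error $\hat{\tau}_j^{\STE}-\tau_j^{\STE}$ into (i) the identification error $\tilde{\tau}_j^{\STE}-\tau_j^{\STE}$, which \Cref{lem:identification-STE} controls by $\cO(\delta_m)$ uniformly over $j$, and (ii) the estimation error $\hat{\tau}_j^{\STE}-\tilde{\tau}_j^{\STE}$, to which the bulk of the argument is devoted. Writing $\hat{\nu}_{akj}=\PP_n\{\tilde{\phi}_{akj}(\hat{\pi}_a,\hat{\bmu}_a)\}$ and $\nu_{akj}=\EE[\tilde{Y}_j(a)^k]$, I would first establish the standard doubly-robust decomposition
\[
\hat{\nu}_{akj}-\nu_{akj}=\underbrace{(\PP_n-\PP)\tilde{\phi}_{akj}(\pi_a,\bmu_a)}_{\xi_{akj}} + T_{{\rm E},akj} + T_{{\rm R},akj},
\]
where the remainder term reduces, via iterated expectations and the boundedness of $\hat{\pi}_a$ away from $0$ and $1$, to $T_{{\rm R},akj}=\PP[(\hat{\pi}_a-\pi_a)(\hat{\mu}_{akj}-\mu_{akj})/\hat{\pi}_a]$, which Cauchy--Schwarz bounds by $\cO_P(n^{-(\alpha+\beta)})$ uniformly in $j$ by condition (2).

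Next, I would apply \Cref{lem:emp-process-term} to bound the empirical process term $T_{{\rm E},akj}$ uniformly over $j$. The key input is that $g_j = \tilde{\phi}_{akj}(\hat{\pi}_a,\hat{\bmu}_a) - \tilde{\phi}_{akj}(\pi_a,\bmu_a)$ admits $\|g_j\|_{L_2}\lesssim\|\hat{\mu}_{akj}-\mu_{akj}\|_{L_2}+\|\hat{\pi}_a-\pi_a\|_{L_2}\lesssim \vartheta^{\STE}$ (by algebraic manipulation using boundedness of $\tilde{Y}_j$, $\mu$, $\hat{\mu}$ and the propensity bounds) and a uniformly bounded envelope, so taking $q=\infty$ in \Cref{lem:emp-process-term} gives $\max_j|T_{{\rm E},akj}|=\Op(\vartheta^{\STE}\sqrt{(\log p)/n}+(\log p)/n)$. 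An analogous application (with $g_j=\tilde{\phi}_{akj}(\pi_a,\bmu_a)-\nu_{akj}$) yields $\max_j|\xi_{akj}|=\Op(\sqrt{(\log p)/n})$.

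The third step is a first-order delta-method expansion of the ratio $\hat{\tau}_j^{\STE}=\hat{N}_j/\sqrt{\hat{D}_j}$ with $\hat{N}_j=\hat{\nu}_{11j}-\hat{\nu}_{01j}$ and $\hat{D}_j=\hat{\nu}_{02j}-\hat{\nu}_{01j}^2$, around the population analogs $(\tilde{N}_j,\tilde{D}_j)$. Since $\tilde{D}_j=\Var[\tilde{Y}_j(0)]$ is bounded away from zero, the first-order Taylor terms give
\[
\hat{\tau}_j^{\STE}-\tilde{\tau}_j^{\STE}=\frac{\hat{N}_j-\tilde{N}_j}{\sqrt{\tilde{D}_j}}-\frac{\tilde{\tau}_j^{\STE}}{2\tilde{D}_j}(\hat{D}_j-\tilde{D}_j)+\mathrm{R}_j,
\]
where the quadratic Taylor remainder $\mathrm{R}_j$ is $\Op((\Delta N_j)^2 + (\Delta D_j)^2 + |\Delta N_j \Delta D_j|)=\Op((\log p)/n)$ uniformly in $j$. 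Substituting the DR expansions for $\hat{\nu}_{akj}$, using $a^2-b^2=(a+b)(a-b)$ to handle the quadratic term $\hat{\nu}_{01j}^2-\nu_{01j}^2=2\nu_{01j}(\xi_{01j}+r_{01j})+(\xi_{01j}+r_{01j})^2$, and collecting the $\xi$-terms (the leading order) separately from the $r$-terms and squares, the calculation rearranges to exactly $\PP_n\{\tilde\varphi_j^{\STE}\}$ plus a residual of order $n^{-(\alpha+\beta)}+\vartheta^{\STE}\sqrt{(\log p)/n}+(\log p)/n$, by use of the identities $\nu_{02j}-\nu_{01j}^2=\tilde{D}_j$ and $(\nu_{11j}-\nu_{01j})/\sqrt{\tilde{D}_j}=\tilde{\tau}_j^{\STE}$.

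The main obstacle will be organizing the algebra in the final step so that all the ``mean-shift'' pieces arising from passing between the centered $\xi$-terms and the uncentered $\tilde{\phi}$-terms in \eqref{eq:varphi-STE} cancel exactly, yielding the influence function as stated; a secondary difficulty is verifying that each residual piece (the Taylor remainder, the squared $\xi_{01j}$ term, and the product $\xi\cdot r$ terms) admits the claimed uniform-in-$j$ rate using only the envelope and $L_2$ bounds supplied by \Cref{lem:emp-process-term}. The remaining identification error $\tilde{\tau}_j^{\STE}-\tau_j^{\STE}$ can be tracked through the denominator by the same ratio expansion applied to the population quantities, contributing the $\delta_m$ term.
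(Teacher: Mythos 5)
Your proposal is correct and follows essentially the same route as the paper: a doubly-robust linear expansion of the individual counterfactual moments $\EE[\tilde Y_j(a)^k]$ with the remainder bounded by Cauchy--Schwarz and the empirical process term by \Cref{lem:emp-process-term} (with $q=\infty$), followed by a first-order Taylor/delta-method expansion of the ratio $h(\beta_0,\beta_1,\beta_2)=(\beta_1-\beta_0)(\beta_2-\beta_0^2)^{-1/2}$ whose second-order remainder is absorbed into $(\log p)/n + n^{-(\alpha+\beta)}$, and finally the identification error $\tilde\tau_j^{\STE}-\tau_j^{\STE}=\cO(\delta_m)$. The only cosmetic difference is that you expand the quadratic term $\hat\nu_{01j}^2-\nu_{01j}^2$ by hand rather than letting the multivariate gradient of $h$ do it, which is equivalent.
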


    The proof of \Cref{thm:lin-STE} requires the analysis of the linear expansions for the individual counterfactual expectations $\EE[Y_j^k(a)]$ (see \Cref{lem:ATE}). 
    It then requires the application of the delta method to derive the uniform convergence rates of the residuals over multiple outcomes.
    For the residuals' rate, the term $n^{-(\alpha+\beta)}$ is the product of the two nuisance estimation errors, which shows the benefit of the double robustness property, while the term $\vartheta^{\STE}\sqrt{(\log p)/n}  + (\log p) /n$ is related to the empirical process terms of individual counterfactuals by applying \Cref{lem:emp-process-term}.
    From triangular-array central limit theorem (\Cref{lem:lindeberg}), a direct consequence of \Cref{thm:lin-STE} is the asymptotic normality of individual STE estimators, as presented in \Cref{prop:AN-STE}.

    \begin{corollary}[Asymptotic normality]\label{prop:AN-STE}
        Under conditions in \Cref{thm:lin-STE}, when $(\vartheta^{\STE}\vee n^{-1/4})\sqrt{\log p} =o(1)$, $\delta_m = o(n^{-1/2})$ and $\sigma_j^2:= {\Var}(\tilde\varphi_{j}^{\STE}(\bZ;{\pi},{\bmu}))\geq c$ for some constant $c>0$, it holds~that,
        \[
            \qquad\qquad\qquad\sqrt{n}(\hat{\tau}_j^{\STE} - \tau_j^{\STE}) \dto\cN(0, \sigma_j^2),\qquad j=1,\ldots,p.
        \] 
    \end{corollary}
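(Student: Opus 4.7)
The plan is to combine the linear expansion of \Cref{thm:lin-STE} with a triangular-array central limit theorem applied pointwise for each $j\in[p]$. Writing
$\sqrt{n}(\hat{\tau}_j^{\STE}-\tau_j^{\STE})=\sqrt{n}\,\PP_n\{\tilde\varphi_j^{\STE}\}+\sqrt{n}\,\varepsilon_j$,
I would first establish $\sqrt{n}\,\varepsilon_j=\op(1)$, then show $\sqrt{n}\,\PP_n\{\tilde\varphi_j^{\STE}\}\dto\cN(0,\sigma_j^2)$, and conclude via Slutsky's theorem. A short direct computation from \eqref{eq:varphi-STE} using $\EE[\tilde\phi_{akj}]=\EE[\tilde{Y}_j(a)^k]$ confirms $\PP\{\tilde\varphi_j^{\STE}\}=0$ at the true law, so the main term is already centered and equals $\sqrt{n}(\PP_n-\PP)\{\tilde\varphi_j^{\STE}\}$.

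For the remainder, the uniform bound from \Cref{thm:lin-STE} gives
$\sqrt{n}\,|\varepsilon_j|\leq\sqrt{n}\max_{j\in[p]}|\varepsilon_j|=\Op\bigl(n^{1/2-(\alpha+\beta)}+\vartheta^{\STE}\sqrt{\log p}+(\log p)/\sqrt{n}+\sqrt{n}\,\delta_m\bigr)$.
Each of the four summands is $o(1)$: the first by $\alpha+\beta>1/2$; the second by the hypothesis $\vartheta^{\STE}\sqrt{\log p}=o(1)$; the third by writing $(\log p)/\sqrt{n}=(n^{-1/4}\sqrt{\log p})^2$ and invoking $n^{-1/4}\sqrt{\log p}=o(1)$; and the fourth by $\delta_m=o(n^{-1/2})$.

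The main obstacle is that $\tilde\varphi_j^{\STE}$ depends on the growing parameter $m$ through $\tilde{Y}_j=g(\Xb)$, so a standard CLT does not apply directly and one must invoke a triangular-array version, namely \Cref{lem:lindeberg}. The boundedness hypotheses of \Cref{thm:lin-STE} --- $|\tilde{Y}_j|\leq C$, $\|\mu_{akj}\|_{L_\infty}\leq C$, $\pi_a\in[\epsilon,1-\epsilon]$, and $\Var[\tilde{Y}_j(0)]\geq c/2$ inherited from $\Var[Y_j(0)]>c$ via $\delta_m=o(1)$ --- ensure that $\tilde\varphi_j^{\STE}(\bZ;\pi,\bmu)$ is bounded by some constant $M$ independent of $m$ and $n$. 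Combined with the assumed variance lower bound $\sigma_j^2\geq c>0$, the Lindeberg condition $\EE[(\tilde\varphi_j^{\STE})^2\,\ind\{|\tilde\varphi_j^{\STE}|>\eta\sqrt{n}\sigma_j\}]\to 0$ for every $\eta>0$ holds trivially, since the indicator vanishes once $\eta\sqrt{n}\sqrt{c}>M$. Applying \Cref{lem:lindeberg} to the iid triangular-array summands then delivers the required normality of the main term, and Slutsky's theorem concludes the proof.
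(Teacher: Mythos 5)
Your proposal is correct and follows essentially the same route as the paper: the corollary is obtained by combining the linear expansion of \Cref{thm:lin-STE} (whose $\sqrt{n}$-scaled remainder is $o_p(1)$ under the stated rate conditions, exactly as you verify term by term) with the triangular-array CLT of \Cref{lem:lindeberg} applied to the bounded, centered influence function, and Slutsky's theorem. Your explicit checks that $\PP\{\tilde\varphi_j^{\STE}\}=0$ and that the Lindeberg condition holds via uniform boundedness and $\sigma_j^2\geq c$ are precisely what the paper leaves implicit, so nothing is missing.
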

    Compared to \Cref{def:unbias}, \Cref{prop:AN-STE} requires a stronger condition on the rate of the bias $\delta_m$, which is mild.
    For instance, when $\bX_1(a),\ldots,\bX_m(a)$ are $\iid$ conditional on $(\bW,\bS(a))$, the bias is zero, i.e., $\EE[\Delta_{mkj}(a)]\equiv 0$;
    when they are weakly dependent, for example, \citet[Proposition S1]{qiu2023unveiling} show that the bias is of order $o(n^{-1/2})$ under the $\beta$-mixing condition when $n^{1/2}\log p=o(m)$.
    
    Because the influence function of STE \eqref{eq:STE-est} is a complicated function of all the nuisances and the observations in $\cD$, it is hard to show usual sample variance of the estimated influence function $\hat{\varphi}_j^{\STE} = \tilde\varphi_{j}^{\STE}(\bZ;\{\hat{\pi}_a,\hat{\bmu}_a\}_{a\in\{0,1\}})$ provides a consistent estimate.
    In the following proposition, we thus rely on extra independent observations to estimate the asymptotic variance.
    However, one can employ the cross-fitting procedure \citep{chernozhukov2018double} on $\cD$ to decouple the dependency of $\hat{\varphi}_j^{\STE}$ and the observations used to compute the empirical variance.
    This ensures that the variance estimation errors are of polynomial rates of $n$ uniformly in $p$ outcomes when $\log (p)/n \leq Cn^{-c}$.

    \begin{proposition}[Consistent variance estimates]\label{prop:Var-STE}
        Under the same conditions in  \Cref{thm:lin-STE}, let $\hat{\varphi}_j^{\STE} $ be the estimated influence function \eqref{eq:varphi-STE} with $(\EE[\tilde{Y}_j(0)], \EE[\tilde{Y}_j(0)^2],\tilde{\tau}_j^{\STE})$ estimated by the doubly robust estimators on $\cD$ , and $\PP_n'$ be the empirical measure over a separate independent sample $\cD'=\{\bZ_{n+1},\ldots,\bZ_{2n}\}$.
        Define the sample variance on $\cD'$ as $\hat{\sigma}_j^2 = \VV_n'(\hat{\varphi}_j^{\STE})$.
        It holds that $\max_{j\in[p]}|\hat{\sigma}_j^2 - {\sigma}_j^2| = \Op(r_{\sigma}^{\STE})$ where $r_{\sigma}^{\STE} = \sqrt{\log p / n} + \vartheta^{\STE}$.
    \end{proposition}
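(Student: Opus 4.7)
The plan is to apply \Cref{lem:var} conditionally on $\cD$, exploiting the independence of $\cD'$ from both the fitted nuisances $(\hat{\pi}_a,\hat{\bmu}_a)$ and the DR scalar estimates of $\EE[\tilde{Y}_j(0)]$, $\EE[\tilde{Y}_j(0)^2]$, and $\tilde{\tau}_j^{\STE}$ (all computed on $\cD$). Under this conditioning, $\hat\varphi_j^{\STE}$ is a fixed, almost surely bounded function of $\bZ$, and $\hat\sigma_j^2=\VV_n'(\hat\varphi_j^{\STE})$ is an ordinary sample variance on the fresh sample $\cD'$. I would then verify the three hypotheses of \Cref{lem:var} for the pair $(\tilde\varphi_j^{\STE},\hat\varphi_j^{\STE})$ uniformly in $j$, taking $q=\infty$ thanks to the boundedness assumptions in \Cref{thm:lin-STE}.

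The envelope condition ($r_{1n}=\Op(1)$) is straightforward: the boundedness of $\tilde Y_j$, $\mu_{akj}$, $\hat\mu_{akj}$ together with $\pi_a,\hat\pi_a\in[\epsilon,1-\epsilon]$ keeps $\tilde\phi_{akj}$ and $\hat\phi_{akj}$ bounded, while consistency of the scalar DR estimators together with $\Var[\tilde Y_j(0)]\geq c$ keeps the denominators in $\hat\varphi_j^{\STE}$ bounded away from zero with probability tending to one. For the $L_2$ error I would split $\hat\varphi_j^{\STE}-\tilde\varphi_j^{\STE}$ into a \emph{nuisance} contribution (replacing $\tilde\phi_{akj}$ by $\hat\phi_{akj}$ with the true scalar coefficients) and a \emph{coefficient} contribution (keeping $\tilde\phi_{akj}$ but replacing the coefficients by their DR estimates). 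The nuisance contribution is $\Op(\vartheta^{\STE})$ uniformly in $j$ by the standard DR algebra combined with the nuisance-rate assumption of \Cref{thm:lin-STE}, while the coefficient contribution reduces to bounding $\hat v_0-\Var[\tilde Y_j(0)]$, $\hat m_k-\EE[\tilde Y_j(0)^k]$, and $\hat\tau_j^{\STE}-\tilde\tau_j^{\STE}$ uniformly in $j$; each is $\Op(\sqrt{\log p/n}+\vartheta^{\STE})$ by \Cref{thm:lin-STE} applied to STE and to its individual counterfactual-mean building blocks. This gives $r_{2n}=\sqrt{\log p/n}+\vartheta^{\STE}$.

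The main obstacle is the mean-error bound $r_{3n}=\max_j|\PP[\hat\varphi_j^{\STE}-\tilde\varphi_j^{\STE}]|=\max_j|\PP[\hat\varphi_j^{\STE}]|$, since a naive term-by-term triangle inequality inflates by constants of order one and does not by itself suffice; the key is an exact cancellation among the leading terms. The standard DR product-bias identity yields $\PP[\hat\phi_{akj}]=\EE[\tilde Y_j(a)^k]+\Op(n^{-(\alpha+\beta)})$ uniformly in $(a,k,j)$. Substituting this into \eqref{eq:varphi-STE} and using the defining identities $\hat v_0=\hat m_2-\hat m_1^{\,2}$ and $\hat\tau_j^{\STE}\sqrt{\hat v_0}=\hat m_1^{(1)}-\hat m_1$ (where $\hat m_1^{(1)}$ is the DR estimate of $\EE[\tilde Y_j(1)]$), the contributions involving $\hat v_0$, $\hat m_1$, $\hat m_2$ telescope, leaving only a residual of the form $(\EE[\tilde Y_j(1)]-\hat m_1^{(1)})/\sqrt{\hat v_0}$ plus product-bias remainders. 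By \Cref{thm:lin-STE} these residuals are $\Op(\sqrt{\log p/n}+\vartheta^{\STE})$ uniformly in $j$, so $r_{3n}$ matches that rate.

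Finally, plugging $r_{1n}=\Op(1)$ with $q=\infty$ and $r_{2n},\,r_{3n}=\Op(\sqrt{\log p/n}+\vartheta^{\STE})$ into \Cref{lem:var} yields
\[
\max_{j\in[p]}|\hat\sigma_j^2-\sigma_j^2|\;\lesssim\;\frac{\log p}{n}+\sqrt{\frac{\log p}{n}}\Bigl(\sqrt{\frac{\log p}{n}}+\vartheta^{\STE}\Bigr)+\sqrt{\frac{\log p}{n}}+\vartheta^{\STE}\;=\;\Op\!\Bigl(\sqrt{\frac{\log p}{n}}+\vartheta^{\STE}\Bigr),
\]
using $\log p\leq n$ in the regime of interest, which is exactly $r_\sigma^{\STE}$ as claimed.
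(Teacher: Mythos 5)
Your proposal is correct and follows essentially the same route as the paper: condition on $\cD$, verify the hypotheses of \Cref{lem:var} with $q=\infty$ (the envelope is bounded by the boundedness assumptions and $\Var[\tilde{Y}_j(0)]\geq c$), and bound $\max_j\|\hat{\varphi}_j^{\STE}-\tilde{\varphi}_j^{\STE}\|_{L_2}$ by $\sqrt{\log p/n}+\vartheta^{\STE}$ by combining the $L_2$ errors of the $\tilde{\phi}_{akj}$ with the uniform errors of $\hat{\boldsymbol{\beta}}_j$ and $\hat{\tau}_j^{\STE}$ from \Cref{thm:lin-STE}. The one place you overcomplicate matters is $r_{3n}$: no exact cancellation is needed, since $|\PP[\hat{\varphi}_j^{\STE}-\tilde{\varphi}_j^{\STE}]|\leq\|\hat{\varphi}_j^{\STE}-\tilde{\varphi}_j^{\STE}\|_{L_2}\lesssim r_{2n}$ already delivers the target rate — $r_{3n}$ enters the conclusion of \Cref{lem:var} only additively at first order, which is exactly how the paper handles it.
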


    \subsection{Quantile effects}\label{subsec:standarized-quantile}

    In practice, examining quantile effects offers a robust alternative to mean-based analysis, particularly when confronted with highly variable treatment assignment probabilities and heavy-tailed outcomes. 
    Estimating causal effects on the mean is a challenging problem in such scenarios because the signal-noise ratio is generally small.
    In cases where the mean is undefined but the median exists (such as the Cauchy distribution), using the median may result in more powerful tests for the location-shift hypothesis \citep{diaz2017efficient}.

    We first introduce the DR estimator for the median effect \eqref{eq:QTE} when $\varrho=0.5$, while the proposal naturally extends to other quantile levels $\varrho$ as well.
    For $j\in[p]$, let $\theta_{aj}$ be the $\varrho$-quantile of the counterfactual response ${Y}_j(a)$, which solves the following equation:
    \begin{align}
        0 = \EE[{\psi}({Y}_j(a), \theta)], 
        \text{ where } \psi(y, \theta) &:= \ind\{y \leq \theta\} - \varrho.\label{eq:quantile-pop}
    \end{align}
    Since the potential outcome $Y_j(a)$ is not directly observed, we need to rely on the counterfactual derived outcomes $\tilde{Y}_j(a)$ to identify the quantile of $Y_j(a)$. The following lemma summarizes the identification results of general M-estimators for functionals of $Y_j(a)$ using the derived outcomes $\tilde{Y}_j(a)$.
    \begin{lemma}[Identification of M-estimators]\label{lem:identification-es}
        Under \Crefrange{asm:consistency}{asm:nuc}, consider the causal estimand $\theta_{aj}$ as the solution to the estimation equations:
        \[
            \qquad\qquad\qquad M_j(\theta) = \EE[F_j(Y_j(a), \theta)] = 0,\qquad j=1,\ldots,p,
        \]
        where $M_j$ is differentiable and the magnitude of its derivative $|M_j'|$ is uniformly lower bounded around $\theta_{aj}$: $\min_{1 \leq j \leq p} \inf_{\theta \in \cB(\theta_{aj}, \delta)}|M_j'(\theta)| \geq c >0 $ for some constant $\delta >0$.
        Suppose $\tilde{Y}_j(a)$'s are derived outcomes such that $F_j(\tilde{Y}_j(a),\theta)$ is asymptotically unbiased to $F_j(Y_j(a),\theta)$, i.e. as $m\rightarrow\infty$, $\Delta_{mj}(a,\theta) = \EE[F_j(\tilde{Y}_j(a),\theta)\mid \bS(a),\bW] - F_j({Y}_j(a),\theta)$ satisfies that $\delta_m := \max_{j\in[p]} \sup_{\theta \in \cB(\theta_{aj}, \delta)} $ $|\EE[\Delta_{mj}(a,\theta)]| = o(1)$.
        Let $\tilde{\theta}_{aj} \in \cB(\theta_{aj}, \delta)$ be the solution  to the estimating equation 
        \[\EE[\EE[F_j(\tilde{Y}_j,{\theta}) \mid A=a,\bW]] = 0,
        \] 
     then $\theta_{aj}$ can be identified by $\tilde{\theta}_{aj}$ as $m\rightarrow\infty$ .
    \end{lemma}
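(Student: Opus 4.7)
The plan is to show that $\tilde{M}_j(\theta) := \EE[\EE[F_j(\tilde{Y}_j,\theta)\mid A=a,\bW]]$ differs from $M_j(\theta) = \EE[F_j(Y_j(a),\theta)]$ only by the expected bias $\EE[\Delta_{mj}(a,\theta)]$, and then to invert this perturbation using the uniform lower bound on $|M_j'|$ via a mean value argument. The two building blocks are (i) a standard identification step that rewrites $\tilde{M}_j$ as an expectation over the counterfactual $\tilde{Y}_j(a)$, and (ii) an iterated expectation through the latent state $\bS(a)$ that brings in the bias function $\Delta_{mj}(a,\theta)$.

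For step (i), \Cref{asm:consistency} gives $\tilde{Y}_j = \tilde{Y}_j(a)$ whenever $A=a$, so the inner conditional expectation equals $\EE[F_j(\tilde{Y}_j(a),\theta)\mid A=a,\bW]$. Since $\tilde{Y}_j(a)$ is a measurable transformation of $\Xb(a)$, \Cref{asm:nuc} implies $A\indep \tilde{Y}_j(a)\mid \bW$, which allows the conditioning on $A=a$ to be dropped (\Cref{asm:positivity} ensures the relevant conditional laws are well defined). Taking outer expectation yields $\tilde{M}_j(\theta) = \EE[F_j(\tilde{Y}_j(a),\theta)]$. For step (ii), conditioning on $(\bS(a),\bW)$ and invoking the definition of $\Delta_{mj}(a,\theta)$ gives
\begin{align*}
\tilde{M}_j(\theta) = \EE\bigl[\EE[F_j(\tilde{Y}_j(a),\theta)\mid \bS(a),\bW]\bigr] = \EE[F_j(Y_j(a),\theta)] + \EE[\Delta_{mj}(a,\theta)] = M_j(\theta) + \EE[\Delta_{mj}(a,\theta)].
\end{align*}
The asymptotic unbiasedness hypothesis bounds the remainder uniformly by $\delta_m = o(1)$ on $\cB(\theta_{aj},\delta)$ and over $j\in[p]$.

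To conclude, since $\tilde{\theta}_{aj}\in\cB(\theta_{aj},\delta)$ solves $\tilde{M}_j(\tilde{\theta}_{aj})=0$ and $\theta_{aj}$ solves $M_j(\theta_{aj})=0$, the previous display gives $M_j(\tilde{\theta}_{aj}) = -\EE[\Delta_{mj}(a,\tilde{\theta}_{aj})]$. A mean value expansion of $M_j$ between $\theta_{aj}$ and $\tilde{\theta}_{aj}$, combined with the lower bound $\inf_{\theta\in\cB(\theta_{aj},\delta)}|M_j'(\theta)|\geq c$, then yields
\begin{align*}
\max_{1\leq j\leq p}|\tilde{\theta}_{aj} - \theta_{aj}| \leq c^{-1}\max_{1\leq j\leq p}|\EE[\Delta_{mj}(a,\tilde{\theta}_{aj})]| \leq c^{-1}\delta_m = o(1),
\end{align*}
which is the desired identification. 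The delicate point is that the intermediate value $\bar{\theta}_j$ in the mean value step depends on $j$ and on $m$, so the lower bound on $|M_j'|$ is required to hold on the entire ball $\cB(\theta_{aj},\delta)$ rather than only at $\theta_{aj}$; this is precisely what the hypothesis supplies, and it is what allows the uniform bias rate $\delta_m$ to translate directly into a uniform rate on $|\tilde{\theta}_{aj}-\theta_{aj}|$.
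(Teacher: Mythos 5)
Your proposal is correct and follows essentially the same route as the paper's proof: establish $\tilde{M}_j(\theta) = M_j(\theta) + \EE[\Delta_{mj}(a,\theta)]$ via the identification argument of \Cref{prop:identification}, then invert the perturbation with a mean value expansion of $M_j$ between $\theta_{aj}$ and $\tilde{\theta}_{aj}$ and the uniform lower bound $\inf_{\theta\in\cB(\theta_{aj},\delta)}|M_j'(\theta)|\geq c$, yielding $\max_j|\tilde{\theta}_{aj}-\theta_{aj}|\leq c^{-1}\delta_m = o(1)$. Your remark that the intermediate point of the expansion lies in $\cB(\theta_{aj},\delta)$ because $\tilde{\theta}_{aj}$ is assumed to lie there, so the derivative bound must hold on the whole ball, is exactly the role that hypothesis plays in the paper's argument as well.
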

    Under conditions in Lemma \ref{lem:identification-es} with $F_j(Y_j(a),\theta) = \psi(Y_j(a), \theta)$, we can focus on estimating the quantile of $\tilde{Y}_j(a)$ to approximate the quantile of $Y_j(a)$. 
    Specifically, consider a doubly robust expansion of the above question: $0 = \EE[\psi(\tilde{Y}_j, \theta)] = - \EE[\omega_{aj}(\bZ, \theta)]$, where the estimating function is given by 
    \begin{align*}
        \omega_{aj}(\bZ, \theta) &= \frac{\ind\{A=a\}}{\pi_a(\bW)}({\nu}_{aj}(\bW,{\theta})- \psi(\tilde{Y}_{j}, {\theta})  )  - {\nu}_{aj}(\bW,{\theta}).
    \end{align*}
    Here, $\nu_{aj}(\bW,\theta) = \EE[\psi(\tilde{Y}_j, \theta) \mid \bW,A=a] = \PP(\tilde{Y}_j \leq \theta\mid\bW,A=a) - \varrho$ is the excess (conditional) cumulative distribution functions (cdfs) of $\tilde{Y}_j(a)$, and $\pi_a$ is the propensity score function as before.
    One may then expect to obtain an estimator of $\theta_{aj}$ by solving the empirical version of \eqref{eq:quantile-pop} for $\theta$:
    \begin{align}
        0 &= \PP_n[\hat{\omega}_{aj}(\bZ, \theta)] ,\label{eq:quantile-emp}
    \end{align}
    where $
        \hat{\omega}_{aj}(\bZ, \theta) = \frac{\ind\{A=a\}}{\hat{\pi}_a(\bW)}(\hat{\nu}_{aj}(\bW,{\theta})- \psi(\tilde{Y}_j, {\theta})  )  - \hat{\nu}_{aj}(\bW,{\theta})$ is the estimated influence function and $(\hat{\pi}_a,\hat{\nu}_{aj}+\varrho)$ are the estimated propensity score and cdf functions with range $[0,1]$.

    However, directly solving \eqref{eq:quantile-emp} is not straightforward due to its non-smoothness and non-linearity in $\theta$. 
    A reasonable strategy to adopt instead is a one-step update approach \citep{van2000asymptotic,tsiatis2006semiparametric} using the influence function:
    \begin{align}
        \hat{\theta}_{aj} = \hat{\theta}_{aj}^{\init} + \frac{1}{\hf_{aj}(\hat{\theta}_{aj}^{\init})} \PP_n[\hat{\omega}_{aj}(\bZ, \hat{\theta}_{aj}^{\init})],\label{eq:quantile-est}
    \end{align}
    where $\hat{\theta}_{aj}^{\init}$ is an initial estimator of $\theta_{aj}$ and $\hf_{aj}(\hat{\theta}_{aj}^{\init})$ is the estimated density of $\tilde{Y}_j(a)$ at $\hat{\theta}_{aj}^{\init}$.

    For $a=0,1$, let $\bbf_a=(f_{aj})_{j\in[p]}$, $\tilde{\btheta}_a = (\tilde{\theta}_{aj})_{j\in[p]}$, and $\bnu_a=(\nu_{aj})_{j\in[p]}$ be the vectors of true density functions, the $\varrho$-quantiles, and the excess cdf functions cdfs of $\tilde{Y}_j(a)$, respectively.
    Moreover, let $\hat{\bbf}_a=(\hf_{aj})_{j\in[p]}$, $\hat{\btheta}_a = (\hat{\theta}_{aj})_{j\in[p]}$, and $\hat{\bnu}_a=(\hat{\nu}_{aj})_{j\in[p]}$ be the corresponding vectors of estimated nuisances.
    Based on \eqref{eq:quantile-est}, an estimator for $\tau_j^{\QTE}$ is given by
    \begin{align}
        \hat{\tau}^{\QTE}_j = \hat{\theta}_{1j}-\hat{\theta}_{0j} .\label{eq:QTE-est}
    \end{align}
    The following theorem provides the asymptotic normality of the one-step estimator \eqref{eq:QTE-est}.

\begin{theorem}[Linear expansion of QTE]\label{thm:lin-QTE}
    Under \Crefrange{asm:consistency}{asm:nuc}, suppose the identification conditions in Lemma \ref{lem:identification-es} hold with $F_j=\psi$ defined in \eqref{eq:quantile-pop}.
    Consider the one-step estimator \eqref{eq:QTE-est} for the median treatment effect, where $\PP_n$ is the empirical measure over $\cD=\{\bZ_1,\ldots,\bZ_n\}$ and $(\hat{\btheta}_{a}^{\init},{\bm\hf}_a,\hat{\pi}_a,\hat{\bnu}_a)$ is an estimate of $(\tilde{\btheta}_a,{\bm f}_a,{\pi}_a,{\bnu}_a)$ from samples independent of $\cD$ for $a=0,1$.
    Suppose the following conditions hold for $a=0,1$ with probability tending to one:
    \begin{enumerate}[label=(\arabic*)]
        \item\label{asm:quantile-boundedness} Boundedness: The quantile $\theta_{aj}$ is in the interior of its parameter space.
        There exists $C,c>0$ and $\epsilon,\delta\in(0,1)$ such that $\max_{1 \leq j \leq p}\max\{|Y_j|,|\tilde{Y}_j|\}<C$, and $\pi_a,\hat{\pi}_a \in[\epsilon,1-\epsilon]$, and $f_{aj}$ is uniformly bounded : $c \leq  f_{aj} \leq C$ for all $j $ and has a bounded derivative in a neighborhood $\cB(\tilde{\theta}_{aj},\delta)$ for all $j\in[p]$: $\max_{1 \leq j \leq p}\max_{\theta \in \cB(\tilde{\theta}_{aj},\delta)}|f_{aj}'(\theta)|\leq C$.

        \item \label{asm:quantile-init} Initial estimation: The initial quantile and density estimators satisfy that $\max_{j\in[p]}|\hat{\theta}_{aj}^{\init} - \tilde{\theta}_{aj}| = \cO(n^{-\gamma} )$ and $ \max_{j\in[p]}|\hf_{aj}(\hat{\theta}_{aj}^{\init}) - f_{aj}(\tilde{\theta}_{aj})| = \cO(n^{-\kappa} )$ with $\gamma>1/4,\kappa>0$ such that $\gamma+\kappa>1/2$.
        
        \item Nuisance: The rates of nuisance estimates satisfy $\max_{j\in[p]}\sup_{\theta\in \cB(\tilde{\theta}_{aj},\delta)}\|\hat{\nu}_{aj}(\cdot,\theta)-\nu_{aj}(\cdot,\theta)\|_{L_{2}} = \cO(n^{-\alpha})$ and $\|\hat{\pi}_a-\pi_a\|_{L_2} = \cO(n^{-\beta})$ for some $\alpha,\beta\in(0,1/2)$ such that $\alpha+\beta>1/2$.
    \end{enumerate}
    Then as $m,n,p\rightarrow\infty$, it holds that $\hat{\tau}_j^{\QTE} - \tau_j^{\QTE} = \PP_n\{\tilde\varphi_{j}^{\QTE}\} + \varepsilon_j$, $j=1,\ldots,p$, where the residual term satisfy $\max_{j\in[p]}|\varepsilon_j| = \Op(\vartheta^{\QTE}\sqrt{(\log p)/n} + (\log p)/n + n^{-(\alpha+\beta)\wedge(\gamma+\kappa)\wedge(2\gamma)}+\delta_m) $ with $\vartheta^{\QTE}:=n^{-(\alpha \wedge \beta \wedge \kappa \wedge \frac{\gamma}{2})}$ and the influence function is given by
    \begin{align*}
        \tilde\varphi_j^{\QTE}(\bZ; \{\tilde{\btheta}_a,{\bm f}_a,{\pi}_a,{\bnu}_a\}_{a\in\{0,1\}}) &= [f_{1j}(\tilde{\theta}_{1j})]^{-1}{\omega}_{1j}(\bZ, \tilde{\theta}_{1j}) - [f_{0j}(\tilde{\theta}_{0j})]^{-1}{\omega}_{0j}(\bZ, \tilde{\theta}_{0j}).
    \end{align*}  
\end{theorem}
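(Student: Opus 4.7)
The plan is to derive a uniform linear expansion of $\hat{\theta}_{aj} - \tilde{\theta}_{aj}$ separately for $a=0,1$ and then subtract. Starting from the one-step update,
\[
    \hat{\theta}_{aj} - \tilde{\theta}_{aj} = (\hat{\theta}_{aj}^{\init} - \tilde{\theta}_{aj}) + \hf_{aj}(\hat{\theta}_{aj}^{\init})^{-1}\,\PP_n\hat{\omega}_{aj}(\bZ,\hat{\theta}_{aj}^{\init}),
\]
I would decompose the score as
\begin{align*}
\PP_n\hat{\omega}_{aj}(\bZ,\hat{\theta}_{aj}^{\init})
&= \underbrace{(\PP_n - \PP)\omega_{aj}(\bZ,\tilde{\theta}_{aj})}_{T_1} + \underbrace{(\PP_n - \PP)[\hat{\omega}_{aj}(\bZ,\tilde{\theta}_{aj}) - \omega_{aj}(\bZ,\tilde{\theta}_{aj})]}_{T_2} \\
&\quad + \underbrace{(\PP_n - \PP)[\hat{\omega}_{aj}(\bZ,\hat{\theta}_{aj}^{\init}) - \hat{\omega}_{aj}(\bZ,\tilde{\theta}_{aj})]}_{T_3} + \underbrace{\PP\hat{\omega}_{aj}(\bZ,\hat{\theta}_{aj}^{\init})}_{T_4},
\end{align*}
so $T_1/f_{aj}(\tilde{\theta}_{aj})$ yields the influence-function contribution while $T_2, T_3, T_4$ must aggregate to the claimed remainder $\varepsilon_j$.

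The empirical-process pieces $T_2$ and $T_3$ are bounded uniformly in $j$ by \Cref{lem:emp-process-term}. For $T_3$, the key $L_2$ computation exploits the bounded density in condition \ref{asm:quantile-boundedness}: although $\psi(\tilde{Y}_j,\theta) = \ind\{\tilde{Y}_j \leq \theta\} - \varrho$ is discontinuous, $\EE[(\psi(\tilde{Y}_j,\hat{\theta}) - \psi(\tilde{Y}_j,\tilde{\theta}))^2] \leq \PP(\tilde{Y}_j \in [\hat{\theta}\wedge\tilde{\theta}, \hat{\theta}\vee\tilde{\theta}]) \lesssim |\hat{\theta} - \tilde{\theta}|$, and an analogous Lipschitz-in-$L_2$ bound on $\hat{\nu}_{aj}(\bW,\theta)$ (via the smooth true $\nu_{aj}$ plus its estimation error) gives $\max_j\|\hat{\omega}_{aj}(\bZ,\hat{\theta}_{aj}^{\init}) - \hat{\omega}_{aj}(\bZ,\tilde{\theta}_{aj})\|_{L_2} \lesssim n^{-\gamma/2}$. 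The term $T_2$ is similarly controlled in $L_2$ by $n^{-\alpha\wedge\beta}$ through the nuisance rates, and the bounded envelope from condition \ref{asm:quantile-boundedness} keeps the $\log p/n$ piece in \Cref{lem:emp-process-term} harmless; the total is thus $\Op(n^{-\alpha\wedge\beta\wedge(\gamma/2)}\sqrt{\log p/n} + \log p/n)$. For $T_4$, a direct algebraic manipulation reveals the doubly-robust structure
\[
    \PP\hat{\omega}_{aj}(\bZ,\theta) = -\EE[\psi(\tilde{Y}_j(a),\theta)] + \EE\!\left[\frac{\pi_a(\bW) - \hat{\pi}_a(\bW)}{\hat{\pi}_a(\bW)}\bigl(\hat{\nu}_{aj}(\bW,\theta) - \nu_{aj}(\bW,\theta)\bigr)\right],
\]
whose cross-term is $\Op(n^{-(\alpha+\beta)})$ by Cauchy--Schwarz uniformly in $j$ and in $\theta\in\cB(\tilde{\theta}_{aj},\delta)$; a second-order Taylor expansion of the smooth population functional $-(\tilde{F}_{aj}(\theta) - \varrho)$ at $\tilde{\theta}_{aj}$, using $\tilde{F}_{aj}(\tilde{\theta}_{aj}) = \varrho$ and the bounded derivative of $f_{aj}$, contributes $-f_{aj}(\tilde{\theta}_{aj})(\hat{\theta}_{aj}^{\init} - \tilde{\theta}_{aj}) + \Op(n^{-2\gamma})$.

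Assembling and multiplying by $\hf_{aj}(\hat{\theta}_{aj}^{\init})^{-1}$, the $-f_{aj}(\tilde{\theta}_{aj})(\hat{\theta}_{aj}^{\init} - \tilde{\theta}_{aj})$ piece combines with the leading $(\hat{\theta}_{aj}^{\init} - \tilde{\theta}_{aj})$ up to the density ratio $f_{aj}(\tilde{\theta}_{aj})/\hf_{aj}(\hat{\theta}_{aj}^{\init}) = 1 + \Op(n^{-\kappa})$, leaving a residual of order $n^{-(\gamma+\kappa)}$; replacing $\hf_{aj}(\hat{\theta}_{aj}^{\init})^{-1}$ by $f_{aj}(\tilde{\theta}_{aj})^{-1}$ in front of the oracle $T_1$ contributes $\Op(n^{-\kappa}\sqrt{\log p/n})$ since $T_1 = \Op(\sqrt{\log p/n})$ uniformly by \Cref{lem:emp-process-term}. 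Subtracting the $a=0$ expansion from the $a=1$ one assembles $\tilde{\varphi}_j^{\QTE}$, and a Taylor expansion of $M_j$ around $\theta_{aj}$ together with $|M_j(\tilde{\theta}_{aj})| \leq \delta_m$ and the lower bound $|M_j'|\geq c$ from \Cref{lem:identification-es} gives $|\tilde{\theta}_{aj} - \theta_{aj}| \lesssim \delta_m$, bridging $\tilde{\tau}_j^{\QTE}$ and $\tau_j^{\QTE}$. The main obstacle I expect is the indicator-function non-smoothness of $\psi$, which rules out a pointwise Taylor expansion of $\omega_{aj}$ in $\theta$; this is circumvented by Taylor-expanding only the smooth \emph{population} functional $\EE[\psi(\tilde{Y}_j(a),\theta)]$ while controlling the sample-level $\theta$-sensitivity through the $L_2$-modulus bound combined with \Cref{lem:emp-process-term}.
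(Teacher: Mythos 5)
Your proposal is correct and follows essentially the same route as the paper's proof: the same one-step expansion, the same $L_2$-modulus bound for the indicator via the bounded density, the same doubly robust product structure for the population bias, the same Taylor expansion of the smooth population score, and the same density-ratio cancellation yielding the $n^{-(\gamma+\kappa)}$ and $n^{-\kappa}\sqrt{\log p/n}$ residuals. The only difference is a cosmetic regrouping of the decomposition (you center the nuisance-error empirical process at $\tilde{\theta}_{aj}$ and perturb $\hat{\omega}$ in $\theta$, whereas the paper centers at $\hat{\theta}_{aj}^{\init}$ and perturbs the true $\omega$), which changes nothing in the final rates.
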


    \Cref{app:subsec:est-QTE} provide details for obtaining initial estimators for the quantiles and the corresponding densities.
    Similar to STE, we can also obtain individual asymptotic normality for the DR estimator \eqref{eq:QTE-est} of QTE and consistently estimate its variance.

    \begin{proposition}[Asymptotic normality of QTE]\label{prop:AN-QTE}
        Under the conditions in \Cref{thm:lin-QTE}, when $(\vartheta^{\QTE}\vee n^{-1/4})\sqrt{\log p} =o(1)$, $\delta_m = o(n^{-1/2})$ and $\sigma_j^2 := {\Var}(\tilde\varphi_{j}^{\QTE})\geq c$ for some constant $c>0$, it holds
        \begin{align*}
            \qquad\qquad\qquad\sqrt{n}(\hat{\tau}^{\QTE}_j - \tau_j^{\QTE}) \dto \cN(0,{\sigma}_j^2),\qquad j=1,\ldots,p.
        \end{align*}
        Define the sample variance $\hat{\sigma}_j^2 = \VV_n(\hat{\varphi}_j^{\QTE})$ for the estimated influence function $\hat{\varphi}_j^{\QTE} := \tilde\varphi_j^{\QTE}(\bZ; \{\hat{\btheta}_a^{\init},$ $\hat{\bm f}_a,\hat{\pi}_a,\hat{\bnu}_a\}_{a\in\{0,1\}})$.
        It further holds that $\max_{j\in[p]}|\hat{\sigma}_j^2-{\sigma}_j^2| = \Op(r_{\sigma}^{\QTE})$ where $r_{\sigma}^{\QTE}=(\log p)/n + \sqrt{(\log p)/n} \vartheta^{\QTE} + \vartheta^{\QTE}$.
        
    \end{proposition}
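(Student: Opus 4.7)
The plan is to chain together the linear expansion from \Cref{thm:lin-QTE} with a triangular-array central limit theorem for the first claim, and then invoke \Cref{lem:var} for the variance consistency. From \Cref{thm:lin-QTE}, $\sqrt{n}(\hat{\tau}_j^{\QTE} - \tau_j^{\QTE}) = \sqrt{n}\,\PP_n\{\tilde\varphi_j^{\QTE}\} + \sqrt{n}\,\varepsilon_j$ with the uniform residual bound already established there. The strengthened hypothesis $(\vartheta^{\QTE}\vee n^{-1/4})\sqrt{\log p}=o(1)$ forces $\log p = o(\sqrt{n})$, hence $\sqrt{n}\cdot (\log p)/n = o(1)$, while $\sqrt{n}\,\vartheta^{\QTE}\sqrt{(\log p)/n} = \vartheta^{\QTE}\sqrt{\log p} = o(1)$. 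The nuisance rate conditions $\alpha+\beta>1/2$, $\gamma+\kappa>1/2$, and $\gamma>1/4$ together give $\sqrt{n}\,n^{-(\alpha+\beta)\wedge(\gamma+\kappa)\wedge(2\gamma)}=o(1)$, and $\sqrt{n}\,\delta_m = o(1)$ by hypothesis. So $\sqrt{n}\,\varepsilon_j = \op(1)$ uniformly in $j$. Because $\tilde\varphi_j^{\QTE}$ depends on $m$ through $\tilde\theta_{aj},\,f_{aj},\,\nu_{aj}$, the sum $\sum_{i=1}^n \tilde\varphi_j^{\QTE}(\bZ_i)$ is a triangular array, but the boundedness assumptions \ref{asm:quantile-boundedness} together with positivity make $|\tilde\varphi_j^{\QTE}|$ uniformly bounded, so Lindeberg's condition holds trivially; combined with $\sigma_j^2 \geq c$, the triangular-array CLT delivers $\sqrt{n}\,\PP_n\{\tilde\varphi_j^{\QTE}\} \dto \cN(0,\sigma_j^2)$, and Slutsky's lemma concludes the asymptotic normality.

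For the variance bound I apply \Cref{lem:var} with $\varphi_j = \tilde\varphi_j^{\QTE}$, $\hat\varphi_j = \hat\varphi_j^{\QTE}$, and $q=\infty$. The envelope condition follows from the uniform boundedness of $Y_j, \tilde{Y}_j$, of $\pi_a,\hat\pi_a$ away from $\{0,1\}$, and of the densities $c \leq f_{aj},\hat{f}_{aj}(\hat\theta_{aj}^{\init}) \leq C$, so $r_{1n}\lesssim 1$. To obtain $r_{2n}$, I decompose $\hat\varphi_j^{\QTE}-\tilde\varphi_j^{\QTE}$ into the contributions of (a) the density factor $[\hat{f}_{aj}(\hat\theta^{\init}_{aj})]^{-1}-[f_{aj}(\tilde\theta_{aj})]^{-1}$, which is $\cO(n^{-\kappa})$ by the lower bound on the densities and condition \ref{asm:quantile-init}; (b) the propensity and regression errors $\hat\pi_a-\pi_a$ and $\hat\nu_{aj}(\cdot,\hat\theta_{aj}^{\init})-\nu_{aj}(\cdot,\tilde\theta_{aj})$, of $L_2$-orders $n^{-\beta}$ and $n^{-\alpha}$ respectively (using the uniform-in-$\theta$ nuisance rate and the Lipschitz continuity of $\nu_{aj}(\cdot,\theta)$ in $\theta$ inherited from the density upper bound); and (c) the quantile-induced shift $\psi(\tilde{Y}_j,\hat\theta_{aj}^{\init})-\psi(\tilde{Y}_j,\tilde\theta_{aj})$, whose $L_2$-norm I treat separately below. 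Summing these contributions yields $r_{2n}\lesssim \vartheta^{\QTE}$, and the trivial bound $|\PP[\hat\varphi_j^{\QTE}-\tilde\varphi_j^{\QTE}]| \leq \|\hat\varphi_j^{\QTE}-\tilde\varphi_j^{\QTE}\|_{L_2}$ gives $r_{3n}\lesssim \vartheta^{\QTE}$. Plugging into \Cref{lem:var} with $q=\infty$ yields exactly $r_\sigma^{\QTE} = (\log p)/n + \sqrt{(\log p)/n}\,\vartheta^{\QTE} + \vartheta^{\QTE}$.

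The main technical obstacle is term (c): the indicator function $\psi(\cdot,\theta)$ is not Lipschitz in $\theta$, so standard smoothness arguments fail. The idea is that $|\psi(y,\hat\theta)-\psi(y,\tilde\theta)| = \ind\{y\in[\tilde\theta\wedge\hat\theta,\,\tilde\theta\vee\hat\theta]\}$ is already $\{0,1\}$-valued, so its $L_2$-norm equals the square root of the probability mass that the density $f_{aj}$ assigns to the interval between $\tilde\theta_{aj}$ and $\hat\theta_{aj}^{\init}$, which by the uniform upper bound on $f_{aj}$ is at most $C|\hat\theta_{aj}^{\init}-\tilde\theta_{aj}|\lesssim n^{-\gamma}$. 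Taking the square root gives an $n^{-\gamma/2}$ contribution, which is precisely why $\gamma/2$ (rather than $\gamma$) enters $\vartheta^{\QTE}$ and is the main asymmetry between the QTE and STE analyses. Handling this term uniformly in $j$ requires the uniform density upper bound from \ref{asm:quantile-boundedness}, after which the overall $r_{2n}$ bound and the conclusion follow directly.
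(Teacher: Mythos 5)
Your proposal is correct and follows essentially the same route as the paper: the asymptotic normality is obtained by checking that $\sqrt{n}\max_j|\varepsilon_j|=\op(1)$ under the stated rate conditions and invoking the triangular-array CLT, and the variance bound comes from \Cref{lem:var} with $q=\infty$ after decomposing $\hat\varphi_j^{\QTE}-\tilde\varphi_j^{\QTE}$ into the density-factor, nuisance-error, and quantile-shift terms, with the indicator shift controlled by the square-root-of-interval-probability argument that produces the $n^{-\gamma/2}$ rate. The one small imprecision is your appeal to ``Lipschitz continuity of $\nu_{aj}(\cdot,\theta)$ in $\theta$'': the assumptions only bound the marginal density of $\tilde{Y}_j(a)$, not the conditional density given $\bW$, so the shift $\nu_{aj}(\cdot,\hat\theta_{aj}^{\init})-\nu_{aj}(\cdot,\tilde\theta_{aj})$ should be bounded in $L_2$ by the same Jensen/square-root argument you already use for the indicator (yielding $n^{-\gamma/2}$ rather than $n^{-\gamma}$), which leaves your final rate $\vartheta^{\QTE}$ and the conclusion unchanged.
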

    
    Apart from the mild rate requirement on the nuisance functions, no metric entropy conditions are assumed in \Cref{thm:lin-QTE} and \Cref{prop:AN-QTE}. 
    This allows one to estimate nuisances with machine learning methods and achieve asymptotical normality with cross-fitting.
    While the doubly-robust estimators for QTE have also been considered by \citet{chakrabortty2022general,kallus2019localized} for a single outcome ($p=1$), they both require metric entropy or Donsker class conditions.

\section{Simultaneous inference}\label{sec:inference}

    \subsection{Large-scale multiple testing}

    For a target estimand $\tau_j \in\{\tau_j^\STE,\tau_j^\QTE\}$, the asymptotic normality established in \Cref{prop:AN-STE} and \Cref{prop:AN-QTE} can be utilized to test the null hypotheses $H_{0j}:\tau_j=\tau_j^*$ for $j=1,\ldots,p$.
    This implies that one can control the Type-I error of the individual tests using the statistics $t_j=\sqrt{n} (\hat{\tau}_j-\tau_j^*)/\hat{\sigma}_j$, with empirical variance given in \Cref{prop:Var-STE,prop:AN-QTE}.
    The confidence intervals for individual causal estimates can also be constructed.
    To conduct simultaneous inference, however, the tests above are too optimistic when multiple tests are of interest.
    Therefore, to obtain valid inferential statements, we must perform a multiplicity adjustment to control the family-wise error explicitly.
    This subsection provides simultaneous tests and confidence intervals for causal effects with multiple outcomes.

    For $j\in[p]$, let $\varphi_{ij}=\tilde{\varphi}_j(\bZ_i) $ and $\hat{\varphi}_{ij} = \hat{\varphi}_{j}(\bZ_i) $ be the influence function value and its estimate evaluated at the $i$th observation $\bZ_i = (A_i,\bW_i,\Xb_i)$, as defined in \Cref{prop:Var-STE,prop:AN-QTE} for $\tau_j$ being $\tau_j^\STE$ and $\tau_j^\QTE$, respectively.
    We require a condition from \citet[Theorem J.1]{chernozhukov2013gaussian} for feasible inference. 

    \begin{assumption}[Bounded variances and covariances]\label{asm:cor}
        There exist a constant $a,c_1\in(0,1)$ and a set of informative hypotheses $\cA^*\subseteq[p]$ such that $|\cA^*|\geq a p$, $\max_{j\in \cA^{*c}}\sigma_j^2 = o(1)$, $\min_{j\in \cA^*}\sigma_j^2\geq c_1$ and $\max_{j_1\neq j_2\in \cA^*}$ $|\Cor(\varphi_{1j1}, \varphi_{1j_2}) | \leq 1 - c_1$.
    \end{assumption}

    When the value of $\sigma_j$ is 0, the population distribution of the $j$th influence function is degenerated and has no variability. 
    In \Cref{asm:cor}, the first condition precludes the existence of such super-efficient estimators over $\cA^*$, which is commonly required even in classical settings where the number of variables $p$ is small compared to the sample size $n$ \citep{belloni2018high}.
    In practice, one can use a small threshold $\cn$ to screen out outcomes that have small variations and obtain a set of informative outcomes $\cA_1=\{j\in[p]\mid \hat{\sigma}_j\geq\cn\}$.
    
    For DR estimators derived in the previous section, the following Gaussian approximation result over a family of null hypotheses allows for data-dependent choices of the set of hypotheses and suggests a multiplier bootstrap procedure \citep{chernozhukov2013gaussian} for simultaneous inference.

    \begin{lemma}[Gaussian approximation for nested hypotheses]\label{prop:max-Gaussian}
        For $\tau_j=\tau_j^\STE$ and $\vartheta=\vartheta^\STE$, suppose conditions in \Cref{prop:Var-STE} and \Cref{asm:cor} hold. 
        Futher assume that there exist some constants $c_2,C_2>0$ such that $\max\{\log(pn)^7/n, \log (pn)^2\vartheta, \sqrt{n\log(pn)}\delta_m\}\leq C_2 n^{-c_2}$.
        For all $\cS \subseteq \cA^*\subseteq [p]$, define ${M}_{\cS}=\max_{j\in\cS}|\sqrt{n}(\hat{\tau}_j-\tau_j)/\hat{\sigma}_j|$, $\hat{\bvarphi}_i = (\hat{\varphi}_{ij})_{j\in\cS}$, $\hat{\bE}_{\cS}=n^{-1}\sum_{i=1}^n\hat{\bvarphi}_i\hat{\bvarphi}_i^{\top}$, and $\hat{\bD}_{\cS}=\diag((\hat{\sigma}_j)_{j\in\cS})$.
        Consider null hypotheses $H_0^{\cS}$ indexed by $\cS$ that $\forall\ j\in\cS$, $\tau_j =\tau_j^*$.
        As $m,n,p\rightarrow \infty$, it holds that
        \begin{align*}
            \sup_{H_0^{\cS}:\cS \subseteq \cA^*}\sup_{x\in\RR} | \PP(\overline{M}_{\cS} > x) - \PP(\|\bg_{\cS}\|_{\infty} > x\mid \{\bZ_i\}_{i=1}^n)| \pto 0,
        \end{align*}
        where $\bg_{\cS} \sim \cN(\zero,\hat{\bD}_{\cS}^{-1} \hat{\bE}_{\cS} \hat{\bD}_{\cS}^{-1})$.
        The conclusion also holds for $\tau_j=\tau_j^\QTE$ and $\vartheta=\vartheta^\QTE$ under conditions in \Cref{prop:AN-QTE} and \Cref{asm:cor}.
    \end{lemma}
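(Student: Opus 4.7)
The plan is to reduce the studentized maximum $\overline{M}_{\cS}$ to a maximum of an average of $\iid$ mean-zero influence-function vectors, and then invoke the Chernozhukov--Chetverikov--Kato (CCK) high-dimensional central limit theorem together with its multiplier bootstrap counterpart \citep{chernozhukov2013gaussian}. The three main ingredients are (i) the linear expansion $\hat{\tau}_j-\tau_j = \PP_n\{\tilde\varphi_j\}+\varepsilon_j$ from \Cref{thm:lin-STE} (resp.\ \Cref{thm:lin-QTE}), (ii) the uniform consistency of the variance estimates from \Cref{prop:Var-STE} (resp.\ \Cref{prop:AN-QTE}), and (iii) the Gaussian anti-concentration inequality, which lets us absorb the residuals and the supremum over $\cS\subseteq\cA^*$ into the approximation error.

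First I would write
\[
\tfrac{\sqrt{n}(\hat\tau_j-\tau_j)}{\hat\sigma_j} \;=\; \tfrac{\sqrt{n}\,\PP_n\{\tilde\varphi_j\}}{\sigma_j} \;+\; \tfrac{\sqrt{n}\,\PP_n\{\tilde\varphi_j\}}{\sigma_j}\Bigl(\tfrac{\sigma_j}{\hat\sigma_j}-1\Bigr) \;+\; \tfrac{\sqrt{n}\,\varepsilon_j}{\hat\sigma_j}.
\]
Using $\min_{j\in\cA^*}\sigma_j\geq c_1^{1/2}$ from \Cref{asm:cor} and $\max_{j\in\cA^*}|\hat\sigma_j^2-\sigma_j^2|=\Op(r_\sigma)$ with $r_\sigma\lesssim \sqrt{(\log p)/n}+\vartheta$, the ratio $\sigma_j/\hat\sigma_j-1$ is $\Op(r_\sigma)$ uniformly; combined with $\max_j|\sqrt{n}\PP_n\{\tilde\varphi_j\}/\sigma_j|=\Op(\sqrt{\log p})$ (from Bernstein's inequality under bounded envelopes) and the residual bound $\max_j|\varepsilon_j|=\Op(\vartheta\sqrt{(\log p)/n}+(\log p)/n+\delta_m+\text{higher order})$, the combined error is $\Op(\sqrt{\log(pn)}\cdot(\vartheta\log(pn)+\sqrt{n}\,\delta_m))=\op(\log(pn)^{-1/2})$ by the hypothesis $\max\{\log(pn)^2\vartheta,\sqrt{n\log(pn)}\,\delta_m\}\leq C_2 n^{-c_2}$. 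This reduction step is essentially the high-dimensional delta method.

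Next I would apply Proposition 2.1 / Theorem 3.2 of \citet{chernozhukov2013gaussian} to the linearized maximum $\max_{j\in\cS}|\sqrt{n}\PP_n\{\tilde\varphi_j/\sigma_j\}|$. The bounded-envelope condition supplies the requisite Lyapunov-type moment bounds (condition $M_n\lesssim 1$), and \Cref{asm:cor} ensures $\min_{j\in\cS}\Var(\tilde\varphi_j/\sigma_j)\geq 1$ and $\max_{j_1\ne j_2}|\mathrm{Cor}|\leq 1-c_1$, so that the CCK conditions hold with $B_n\lesssim 1$; the rate $\log(pn)^7/n\leq C_2 n^{-c_2}$ yields the Kolmogorov-distance Gaussian approximation to a centered Gaussian with covariance $\bD_\cS^{-1}\bE_\cS\bD_\cS^{-1}$ where $\bE_\cS=\EE[\bvarphi_\cS\bvarphi_\cS^{\top}]$. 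Then I would invoke Theorem 3.2 of \citet{chernozhukov2013gaussian} (Gaussian-to-Gaussian comparison) with the uniform covariance-estimation bound $\max_{j_1,j_2}|\bE_\cS-\hat\bE_\cS|_\infty=\Op(r_\sigma)$ (an immediate extension of \Cref{lem:var} applied to products $\varphi_{j_1}\varphi_{j_2}$) to replace $\bE_\cS$ by $\hat\bE_\cS$ and $\bD_\cS$ by $\hat\bD_\cS$; together with the anti-concentration inequality \citep[Corollary 2.1]{chernozhukov2013gaussian}, the $\op(\log(pn)^{-1/2})$ perturbation from the previous paragraph is absorbed uniformly in $x\in\RR$.

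The main obstacle is the uniform supremum over all $\cS\subseteq\cA^*$, since there are $2^{|\cA^*|}$ such subsets, precluding a naive union bound. The key observation (and the reason the statement is true) is that every bound in the CCK Gaussian approximation, in the Gaussian comparison, and in the anti-concentration inequality depends on $\cS$ only through structural quantities $(M_n,B_n,|\cS|)$ that are monotone in $\cS$, so the worst case is already attained at $\cS=\cA^*$. Concretely, for each fixed $x$, $\PP(\overline{M}_\cS>x)$ and $\PP(\|\bg_\cS\|_\infty>x\mid\{\bZ_i\})$ are monotone in $\cS$, and the CCK rate for $\cS=\cA^*$ dominates; the anti-concentration inequality then ensures that perturbing the threshold by the previously established $\op(\log(pn)^{-1/2})$ amount affects the distribution function by $o(1)$ uniformly in both $x$ and $\cS$. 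Assembling these bounds and taking $m,n,p\to\infty$ yields the desired uniform-in-$\cS$ Gaussian approximation. The \QTE{} case is handled identically using the linear expansion from \Cref{thm:lin-QTE} and the variance bound in \Cref{prop:AN-QTE}, replacing $\vartheta^{\STE}$ and $r_\sigma^{\STE}$ by $\vartheta^{\QTE}$ and $r_\sigma^{\QTE}$ throughout.
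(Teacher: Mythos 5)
Your proposal is correct and follows essentially the same route as the paper: linearize the studentized statistic via \Cref{thm:lin-STE}/\Cref{prop:Var-STE}, apply the CCK Gaussian approximation and Gaussian-comparison results (the paper uses Corollary 2.1 and Theorem J.1 of \citet{chernozhukov2013gaussian}) together with anti-concentration to absorb the $\op(\log(pn)^{-1/2})$ residual, and obtain uniformity over $\cS\subseteq\cA^*$ because all error bounds depend on $\cS$ only through constants controlled at $\cS=\cA^*$. The only cosmetic difference is that your appeal to monotonicity of $\PP(\overline{M}_{\cS}>x)$ in $\cS$ is unnecessary (and does not by itself control the supremum of the difference); the operative fact, which you also state and which the paper uses, is that the CCK constants and the residual envelope $\max_{j\in\cS}|\epsilon_j'|\leq\max_{j\in\cA^*}|\epsilon_j'|$ are uniform in $\cS$.
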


    When $m$ is sufficiently large such that the error of derived outcomes $\delta_m$ is ignorable, the rate conditions in \Cref{prop:max-Gaussian} can be satisfied if the \emph{logarithm} of the numbers of hypotheses grows slower than $n^{\frac{1}{7}}\wedge\vartheta^{-\frac{1}{2}}$ for at least a polynomial factor of $n$.
    \Cref{prop:max-Gaussian} suggests that if $\cA_1\subseteq\cA^*$ only contains informative hypotheses, then the distribution of the maximal statistic $M_1=\max_{j\in {\cA}_1}|t_j|$ can be well approximated by $\bg_1\sim \cN(\zero, \bD_{n1}^{-1}\bE_{n1}\bD_{n1}^{-1})$, where $\bE_{n1} = n^{-1}\sum_{i=1}^n\hat{\bvarphi}_{i1}\hat{\bvarphi}_{i1}^{\top}$ is the sample covariance matrix with $\hat{\bvarphi}_{i1} = (\hat{\varphi}_{ij})_{j\in{\cA}_1} $ and $\bD_{n1} = \diag((\hat{\sigma}_j)_{j\in{\cA}_1})$ is the diagonal matrix of the estimated standard deviations.
    This allows us to simulate the null distribution efficiently using the multiplier bootstrap procedure.
    To generate $B$ bootstrap samples, for all $b=1,\ldots,B$, we first sample $n$ standard normal variables $\varepsilon_{11}^{(b)},\ldots,\varepsilon_{n1}^{(b)} \overset{\iid}{\sim}\cN(0,1)$ and then apply a linear transformation to obtain the multivariate normal vectors $\bg_1^{(b)} = (\sqrt{n}\bD_{n1})^{-1}\sum_{i=1}^n \varepsilon_{i1}^{(b)} \hat{\bvarphi}_{i1}$.
    It is easy to verify that $\bg_1^{(1)},\cdots,\bg_1^{(B)}\overset{\iid}{\sim} \cN(\zero, \bD_{n1}^{-1}\bE_{n1}\bD_{n1}^{-1})$ conditioned on $\{\bZ_i\}_{i=1}^n$.
    Based on the bootstrap samples, we can estimate the upper $\alpha$ quantile of $M_1$ by $\hat{q}_{1}(\alpha) = \inf \left\{ x \,\middle|\, B^{-1}\sum_{b=1}^B \ind\{\|\bg_1^{(b)}\|_{\infty} \leq x\} \geq 1 -\alpha \right\} $.
    To test multiple hypotheses $H_{0j}:\tau_j=\tau_j^*$ for $j\in\cA_1$, we reject those in the set $\hat{\cA} = \{j\in\cA_1\mid |t_j| > \hat{q}_1(\alpha) \}$.
    The next proposition shows that the informative hypotheses can be identified, and the family-wise error rate (FWER) can be controlled.

    \begin{proposition}[Type-I error control]\label{prop:simul-inference}
        For $(\tau_j,r_{\sigma})$ being $(\tau_j^\STE,r_{\sigma}^\STE)$ or $(\tau_j^\QTE,r_{\sigma}^\QTE)$, suppose conditions in \Cref{prop:max-Gaussian} hold .
        Let $\cV^*=\{j\mid H_{0j}\text{ is false},j=1,\ldots,p\}\cap \cA^*$ be the set of informative non-null hypotheses.
        If $\max\{r_{\sigma},\max_{j\in \cA^{*c}}\sigma_j^2\} = o(\cn)$, then as $m,n,p\rightarrow\infty$, it holds that $\lim\ \PP(\cA^* = {\cA}_1 ) = 1$ and $\limsup\  \PP(\hat{\cA}\cap \cV^{*c}\neq\varnothing) \leq \alpha$. 
    \end{proposition}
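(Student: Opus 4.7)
The plan is to decompose the event $\{\hat{\cA}\cap \cV^{*c}\neq\varnothing\}$ by conditioning on the screening success $\{\cA_1=\cA^*\}$. On this good event, $\hat{\cA}\subseteq\cA_1=\cA^*$ so $\hat{\cA}\cap\cA^{*c}=\varnothing$, and therefore
\[
\{\hat{\cA}\cap \cV^{*c}\neq\varnothing\}\cap\{\cA_1=\cA^*\} = \{M_{\cT^*}>\hat{q}_1(\alpha)\}\cap\{\cA_1=\cA^*\},
\]
where $\cT^*:=\{j\in\cA^*:H_{0j}\text{ is true}\}$ and $M_{\cT^*}=\max_{j\in\cT^*}|\sqrt{n}(\hat{\tau}_j-\tau_j)/\hat{\sigma}_j|$ as in \Cref{prop:max-Gaussian}; note that under $H_{0j}$ we have $\tau_j=\tau_j^*$, so the testing statistic $t_j$ coincides with the centered statistic used to define $M_{\cS}$. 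Hence it suffices to (i) show the screening event occurs with probability tending to one, and (ii) bound the probability that $M_{\cT^*}$ exceeds the bootstrap quantile by $\alpha$ asymptotically.

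\textbf{Screening step.} I would invoke \Cref{prop:Var-STE} (resp.\ \Cref{prop:AN-QTE}) to obtain the uniform bound $\max_{j\in[p]}|\hat{\sigma}_j^2-\sigma_j^2|=\Op(r_\sigma)$, which under the hypothesis $r_\sigma=o(\cn)$ is $o_P(\cn)$. For $j\in\cA^*$, the lower bound $\sigma_j^2\geq c_1$ then gives $\hat{\sigma}_j^2\geq c_1-o_P(\cn)$, so $\hat{\sigma}_j\geq \cn$ whp. For $j\in\cA^{*c}$, the assumption $\max_{j\in\cA^{*c}}\sigma_j^2=o(\cn)$ and the triangle inequality yield $\hat{\sigma}_j^2\leq o(\cn)+o_P(\cn)=o_P(\cn)$, so $\hat{\sigma}_j<\cn$ whp. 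The uniform-in-$j$ nature of the variance bound is essential since $p$ may diverge. Together these give $\PP(\cA_1=\cA^*)\to 1$.

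\textbf{Gaussian approximation and conclusion.} Specializing \Cref{prop:max-Gaussian} with $\cS=\cT^*\subseteq\cA^*$,
\[
\sup_{x\in\RR}\bigl|\PP(M_{\cT^*}>x) - \PP(\|\bg_{\cT^*}\|_\infty>x\mid \{\bZ_i\}_{i=1}^n)\bigr|\pto 0.
\]
On the screening event $\cA_1=\cA^*$, the coordinates indexed by $\cT^*$ form a subvector of $\bg_1\sim\cN(\zero,\hat{\bD}_{n1}^{-1}\hat{\bE}_{n1}\hat{\bD}_{n1}^{-1})$, so conditional on the data $\|\bg_{\cT^*}\|_\infty\leq \|\bg_1\|_\infty$ almost surely; by the construction of $\hat{q}_1(\alpha)$ this gives $\PP(\|\bg_{\cT^*}\|_\infty>\hat{q}_1(\alpha)\mid\text{data})\leq \alpha$. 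Combining with the Gaussian approximation and the screening step,
\[
\PP(\hat{\cA}\cap \cV^{*c}\neq\varnothing)\leq \PP(\cA_1\neq\cA^*) + \PP(M_{\cT^*}>\hat{q}_1(\alpha)) \leq o(1)+\alpha + o(1),
\]
which yields the claim after taking $\limsup$.

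\textbf{Main obstacle.} The main subtlety is the screening step: translating the uniform $L_\infty$ bound on $\hat{\sigma}_j^2-\sigma_j^2$ into correct separation on both $\cA^*$ and $\cA^{*c}$ simultaneously, and doing so uniformly over a growing number of outcomes. Once the good event is established, the FWER analysis is straightforward because the uniform-over-$\cS$ Gaussian approximation in \Cref{prop:max-Gaussian} absorbs the data-dependence of $\cT^*$, and the subset-monotonicity $\cT^*\subseteq\cA^*$ makes the bootstrap quantile automatically valid (indeed mildly conservative) for the true-null maximum.
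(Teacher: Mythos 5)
Your proposal is correct and follows essentially the same route as the paper: exact recovery of $\cA^*$ via the uniform consistency of $\hat{\sigma}_j^2$ together with the separation conditions $r_\sigma=o(\cn)$ and $\max_{j\in\cA^{*c}}\sigma_j^2=o(\cn)$, followed by the Gaussian approximation and bootstrap-quantile consistency on the good event $\{\cA_1=\cA^*\}$. The only (immaterial) difference is that the paper simply enlarges the true-null maximum to $\overline{M}_{\cA^*}$ and applies the quantile lemma to the full informative set, whereas you keep the maximum over $\cT^*$ and use subset-monotonicity of the coupled Gaussian maximum; both yield the same $\limsup\leq\alpha$ bound.
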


    As suggested by \Cref{prop:simul-inference}, because the lower bound of informative variance in \Cref{asm:cor} is unknown, a slowly shrinking threshold $c_n$ is needed to recover the true candidate set $\cA^*$ and control the FWER.
    In practice, one can set $c_n$ as a small value, such as 0.01, to exclude uninformative tests.
    If lowly expressed genes have already been excluded, thresholding may not be necessary.

    \subsection{False discovery rate control}
    When $p$ is large, controlling for the false discovery proportion (FDP) or the false discovery rate (FDR) is more desirable to improve the powers when performing simultaneous testing.
    The FDP is the ratio of false positives to total discoveries, while the FDR is the expected value of the FDP.
    To control the FDP, we adopt the step-down procedure \citep{genovese2006exceedance} to test the sequential hypotheses,
    \[H_{0}^{(\ell)}: \forall\ j\in\cA_{\ell},\ \tau_j = \tau_j^*,\qquad\text{versus}\qquad H_a^{(\ell)}:\exists \ j\in\cA_{\ell},\ \tau_j \neq \tau_j^*, \qquad \ell=1,2,\ldots\]
    where $\cA_1,\cA_{2},\ldots$ is a sequence of nested sets.
    The proposed multiple testing method in \Cref{algo:multiple-testing} incorporates both the Gaussian multiplier bootstrap and step-down procedure, which aims to control the FDP exceedance rate $\FDPex:=\PP(\FDP>c)$, the probability that \FDP surpasses a given threshold $c$ at a confidence level $\alpha$. 
    This provides a strengthened control on \FDP and is asymptotically powerful, as shown in the following theorem.
    
    {\spacingset{1.}\footnotesize
    \begin{algorithm}[!t]
        \caption{Multiple testing on doubly robust estimation of treatment effects}\label{algo:multiple-testing}
        \begin{algorithmic}[1]
            \REQUIRE The estimated centered influence function values $\hat{\varphi}_{ij}$, the estimated variance $\hat{\sigma}_j^2$ for $i=1,\ldots,n$ and $j=1,\ldots,p$. The FDP exceedance threshold $c$ and probability $\alpha$, and the number of bootstrap samples $B$.
            
            \STATE Initialize the iteration number $\ell=1$, the candidate set ${\cA}_1=\{j\in[p]\mid \hat{\sigma}_j^2\geq\cn\}$, the set of discoveries $\cV_1= \varnothing$, and the statistic $t_j=\sqrt{n} (\hat{\tau}_j-\tau_j^*)/\hat{\sigma}_j$ for $j\in[p]$.
            
            \WHILE{not converge}

                \STATE Draw multiplier bootstrap samples $\bg_\ell^{(b)} = (\sqrt{n}\bD_{n\ell})^{-1}\sum_{i=1}^n \varepsilon_{i\ell}^{(b)} \hat{\bvarphi}_{i\ell}$, where $\varepsilon_{i \ell}^{(b)}$'s are independent samples from $\cN(0,1)$ for $i=1,\ldots,n$ and $b=1,\ldots,B$.

                \STATE Compute the maximal statistic $M_{\ell}= \max_{j\in{\cA}_{\ell}} |t_j|$.
                
                \STATE Estimate the upper $\alpha$-quantile of $M_{\ell}$ under $H_{0}^{(\ell)}:\forall\ j\in\cA_{\ell},\ \tau_j = \tau_j^*$ by
                \begin{align*}
                    \hat{q}_{\ell}(\alpha) &= \inf \left\{ x \,\middle|\, \frac{1}{B}\sum_{b=1}^B \ind\{\|\bg_\ell^{(b)}\|_{\infty} \leq x\} \geq 1 -\alpha \right\} .
                \end{align*}

                \IF{$M_{\ell} > \hat{q}_{\ell}(\alpha)$}
                    \STATE Set $j_{\ell}= \argmax_{j\in{\cA}_{\ell}}|t_j|$, ${\cA}_{\ell+1}=  {\cA}_{\ell}\setminus \{j_{\ell}\}$, and $\cV_{\ell+1} = \cV_{\ell} \cup \{ j_{\ell}\}$.
                \ELSE
                    \STATE Declare the treatment effects in ${\cA}_{\ell}$ are not significant and stop the step-down process.   
                \ENDIF
                
                \STATE $\ell\gets \ell+1$.
            \ENDWHILE
            \STATE Augmentation: Set $\cV$ to be the union of $\cV_{\ell}$ and the $\lfloor | \cV_{\ell}| \cdot c /(1 - c) \rfloor$ elements from ${\cA}_{\ell}$ with largest magnitudes of $|t_j|$.
            \ENSURE The set of discoveries $\cV$.
        \end{algorithmic}
    \end{algorithm}}

    \begin{theorem}[Multiple testing]\label{thm:FDPex}
        Under the conditions of \Cref{prop:simul-inference}, consider testing multiple hypotheses $H_{0j}: \tau_j = 0$ versus $H_{aj}:\tau_j\neq 0$ for $j = 1, \ldots , p$ based on the step-down procedure with augmentation.
        As $m,n,p\rightarrow\infty$, the set of discoveries $\cV$ returned by \Cref{algo:multiple-testing} satisfies that
        \begin{itemize}
            
            \item (\FDPex) $\limsup\PP(\FDP >c)\leq \alpha$ where $\FDP = |\cV\cap \cV^{*c}|/|\cV|$.

            \item (Power) $\PP(\cV^* \subset \cV) \to 1$ if $\min_{j\in \{j\in[p]\mid \tau_j\neq 0\}}|\tau_j| \geq c' \sqrt{\log(p)/n }$ for some constant $c'>0$.
        \end{itemize}
    \end{theorem}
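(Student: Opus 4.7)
The plan is to handle the two conclusions separately. For the $\FDPex$ claim I combine a step-down monotonicity argument with the Gaussian approximation in \Cref{prop:max-Gaussian} to bound the probability of any false discovery \emph{before} augmentation, and then absorb the augmentation step with an elementary counting argument. For the power claim I use a signal-to-noise comparison: under the assumed minimum-signal condition, $|t_j|$ exceeds the bootstrap critical value by a constant multiplicative factor for every $j \in \cV^*$, which forces the step-down to continue until no element of $\cV^*$ remains in the active set.

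For the $\FDPex$ part, I first condition on the high-probability event $\{\cA_1 = \cA^*\}$ from \Cref{prop:simul-inference}, and let $\cN^* := \cA^* \setminus \cV^*$ denote the informative true nulls. Let $L$ be the halting step and let $\ell_0$ be the first step (if any) at which a false discovery occurs, i.e.\ $j_{\ell_0} \in \cN^*$. Prior to step $\ell_0$, only elements of $\cV^*$ have been removed, so $\cA_{\ell_0} \supseteq \cN^*$, and since $j_{\ell_0} = \argmax_{j\in\cA_{\ell_0}}|t_j|$ belongs to $\cN^*$, we must have $M_{\ell_0} = \max_{j\in\cN^*}|t_j|$. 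The multiplier-bootstrap quantile is monotone in its index set: using the same bootstrap multipliers $\varepsilon_{i\ell}^{(b)}$ in the definitions of $\bg_{\cA_{\ell_0}}$ and $\bg_{\cN^*}$, one has $\|\bg_{\cA_{\ell_0}}\|_{\infty} \geq \|\bg_{\cN^*}\|_{\infty}$ almost surely, hence $\hat{q}_{\ell_0}(\alpha) \geq \hat{q}_{\cN^*}(\alpha)$. Therefore a false discovery forces $\max_{j\in\cN^*}|t_j| > \hat{q}_{\cN^*}(\alpha)$, and applying \Cref{prop:max-Gaussian} with $\cS = \cN^* \subseteq \cA^*$ bounds the probability of this event by $\alpha + o(1)$. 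On the complementary event $\{\cV_L \cap \cN^* = \varnothing\}$, augmentation appends $k := \lfloor |\cV_L|\, c/(1-c)\rfloor$ indices that may all be false, so $\FDP \leq k/(|\cV_L|+k) \leq c$ by direct arithmetic. Combining these gives $\limsup \PP(\FDP > c) \leq \alpha$.

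For the power claim, I again work on $\{\cA_1 = \cA^*\}$. The boundedness assumptions of \Cref{thm:lin-STE,thm:lin-QTE} propagated through \Cref{prop:Var-STE,prop:AN-QTE} ensure $\hat{\sigma}_j \leq C$ uniformly, so the minimum-signal condition yields $|\sqrt{n}\tau_j/\hat{\sigma}_j| \geq (c'/C)\sqrt{\log p}$ for every $j \in \cV^*$. The linear expansions $\hat{\tau}_j - \tau_j = \PP_n\{\tilde\varphi_j\} + \varepsilon_j$ together with \Cref{prop:max-Gaussian} give $\max_{j\in\cA^*}|\sqrt{n}(\hat{\tau}_j - \tau_j)/\hat{\sigma}_j| = \Op(\sqrt{\log p})$, while a standard maxima-of-Gaussians bound gives $\hat{q}_\ell(\alpha) = \Op(\sqrt{\log p})$ uniformly in $\ell$. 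Choosing $c'$ large enough ensures $|t_j| > \hat{q}_\ell(\alpha)$ uniformly over $j \in \cV^* \cap \cA_\ell$, so the stopping condition $M_\ell \leq \hat{q}_\ell(\alpha)$ cannot occur while $\cV^* \cap \cA_\ell \neq \varnothing$. At the termination step $L$ we therefore have $\cV^* \cap \cA_L = \varnothing$, i.e.\ $\cV^* \subseteq \cV_L \subseteq \cV$.

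The main obstacle is controlling $\hat{q}_\ell(\alpha)$ across the random, data-dependent sequence of sets $\cA_\ell$. The saving observation is the monotonicity of the bootstrap quantile in its underlying index set: this lets me lower-bound $\hat{q}_\ell(\alpha)$ by $\hat{q}_{\cN^*}(\alpha)$ for the $\FDPex$ argument, where \Cref{prop:max-Gaussian} applies directly because $\cN^* \subseteq \cA^*$, and upper-bound it by $\hat{q}_{\cA^*}(\alpha) = \Op(\sqrt{\log p})$ for the power argument. Isolating these two reductions is the heart of the proof; the remaining bookkeeping (handling $|\cV_L| = 0$ in the augmentation step, routine Gaussian tail estimates, and the propagation of the $\Op$ bounds from the Gaussian approximation) is standard.
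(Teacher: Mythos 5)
Your proof is correct and follows essentially the same route as the paper's: control the FWER of the step-down phase via the first-false-rejection argument together with the Gaussian/bootstrap approximation applied uniformly over subsets of $\cA^*$, absorb the augmentation step, and prove power by comparing the minimum signal strength to the $O(\sqrt{\log p})$ bootstrap critical values on the event $\cA_1=\cA^*$. The only differences are in bookkeeping: you make explicit the quantile monotonicity $\hat q_{\cA_{\ell_0}}(\alpha)\ge\hat q_{\cN^*}(\alpha)$ and the augmentation arithmetic $k/(|\cV_L|+k)\le c$, which the paper delegates respectively to the uniformity over $\cS\subseteq\cA^*$ in its quantile-approximation lemma and to Theorem 1 of \citet{genovese2006exceedance}.
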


    \Cref{thm:FDPex} extend previous results by \citet{belloni2018high} for many approximate means and by \citet{qiu2023unveiling} for IPW estimators to DR estimators.
    On the one hand, \citet{belloni2018high} directly imposes assumptions on the influence functions and linearization errors, while we need to analyze the effect of nuisance functions estimation for the doubly robust estimators.
    On the other hand, \citet{qiu2023unveiling} requires sub-Gaussian assumptions and $\sqrt{n}$-consistency of the maximum likelihood estimation for the propensity score to establish Gaussian approximation for their proposed statistics, which does not apply to our problem setups.

\section{Simulation}\label{sec:simu}

    We consider a simulation setting with $p=8000$ genes and generate an active set of genes $\cV^*=\cA^*\subset[p]$ with size 200.
    We draw covariates $\bW\in\RR^d$ with $\iid$ $\cN(0,1)$ entries and the treatment $A$ follows a logistic regression model with probability $\PP(A=1\mid \bW) = 1/ ( 1 + \exp(\one_d^{\top}\bW/(d+1)) )$.
    Then, we generate the counterfactual gene expressions.
    For a gene $j$, the single-cell gene expression $X_j(0)$ is drawn from a Poisson distribution with mean $\lambda_j = \exp(\bW^{\top}\bb_j) \in\RR$ where the entries of both the coefficients $\bb_j\in\RR^d$ with 1 as the first entry and the remaining entries independently drawn from $\cN(0,1/4)$.
    The gene expressions $X_j(1)$ for $j\not\in\cV^*$ are generated from the same model, while for gene $j\in\cV^*$, we consider two treatment mechanisms that favor the mean-based and quantile-based tests, respectively; see \Cref{app:subsubsec:simu} for more details about the data generating processes.

    Next, we draw $m$ observations $\bX_1(A),\ldots,\bX_m(A)$ independently, which are summed up as the overall gene expression $\tilde{\bY}(A)$.
    Then, the observed gene expression matrix is given by $\Xb = A\Xb(1) + (1-A)\Xb(0)$ and analogously $\tilde{\bY} = A\tilde{\bY}(1) + (1-A)\tilde{\bY}(0)$.
    We then draw $n$ independent observed samples $\{(A_i,\bW_i,\Xb_i, \tilde{\bY}_i)\}_{i=1}^n$.
    The parameters are set to be $d=5$, $m=100$, $n\in\{100,200,300,400\}$.
    For nuisance function estimation, we employ Logistic regression to estimate the propensity score and Poisson generalized linear model (GLM) with the log link to estimate the mean regression functions.
    For quantile-based methods, the initial estimators of the quantile and density are described in \Cref{app:subsec:est-QTE}.

    \begin{figure}[!t]
        \centering
        \includegraphics[width=0.8\textwidth]{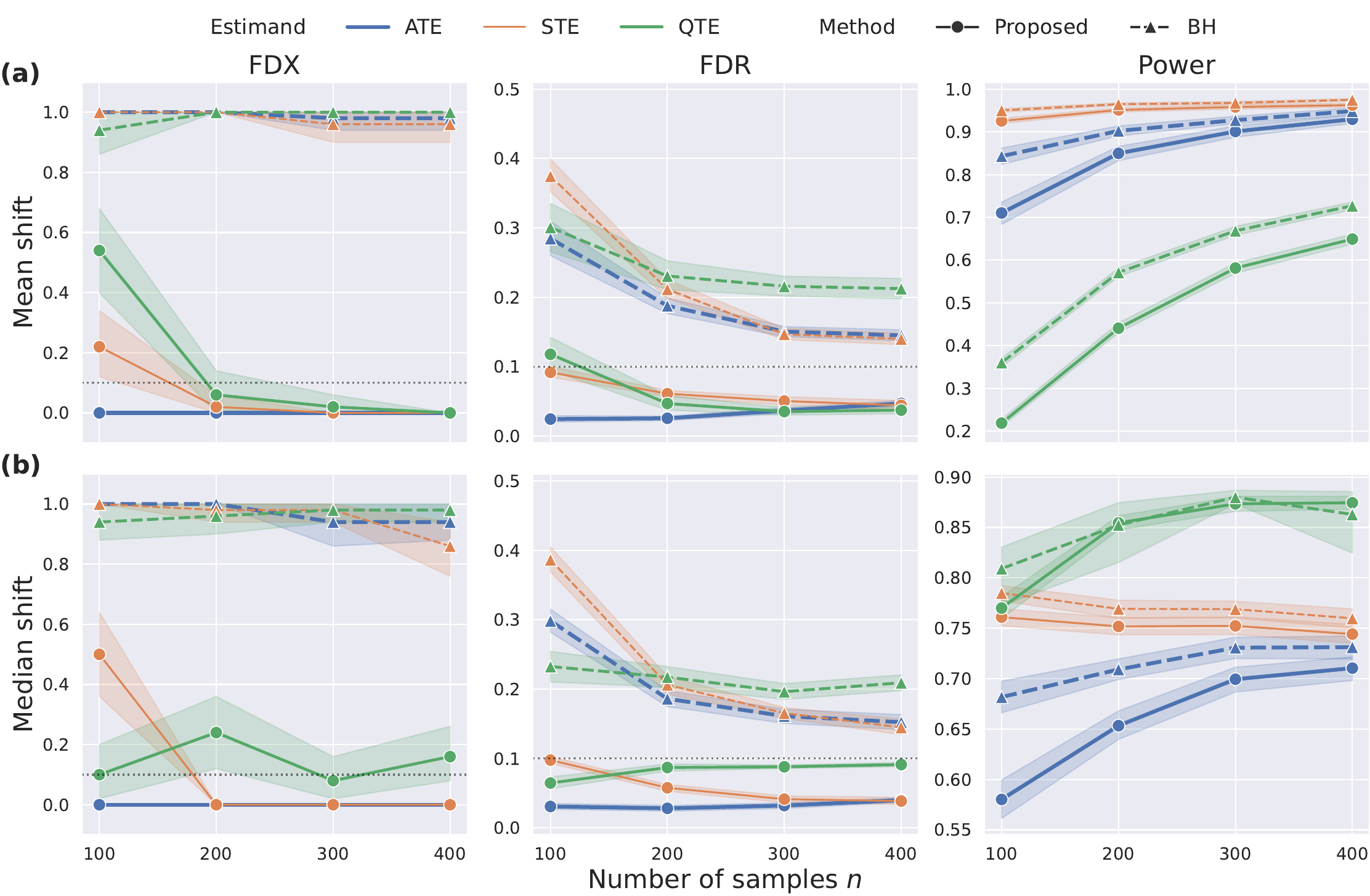}
        \caption{
        Simulation results of the hypothesis testing of $p=8000$ outcomes based on different causal estimands and FDP control methods for detecting differential signals under \textbf{(a)} mean shifts and \textbf{(b)} median shifts averaged over 50 randomly simulated datasets without sample splitting.        
        The gray dotted lines denote the nominal level of 0.1.}
        \label{fig:simu}
    \end{figure}
    
    To quantify the performance of different test statistics and multiple testing procedures, we compare the empirical \FDPex, \FDR, and power.
    We aimed to control \FDPex over 0.1 at 0.05, namely $\PP(\FDP > 0.1) \leq 0.05$.
    We also compared with the Benjamini-Hochberg (BH) procedure with targeting FDR controlled at $0.05$.
    The experiment results are summarized in \Cref{fig:simu} without sample splitting and in \Cref{fig:simu-split} of \Cref{app:subsubsec:simu} with cross-fitting.
    As shown in \Cref{fig:simu}, in the high signal-noise ratio (SNR) setting, the proposed method controls both \FDPex and \FDR at the desired level for all three causal estimands when the sample size is relatively large, i.e., $n>100$.
    On the other hand, the BH procedure fails to control the \FDPex and \FDP at any sample sizes because the p-values are not close to uniform distribution (see \Cref{fig:BH-stat}), though the gaps of \FDP become smaller when the sample size gets larger.
    Though the BH procedure has lower \FDP with sample splitting (see \Cref{fig:simu-split}), it still fails to control \FDPex for all estimands.
    This indicates that the proposed multiple testing procedure consistently outperforms the BH procedure by correctly accounting for the dependencies among the test statistics and providing valid statistical error control. 

    In the low SNR setting, we see that the quantile-based estimand has larger powers than mean-based tests, which is expected because of the designed data-generating process.
    Such a low SNR scenario is often encountered with scRNA-seq data.
    In this case, the proposed method still has better control of both \FDPex and \FDR compared to the BH procedure.
    Although the QTE test is slightly anti-conservative regarding \FDPex, it still controls the \FDR well.
    Furthermore, standardized tests are more powerful than unstandardized estimands, especially when the sample size is small.
    Overall, the results in \Cref{fig:simu} demonstrate the valid FDP control of the proposed multiple testing procedure for various causal estimands and suggest that testing based on different estimands could be helpful in different scenarios. In contrast, the commonly used BH procedure in genomics may be substantially biased due to the complex dependency among tests.

\section{Real data}\label{sec:real-data}

    \subsection{LUHMES data with CRISPR repression}\label{subsec:LUHMES-data}
        In this subsection, we aim to validate the proposed causal inference procedure with multiple outcomes on the single-cell CRISPR experiments.
        In this case, we directly measure one observation $\bY$ instead of many observations $\Xb$ per subject, simplifying the model.
        As a concrete data example, \citet{lalli2020high} employed a comprehensive single-cell functional genomics approach to unravel the molecular underpinnings of genes associated with neurodevelopmental disorders. 
        Utilizing a modified CRISPR-Cas9 system, they conducted gene suppression experiments on 13 genes linked to Autism Spectrum Disorder (ASD). 
        The experimental setup involved 14 groups of LUHMES neural progenitor cells, encompassing 13 treatment groups where each group had a specific single gene knockdown and one control group with no targeted gene suppression. 
        Single-cell RNA sequencing was applied to assess the resultant gene expression changes in response to each gene knockdown. 
        At varying stages of maturation, the cells classified as ``early'' or ``late'' resulted in 28 unique cell groups (14 groups and 2 maturation stages). 
        A critical scientific objective was to analyze and compare the transcriptional profiles across these diverse groups of neuron cells, thereby gaining insights into the transcriptional dynamics influenced by ASD-related gene modifications.

        We focus on the late-stage cells and perturbations related to 4 target genes \emph{PTEN}, \emph{CHD2}, \emph{ASH1L}, and \emph{ADNP} that are shown to be influential for neural development at this stage \citep[Figure 4E]{lalli2020high}.
        We include 6 covariates: the intercept, the logarithm of the library size, two cell cycle scores (`S.Score' and `G2M.Score'), the processing batch, and the pseudotime states (categorical with values in $\{4,5,6\}$).
        We compare gene expressions of each perturbation to the control group, and filter out genes expressed in less than 50 cells, where the resulting sample sizes are shown in \Cref{tab:LUHMES-data}.
        We aimed to control \FDPex over 0.01 at 0.05, namely $\PP(\FDP > 0.01) \leq 0.05$.
        The stringent threshold gives a reasonable number of discoveries that we can visually compare across ATE and STE tests.
        We didn't evaluate quantile-based tests in the current data analysis because of the sparse counts, which makes the quantile estimates of most of the genes exactly zero.

        \begin{figure}[!t]
            \centering
            \includegraphics[width=0.95\textwidth]{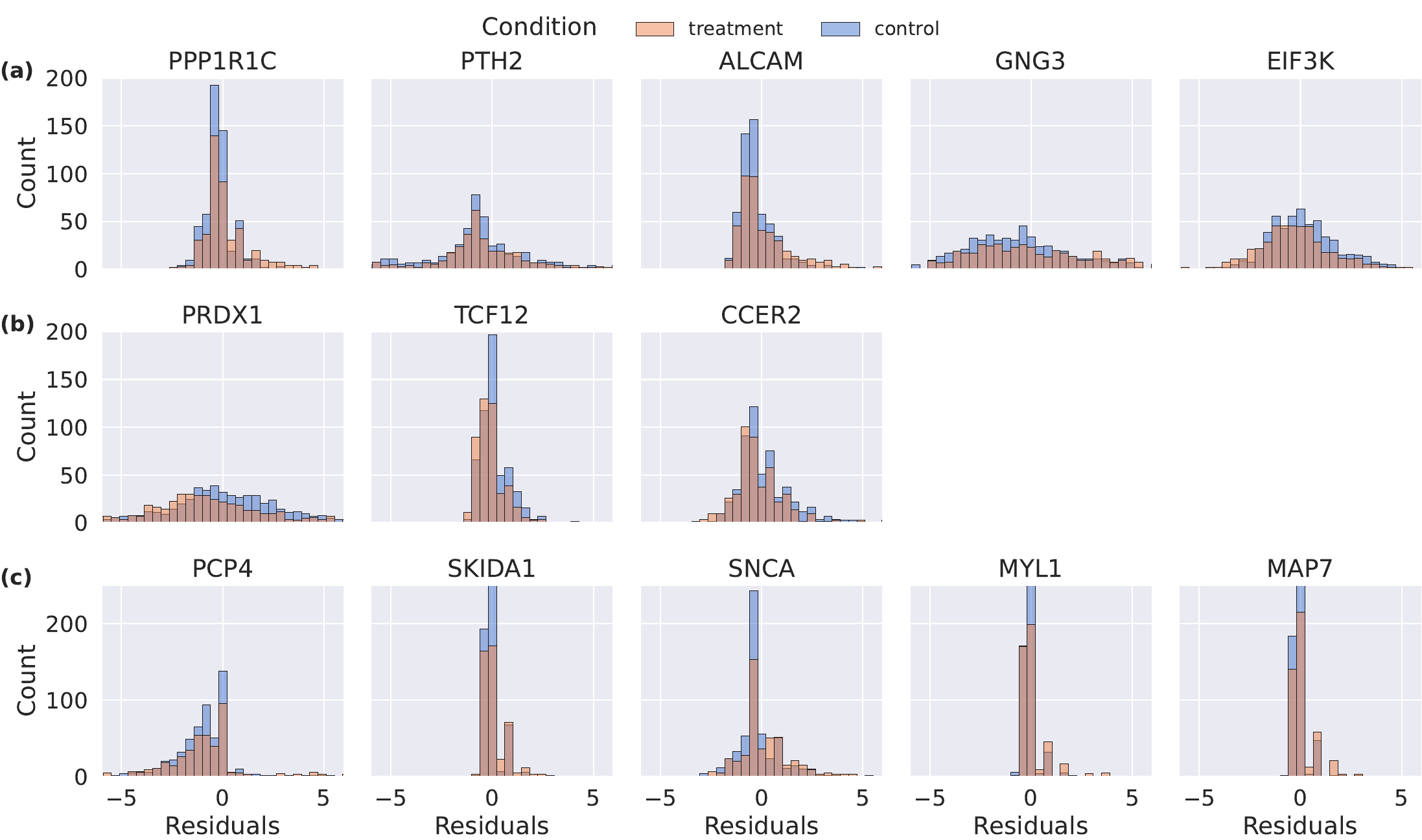}
            \caption{
            Histogram of covariate-adjusted residuals of gene expressions under different conditions (treatment with target gene \emph{PTEN} knockdown and control/non-targeting) in the late-stage samples of the LUHMES dataset.
            The residuals are obtained by fitting NB GLMs of the gene expressions on all covariates except the condition.
            Each row shows several examples of genes that are
            \textbf{(a)} significant for both ATE and STE tests;
            \textbf{(b)} significant for STE tests but insignificant for ATE tests;
            \textbf{(c)} significant for ATE tests but insignificant for STE tests.}
            \label{fig:LUHMES_PTEN_residuals}
        \end{figure}

        For perturbations targeting \emph{PTEN}, the ATE tests identify 32 genes as differentially expressed, while the STE tests identify 26 genes, and 23 genes are determined by both tests.
        The detailed results of all perturbations are summarized in \Cref{tab:LUHMES-DE-genes}.
        Next, we inspect the covariate-adjusted residuals of the gene expressions from parts of these discoveries.
        As shown in \Cref{fig:LUHMES_PTEN_residuals}, the significant genes for STE tests show a clear separation between the control and the perturbed groups, compared to those only significant for ATE tests.
        However, most of the significant genes for ATE tests exhibit zero-inflated expression in \Cref{fig:LUHMES_PTEN_residuals}(c).
        The ATE tests of these genes may be biased because of the bimodal and skew distributions.

        To further inspect the conjecture, we summarize the absolute differences of the covariate-adjusted residual mean and the mean/median of the GLM-fitted response, as well as the proportion of zero counts of the significant genes in \Cref{fig:LUHMES_PTEN_metric}.
        The common discoveries have a significant difference in the residual mean, which is expected.
        The ATE-only discoveries have a larger difference in the fitted mean, 
        although the difference in the median of the fitted responses is negligible.
        Because the proportion of zero counts for ATE-only discovered genes is also higher, it is evident that the ATE tests are biased because of the zero-inflated expressions.
        This suggests that STE tests provide more robust and reliable results than ATE tests in the presence of zero-inflated expressions.

        \begin{figure}[!t]
            \centering
            \includegraphics[width=0.9\textwidth]{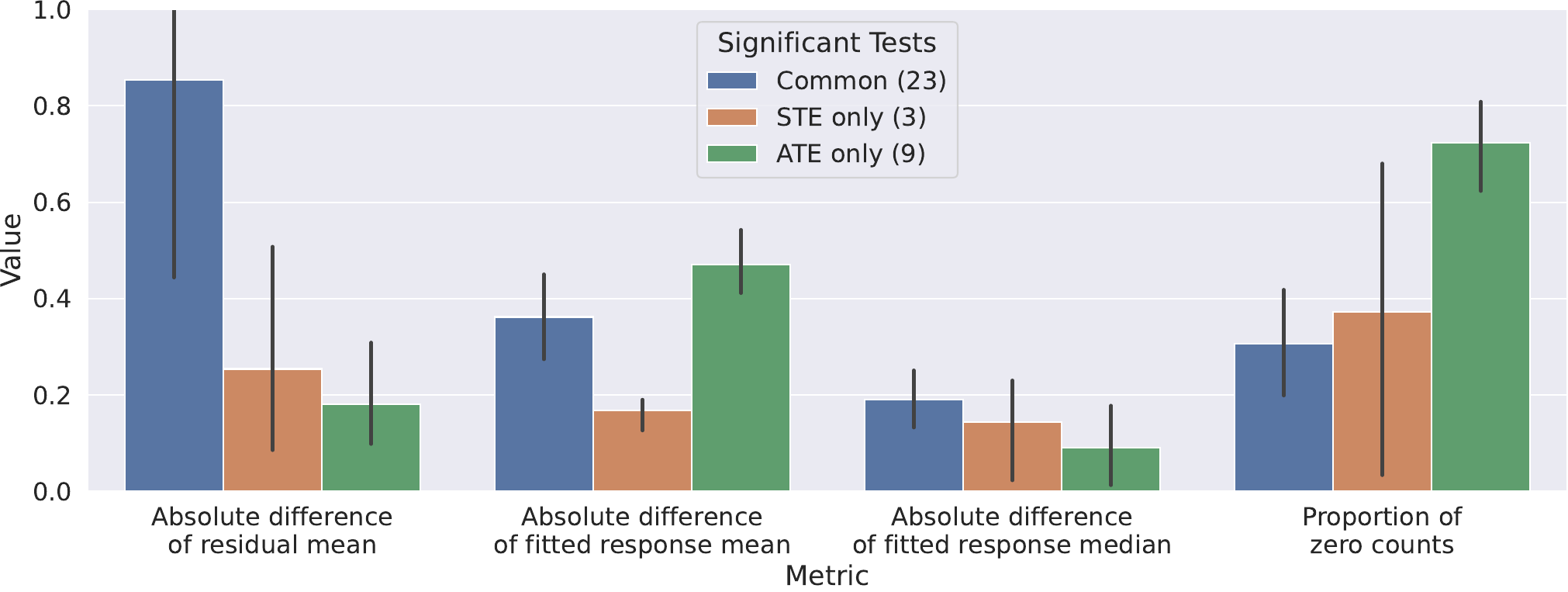}
            \caption{Summarized statistics for significant genes from ATE and STE tests under different conditions (treatment with target gene \emph{PTEN} knockdown and control/non-targeting) in the late-stage samples of the LUHMES dataset.
            The first three metrics show the absolute differences (between the treatment and the control group) of the covariate-adjusted residual mean and the mean and median of the GLM-fitted response, respectively.
            The last metric shows the proportion of zero counts of the significant genes.
            }
            \label{fig:LUHMES_PTEN_metric}
        \end{figure}

    \subsection{Lupus data}
        In this subsection, we consider a subject-level analysis of systemic lupus erythematosus (SLE) multiplexed single-cell RNA sequencing (mux-seq) dataset from \citet{perez2022single}.        
        The dataset includes 1.2 million peripheral blood mononuclear cells, spanning 8 major cell types from 261 participants, of whom 162 are SLE patients, and 99 are healthy individuals of either Asian or European descent.
        SLE is an autoimmune condition that primarily affects women and people with Asian, African, and Hispanic backgrounds.
        The objective of this cell-type-specific differential expression (DE) analysis is to enhance our understanding of SLE's diagnosis and treatment.
        We follow the preprocessing procedure in \citet{gcate2023} to aggregate the gene expressions of cells of each cell type in each subject, which serve as the derived outcomes for each cell type of each subject. 
        The resulting pseudo-bulk dataset contains five cell types T4, cM, B, T8, and NK, with numbers of subjects and genes $(n,p)$ being (256,1255), (256,1208), (254,1269), (256,1281), and (256,1178), respectively.
        For each subject, we also measure the SLE status (condition) and $d=6$ covariates: the logarithm of the library size, sex, population, and processing cohorts (4 levels).

    \begin{figure}[!t]
            \centering
            \includegraphics[width=0.9\textwidth]{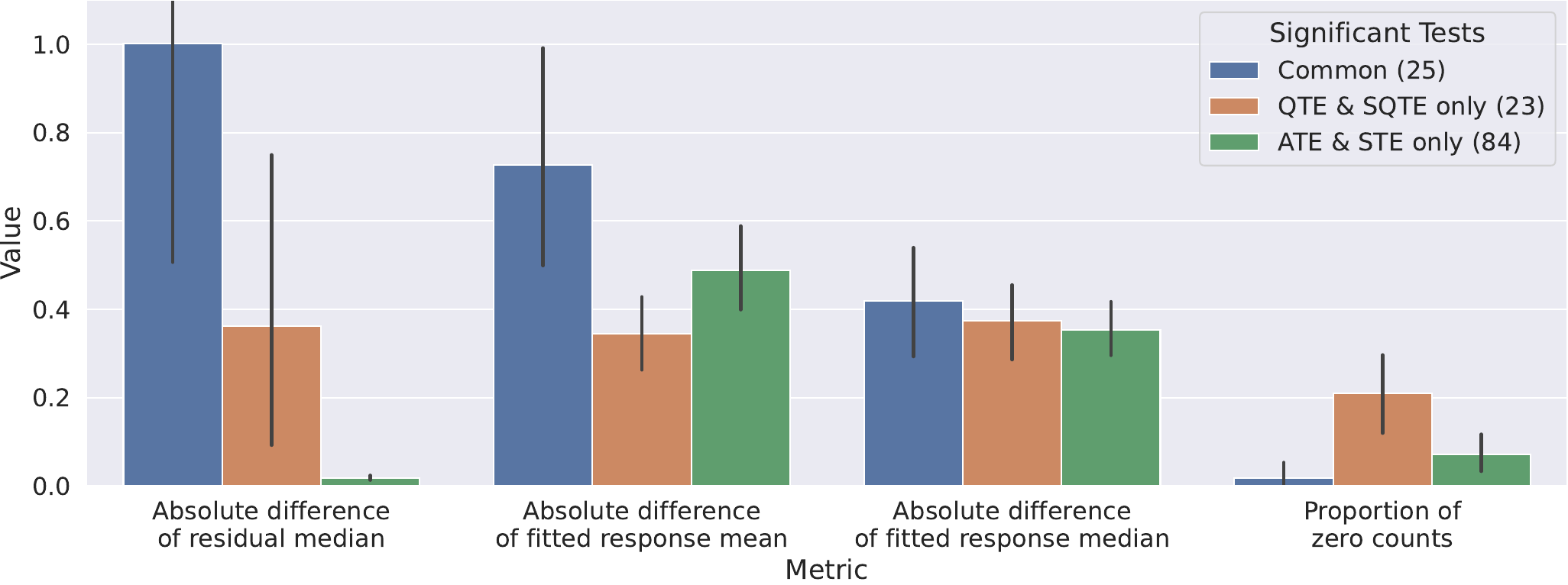}
            \caption{
            Summarized statistics for significant genes from mean-based tests (ATE and STE) and quantile-based tests (QTE and SQTE) under different conditions (case and control) in the T4 cell type of the lupus dataset.
            The first three metrics show the absolute differences (between the treatment and the control group) of the covariate-adjusted residual mean and the mean and median of the GLM-fitted response, respectively.
            The last metric shows the proportion of zero counts of the significant genes.
            }
            \label{fig:lupus-T4-metric}
        \end{figure}

        Similar to \Cref{subsec:LUHMES-data}, we aim to control the \FDPex rate over 0.1 at 0.05.
        The inference results are compared based on causal estimands. 
        Besides ATE, STE, and QTE, we also examine the standardized quantile treatment effects (SQTE), which equals the QTE normalized by the interquartile range of counterfactuals $\bY(0)$; see \Cref{app:subsubsec:SQTE} for precise definitions.               
        For the quantile-based methods, we exclude genes that have zero medians.
        The discoveries of different tests in the current pseudo-bulk analysis are summarized in \Cref{fig:lupus-upset-extra}.
        We observed that the standardized tests are generally more conservative than the unstandardized tests, as expected from our analysis in \Cref{subsec:LUHMES-data}.
        Most of the former discoveries are also parts of the latter.
        Furthermore, the ATE tests are anti-conservative, yielding dozens of extra discoveries that are insignificant for other estimands.
        Overall, the mean-based tests are much more powerful than the quantile-based tests.

        We visualize the summarized statistics in \Cref{fig:lupus-T4-metric} to further compare the mean-based and quantile-based tests.
        As expected, the quantile-based tests are more sensitive to the median differences between the two counterfactual distributions.        
        The results suggest that different causal estimands focus on different characteristics of the distributions, and one may get more reliable inferential results by focusing on the common discoveries by different tests.

\section{Discussion}

    This paper investigates semiparametric causal inference approaches for analyzing multiple derived outcomes arising from the increasingly popular study design of subject-level scRNA-seq data analysis.
    The doubly robust estimators for standardized and quantile treatment effects are proposed and analyzed to overcome the challenge of heterogeneous subjects and outcomes.
    Building on the Gaussian approximation results for the doubly robust estimators, we propose a multiple testing procedure that provably controls the false discovery rate and is asymptotically powerful.
    In simulation and real data analysis with single-cell count data, we use simple parametric and non-parametric models to estimate the nuisance functions and evaluate the multiple testing procedures.
    Notably, the proposed semiparametric inference framework allows one to incorporate more advanced machine learning and deep learning methods to obtain valid inference results.

    The present study, while contributing valuable insights, is not devoid of limitations.
    First, the doubly robust estimation can lead to bias if both nuisance functions are estimated using data-adaptive methods (e.g., machine learning) and only one is consistent \citep{van2014targeted, benkeser2017doubly}.
    Model misspecification is one of the main causes of the inconsistency, especially in the presence of high-dimensional conditional outcome models.
    The issue of misspecification itself is challenging for any statistical problem.
    In theory, if we have prior knowledge that the outcome models lie in a certain function class (e.g., the $\alpha$-H\"older functions), then our estimator allows a wide range of nonparametric methods to model nuisance functions and hence avoids model misspecification.
    In practice, the estimation of the nuisance functions requires prior knowledge.
    For genomics data, the gene expression counts are usually modeled by practitioners using generalized linear models, provided that the low-quality data are filtered beforehand \citep{sarkar2021separating}.
    For this reason, we mitigate the risk of misspecification of outcome models in the application by incorporating such prior knowledge.

    Second, we only consider STE and QTE as two specific examples, which may not be appropriate in all cases.
    For example, for lowly expressed genes of single-cell data with zero-inflation patterns, the null distribution is not unimodal, so any test that compares the location statistics is not ideal. 
    When more than half of the samples are zero for a specific gene, the mean/quantile regressions for computing the test statistics may be inaccurate, and this will affect the test results.
    To mitigate these challenges, one approach is to pre-screen and exclude genes with low expression levels, thereby reducing the impact of zero inflation. 
    Alternatively, one can also quantify treatment effects using other measures of the distribution.
    For instance, the fold change is the ratio of expected expressions under two conditions, whose sign indicates whether a gene is up-regulated or down-regulated.
    The proposed method naturally applies to these kinds of target estimands.
    Yet, the utility of different causal estimands in scientific discoveries is worth further exploration.

\section*{Acknowledgement}
    We thank Arun K. Kuchibhotla for insightful discussions of pivotal maximal inequality for finite maximums.
    This work used the Bridges-2 system at the Pittsburgh Supercomputing Center (PSC) through allocation MTH230011P from the Advanced Cyberinfrastructure Coordination Ecosystem: Services \& Support (ACCESS) program.  This project was funded by the National Institute of Mental Health (NIMH) grant R01MH123184.



\appendix
\counterwithin{theorem}{section}
\setcounter{table}{0}
\renewcommand{\thetable}{\thesection\arabic{table}}
\setcounter{figure}{0} 
\renewcommand\thefigure{\thesection\arabic{figure}}
\renewcommand{\thealgorithm}{\thesection\arabic{algorithm}}
\clearpage

\begin{center}
\Large
{\bf
Supplementary material
}
\end{center}

The appendix includes related work, the proof for all the theorems, computational details, and extra experiment results.
The structure of the appendix is listed below:
\bigskip

{\spacingset{1.1}
\begin{table}[!ht]
\centering
\begin{tabularx}{0.95\textwidth}{l l l }
    \toprule
    \multicolumn{2}{c}{\textbf{Appendix}} & \textbf{Content} \\
    \midrule \addlinespace[0.5ex] 
    \Cref{app:related-work} &  & Related work\\\addlinespace[0.5ex] \cmidrule(l){1-3}\addlinespace[0.5ex]
    \multirow{3}{*}{\Cref{app:sec:semi}} & \ref{app:subsec:emp} & Proof of \Cref{lem:emp-process-term} in \Cref{sec:semi-inf-multi-outcomes}\\
    & \ref{app:subsec:var} & Proof of \Cref{lem:var} in \Cref{sec:semi-inf-multi-outcomes}\\
    & \ref{app:subsec:semi-helper} & The helper \Cref{lem:maximal-ineq} used in the proof of \Cref{lem:emp-process-term,lem:var}\\\addlinespace[0.5ex] \cmidrule(l){1-3}\addlinespace[0.5ex]
    \multirow{2}{*}{\Cref{app:sec:setup}} & \ref{app:sec:iden} & Proof of \Cref{prop:identification} in \Cref{sec:setup}\\
    & \ref{app:subsec:iden-M} & Proof of \Cref{lem:identification-es} in \Cref{subsec:standarized-quantile} \\\addlinespace[0.5ex] \cmidrule(l){1-3}\addlinespace[0.5ex]
    \multirow{6}{*}{\Cref{app:sec:TE}} & \ref{app:subsec:iden-STE} & Proof of \Cref{lem:identification-STE} in \Cref{subsec:standarized-STE}\\
    & \ref{app:subsec:lin-STE} & Proof of \Cref{thm:lin-STE} in \Cref{subsec:standarized-STE}\\
    & \ref{app:subsec:Var-STE} & Proof of \Cref{prop:Var-STE} in \Cref{subsec:standarized-quantile}\\
    & \ref{app:subsec:lin-QTE} & Proof of \Cref{thm:lin-QTE} in \Cref{subsec:standarized-quantile}\\
    & \ref{app:subsec:AN-QTE} & Proof of \Cref{prop:AN-STE} in \Cref{subsec:standarized-quantile}\\
    & \ref{app:subsec:TE-helper} & Helper lemmas: \Cref{lem:ATE,lem:err-omega}\\\addlinespace[0.5ex] \cmidrule(l){1-3}\addlinespace[0.5ex]
    \multirow{4}{*}{\Cref{app:sec:inference}} & \ref{subsec:max-Gaussian} & Proof of \Cref{prop:max-Gaussian} \\
     & \ref{app:subsec:simul-inference} & Proof of \Cref{prop:simul-inference} \\
     & \ref{app:subsec:FDPex} & Proof of \Cref{thm:FDPex}\\
     & \ref{app:subsec:multiple-test-helper} & Helper lemmas: \Cref{lem:max-Gaussian-quantile} for quantile approximation\\
     \addlinespace[0.5ex] \cmidrule(l){1-3}\addlinespace[0.5ex] 
    \multirow{2}{*}{\Cref{app:sec:experiment}} & \ref{app:subsec:est-QTE} & Initial estimators for QTE. \\
     & \ref{app:subsec:extra-experiment} & Extra experimental results. \\
    \addlinespace[0.5ex] \arrayrulecolor{black}
    \bottomrule
\end{tabularx}
\vspace*{-1.7\baselineskip}
\end{table}
}
\addcontentsline{toc}{part}{\appendixname}

\clearpage
\section{Related work}\label{app:related-work}
In the context of causal inference, assessing causal effects on multiple outcomes requires accounting for the association among outcomes \citep{sammel1999multivariate}.
Earlier work on this problem relies on outcome modeling approaches based on linear mixed models or latent variable models
\citep{thurston2009bayesian,teixeira2011statistical,lupparelli2020joint,gcate2023} and scaled linear models \citep{lin2000scaled,roy2003scaled}.
\citet{pocock1987analysis,yoon2011alternative} study hypothesis testing of the treatment effects with the adjustment for multiplicities.
\citet{mattei2013exploiting,mealli2013using,mercatanti2015improving,mealli2016identification} use multiple outcomes, coupled with conditional independence assumptions, to address identification problems in causal studies with intermediate/post-treatment variables.
In most of the aforementioned work, the number of outcomes is typically assumed to be low-dimensional.

The focus on multiple outcomes also shifts from outcome modeling to more general setups.
\citet{flanders2015general} propose a general definition of causal effects, showing how it can be applied in the presence of multivariate outcomes for specific sub-populations of units or vectors of causal effects.
For randomized experiments, \citet{li2017general} establish finite population central limit theorems in completely randomized experiments where the response variable may be multivariate and causal estimands of interest are defined as linear combinations of the potential outcomes.
For observational studies with derived outcomes, 
Recently, \citet{qiu2023unveiling} propose an inverse probability weighting estimator for testing multiple average treatment effects (ATEs), which relies on the correct specification of the propensity score model and fast convergence rate of the propensity score estimation.

Although ATE is the most fundamental and popular causal estimand \citep{imbens2004nonparametric,tsiatis2006semiparametric}, other estimands could be more robust to quantify the treatment effect between the counterfactual distributions.
In the canonical setting with a single outcome, \citet{athey2021semiparametric} explore the application of quantile methodologies to estimate the overall treatments within the context of randomized trials with heavy-tailed outcome distributions. 
For studies based on observational data, \citet{belloni2017program} and \citet{kallus2019localized} introduce localized debiased machine learning techniques for quantile treatment effects (QTEs), which incorporates multiple sample partitioning. 
Concurrently, \citet{chakrabortty2022general} study the estimation of QTEs within a semi-supervised framework. 
For a comprehensive overview of QTE estimation literature, the reader is directed to the references contained therein.
The above methods require sample splitting and/or metric entropy conditions to validate the asymptotic normality of the proposed estimators, even for a single outcome.

For the analysis of multiple outcomes extending beyond ATEs, \citet{kennedy2019estimating} propose DR estimators designed to evaluate both scaled average treatment effects and scaled quantile effects. 
These estimands are used to rigorously test the global hypothesis that all treatment effects are equal. 
However, the asymptotic properties of the quantile-based estimators are not analyzed.
Further, they only consider a low-dimensional set of outcomes.
In contrast, the current paper focuses on high-dimensional settings when the number of outcomes could be potentially exponentially larger than the sample size.
This also requires correctly addressing the issue of multiple hypothesis testing.

\section{Proof in \Cref{sec:semi-inf-multi-outcomes}}\label{app:sec:semi}
    \subsection{Proof of \Cref{lem:emp-process-term}}\label{app:subsec:emp}

    \begin{proof}[Proof of \Cref{lem:emp-process-term}]        
        Denote the scaled empirical process $\sqrt{n}(\PP_n-\PP)$ by $\GG_n$.
        By \Cref{lem:maximal-ineq}, it follows that
        \begin{align*}
            \EE\left[\max_{j=1,\ldots,p}\ |\GG_ng_j | \;\middle|\; g_1, \dots g_p \right] &\lesssim \sqrt{\log p} \max_{1\leq j\leq p} \|g_j\|_{L_2} + \frac{(\log p)^{1-1/q}}{n^{1/2 - 1/q}} \|G\|_{L_q}.
        \end{align*}
        Dividing $\sqrt{n}$ on both sides finishes the proof.
    \end{proof}

    \begin{remark}
        We note that \Cref{lem:emp-process-term} also provides a probabilistic bound:    
        \begin{align*}
            \max_{j=1,\ldots,p}\ |(\PP_n-\PP)g_j | = \Op\left( \left(\frac{\log p}{n}\right)^{1/2}\max_{1\leq j\leq p} \|g_j\|_{L_2} + \left(\frac{\log p}{n}\right)^{1 - 1/q} \|G\|_{L_q} \right),
        \end{align*}
        by applying Markov inequality on the non-negative random variable $\max_{j=1,\ldots,p}\ |(\PP_n-\PP)g_j |$.
    \end{remark}

    \begin{remark}[Donsker condition]\label{rmk:donsker}
    Without an independent sample, similar bounds on the empirical process term can still be derived, provided certain complexity measures of the function class $\cF_j$ that $g_j$ belongs to are properly bounded. 
    The complexity measure is related to the covering number of $\cF_j$ under the $L_2$ norm induced by distribution $Q$, denoted by $N\left(\varepsilon, \cF_j, L_2(Q)\right) $.
    In particular, if for some envelope function $F$,
    \[\int_0^{\infty} \sup_Q\sqrt{\log N\left(\varepsilon\|F\|_{Q,2}, \cF_j, L_2(Q)\right)} \rd \varepsilon < \infty,\]
    and the supreme is taken over all probability distributions $Q$ on $\cZ$, then $\cF_j$ is $P$-Donsker for every probability measure $P$ such that $P^*F^2<\infty$ under certain measurability conditions; see \citet[Theorem 2.5.2]{vaart1996weak} for the exact conditions.
    To control single empirical process terms, a sufficient condition is that each $\cF_j$ has a polynomial covering number:
    \[
        \sup_Q \; N\left(\varepsilon,\, \cF_j,\, L_2(Q)\right) \leq\left(\frac{K}{\varepsilon}\right)^V,\qquad \forall\ 0<\epsilon<1,
    \]
    where $K,V>0$ are constants, which is similar to \citet[Assumption 3.4]{chakrabortty2022general}. 
    If further the functions $g_j = \varphi_j(\bZ;\hat{\PP}) - \varphi_j(\bZ;\PP)$ are uniformly bounded, by using a Bernstein-type tail bound on the empirical process (e.g., \citet[Theorem 2.14.9]{vaart1996weak}) and a union bound argument, we can obtain a similar upper bound on $\max_{j=1,\ldots,p}\ |(\PP_n-\PP)g_j |$ as in Lemma 1, with $\max_{1\leq j\leq p} \|g_j\|_{L_2}$ replaced by $\max_{1\leq j\leq p} \sup_{g \in \mathcal{F}_j} \|g\|_{L_2}$. 
    When the function classes $\cF_j=\cF$ for all $j$ are the same, this can be further improved to $\sup_{g \in \mathcal{F}} \|g\|_{L_2}$.
    However, verifying such assumptions on metric entropy in practice is challenging, and training the nuisance functions on an independent sample avoids this issue. Hence, in this paper, we use sample splitting instead of Donsker-type conditions.

    If the Donsker class condition is assumed, \Cref{thm:lin-STE,thm:lin-QTE} can be established without an independent sample.
    More specifically, the bias term $T_{{\rm R},j}$ in the decomposition \eqref{eq:decomposition} can be bounded under the same rate conditions on the product of two nuisance estimations in \Cref{thm:lin-STE,thm:lin-QTE}.
    This does not rely on independent sample splitting.
    Sample splitting is mainly used to control the empirical process term $T_{{\rm E},j}$ via \Cref{lem:emp-process-term} for the proof of \Cref{thm:lin-STE,thm:lin-QTE}, as described in Section 2.
    \end{remark}

    \subsection{Proof of \Cref{lem:var}}\label{app:subsec:var}
    
    \begin{proof}[Proof of \Cref{lem:var}]          Below, we condition on the event when conditions (1) and (2) hold.
        We begin by decomposing the error into two terms:
        \begin{align*}
            \hat{\sigma}_j^2-\sigma_j^2   &=  \VV_n(\hat{\varphi}_j) - \Var[\varphi_j]\\
            &=  \PP_n[(\hat{\varphi}_j - \PP_n[\hat{\varphi}_j])^2] - \PP[(\varphi_j - \PP[\varphi_j])^2] \\
            &=  \PP_n[\hat{\varphi}_j^2] - \left( \PP_n [\hat{\varphi}_j] \right)^2 - \PP[\varphi_j ^2] +\left( \PP[\varphi_j] \right)^2\\
            &= \PP_n [\hat{\varphi}_j^2 - \varphi_j^2]+ (\PP_n - \PP) [\varphi_j^2] - \left( \PP_n [\hat{\varphi}_j] - \PP[\varphi_j]\right) \left(\PP_n [\hat{\varphi}_j ] + \PP[\varphi_j] \right)\\
            &= T_{1j} + T_{2j} + T_{3j}.
        \end{align*}
        For the first term, we have
        \begin{align*}
            \max_{1\leq j\leq p}|T_{1j}| = &\,  \max_j \left|\PP_n [(\hat{\varphi}_j - \varphi_j)(\hat{\varphi}_j + \varphi_j)] \right| \lesssim  \max_j \left|\PP_n [|\hat{\varphi}_j - \varphi_j|] \right|,
        \end{align*}
        where we use the boundedness of $(\hat{\varphi}_j+\varphi_j)$'s envelope.
        
        For the third term, we have
        \begin{align*}
            \max_{1\leq j\leq p}|T_{3j}| \lesssim &\,  \max_j \left|( \PP_n-\PP) [\varphi_j] \right| +  \max_j \left|( \PP_n-\PP) [\hat{\varphi}_j - \varphi_j] \right| + \max_j |\PP[\hat{\varphi}_j - \hat{\varphi}_j]|.
        \end{align*}
        Combining the above results yields that
        \begin{align*}
            \max_{1\leq j\leq p}|\hat{\sigma}_j^2-\sigma_j^2  | \lesssim &\,  \max_j \left|( \PP_n-\PP) [ \hat{\varphi}_j - \varphi_j ] \right| + \max_j \left|( \PP_n-\PP) [ |\hat{\varphi}_j - \varphi_j| ] \right|\\
            &\,+ \max_{k=1,2} \max_j \left|( \PP_n-\PP) [\varphi_j^k] \right| + \max_j \PP[|\hat{\varphi}_j - \varphi_j|] ,
        \end{align*}    
        with probability tending to one.

        Define $\Psi = \max_{1\leq j\leq p} |\hat{\varphi}_j - \varphi_j|$ and $\Phi = \max_{1\leq j\leq p} |\varphi_j|$.
        By \Cref{lem:emp-process-term}, it follows that
        \begin{align*}
            \max_{1\leq j\leq p}|\hat{\sigma}_j^2-\sigma_j^2  | \lesssim &\,  \left(\frac{\log p}{n}\right)^{1/2}\max_j \|\hat{\varphi}_j - \varphi_j\|_{L_2} + \left(\frac{\log p}{n}\right)^{1-1/q} \|\Psi\|_{L_q} \\
            &\qquad + \left(\frac{\log p}{n}\right)^{1/2} \max_{k=1,2} \max_j \|\varphi_j^k\|_{L_2} + \left(\frac{\log p}{n}\right)^{1-1/q}\max_{k=1,2}\|\Phi^k\|_{L_q} \\
            &\qquad + \max_j \|\hat{\varphi}_j - \varphi_j\|_{L_1}\\
            &= \left(\frac{\log p}{n}\right)^{1/2}\max_j \left(\|\hat{\varphi}_j - \varphi_j\|_{L_2} + \max_{k=1,2} \|\varphi_j^k\|_{L_2} \right) + \max_j \|\hat{\varphi}_j - \varphi_j\|_{L_1} \\
            &\qquad + \left(\frac{\log p}{n}\right)^{1-1/q} \left(\|\Psi\|_{L_q} + \max_{k=1,2}\|\Phi^k\|_{L_q}\right),
        \end{align*}
        which holds with probability at least $1-n^{-c}$.
    \end{proof}

    \subsection{Helper lemmas}\label{app:subsec:semi-helper}    
    \begin{lemma}[Maximal inequality adapted from Proposition B.1 of \citet{kuchibhotla2022least}]\label{lem:maximal-ineq}
        Let $\bX_1, \ldots, \bX_n$ be mean zero independent random variables in $\RR^p$ for $p \geq 1$. Suppose there exists $q\in\NN$ such that for all $i \in\{1, \ldots, n\}$
        $$
        \EE\left[\xi_i^q\right]<\infty \quad \text { where } \quad \xi_i:=\max _{1 \leq j \leq p}\left|X_{i, j}\right|
        $$
        and $\bX_i:=\left(X_{i, 1}, \ldots, X_{i, p}\right)^{\top}$. If $V_{n, p}:=\max _{1 \leq j \leq p} \sum_{i=1}^n \mathbb{E}\left[X_{i, j}^2\right]$, then
        $$
        \EE\left[\max _{1 \leq j \leq p}\left|\sum_{i=1}^n X_{i, j}\right|\right] \leq \sqrt{6 V_{n, p} \log (1+p)}+\sqrt{2}(3 \log (1+p))^{1-1 / q}\left(2 \sum_{i=1}^n \EE\left[\xi_i^q\right]\right)^{1 / q}.
        $$
    \end{lemma}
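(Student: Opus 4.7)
Since the stated bound has two terms, a sub-Gaussian-flavor piece $\sqrt{V_{n,p}\log(1+p)}$ and a heavy-tail piece involving $\sum_i \EE[\xi_i^q]$, the natural strategy is a truncation-plus-symmetrization argument, optimized over the truncation level. Writing $S_j = \sum_{i=1}^n X_{i,j}$, the plan is first to symmetrize via independent Rademacher variables $\varepsilon_1,\ldots,\varepsilon_n$, so that
\begin{equation*}
\EE\Bigl[\max_{1\leq j\leq p}\,|S_j|\Bigr] \;\leq\; 2\,\EE\Bigl[\max_{1\leq j\leq p}\,\Bigl|\sum_{i=1}^n \varepsilon_i X_{i,j}\Bigr|\Bigr],
\end{equation*}
which lets me work conditionally on $(X_{i,j})$ with symmetric summands. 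Then, for a truncation level $M>0$ to be chosen, I decompose $X_{i,j} = X_{i,j}\ind\{\xi_i\leq M\} + X_{i,j}\ind\{\xi_i>M\}$.

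For the truncated part, after conditioning on $(X_{i,j})$, each $\varepsilon_i X_{i,j}\ind\{\xi_i\leq M\}$ is bounded by $M$ and has conditional second moment at most $X_{i,j}^2$. A standard Chernoff/MGF bound (e.g. Hoeffding's lemma for bounded symmetric variables, $\EE[e^{\lambda \varepsilon_i X_{i,j}\ind\{\xi_i\leq M\}}\mid X] \leq \exp(\lambda^2 X_{i,j}^2/2)$) together with a union bound over $j=1,\ldots,p$ and the identity $\EE[\max_j Z_j]\leq \lambda^{-1}\log(p \cdot \EE[e^{\lambda Z_j}])$ yields, after optimizing $\lambda$,
\begin{equation*}
\EE\Bigl[\max_j \Bigl|\sum_i \varepsilon_i X_{i,j}\ind\{\xi_i\leq M\}\Bigr|\Bigr] \;\lesssim\; \sqrt{V_{n,p}\log(1+p)} \;+\; M\log(1+p),
\end{equation*}
where the first term uses $\sum_i \EE[X_{i,j}^2\ind\{\xi_i\leq M\}]\leq V_{n,p}$. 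For the untruncated remainder, I bound crudely by absolute values and use $|X_{i,j}|\ind\{\xi_i>M\}\leq \xi_i\ind\{\xi_i>M\}\leq M^{1-q}\xi_i^q$:
\begin{equation*}
\EE\Bigl[\max_j\Bigl|\sum_i \varepsilon_i X_{i,j}\ind\{\xi_i>M\}\Bigr|\Bigr] \;\leq\; \sum_i \EE[\xi_i\ind\{\xi_i>M\}] \;\leq\; M^{1-q}\sum_i \EE[\xi_i^q].
\end{equation*}

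The final step is to balance the $M\log(1+p)$ bound against $M^{1-q}\sum_i \EE[\xi_i^q]$ by choosing $M = \bigl(\sum_i \EE[\xi_i^q]/\log(1+p)\bigr)^{1/q}$, which produces the term $(\log(1+p))^{1-1/q}(\sum_i \EE[\xi_i^q])^{1/q}$ with the constants stated in the lemma. The main obstacle is purely bookkeeping: tracking the absolute constants so that they match the $\sqrt{6}$ and $\sqrt{2}\cdot 3^{1-1/q}\cdot 2^{1/q}$ appearing in the statement, which forces a careful application of Hoeffding's MGF bound (and the exact form of the union bound) rather than a slack $\lesssim$ argument. Since a version of this maximal inequality is already established in \citet[Proposition B.1]{kuchibhotla2022least}, the cleanest route is to invoke that result and verify that the constants appearing here are what the cited proof delivers.
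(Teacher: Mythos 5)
The paper does not actually prove this lemma: it is stated in the helper-lemma subsection purely as an import, ``adapted from Proposition B.1 of \citet{kuchibhotla2022least},'' with no argument given. So your closing recommendation --- invoke the cited proposition and check that the constants match --- is exactly what the paper does, and for the purposes of this paper that is the correct and complete ``proof.''

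Your from-scratch sketch is the standard route and would work, but one step deserves more care than you give it. After symmetrization and conditioning on the data, the Hoeffding-MGF/union-bound computation controls $\max_j|\sum_i\varepsilon_iX_{i,j}\ind\{\xi_i\le M\}|$ in terms of the \emph{random} quantity $\max_j\sum_i X_{i,j}^2\ind\{\xi_i\le M\}$, not the deterministic $V_{n,p}=\max_j\sum_i\EE[X_{i,j}^2]$. Passing from $\EE[(\max_j\sum_iX_{i,j}^2)^{1/2}]$ to $\sqrt{V_{n,p}}$ is not free --- the expectation of a maximum can exceed the maximum of expectations --- and closing that gap typically requires either a second maximal inequality for the squared variables or a contraction/recursion argument. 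A cleaner variant that avoids the issue entirely is to skip symmetrization, truncate at level $M$ first, and apply Bernstein's inequality directly to the independent bounded summands $X_{i,j}\ind\{\xi_i\le M\}-\EE[X_{i,j}\ind\{\xi_i\le M\}]$, whose variance proxy $\sum_i\EE[X_{i,j}^2\ind\{\xi_i\le M\}]\le V_{n,p}$ is deterministic; the recentering term is then controlled by the mean-zero assumption together with the same bound $\EE[|X_{i,j}|\ind\{\xi_i>M\}]\le M^{1-q}\EE[\xi_i^q]$ you already use, and the optimization over $M$ proceeds as you describe. Either way, matching the exact constants $\sqrt{6}$ and $\sqrt{2}\cdot 3^{1-1/q}\cdot 2^{1/q}$ is delicate enough that deferring to the cited proposition, as the paper does, is the right call.
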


\section{Identification conditions}\label{app:sec:setup}
    \subsection{Proof of \Cref{prop:identification}}\label{app:sec:iden}
    \begin{proof}[Proof of \Cref{prop:identification}]
        Note that
        \begin{align*}
            &  \EE(\tilde{\bY} \mid A=a, \bW) \\
            =& \EE(\tilde{\bY}(a) \mid A=a, \bW) \\
            =& \EE(\tilde{\bY}(a) \mid \bW) \\
            =& \EE[\EE[\tilde{\bY}(a) \mid \bW,\bS(a)]\mid \bW]\\
            =& \EE[\bY(a)\mid \bW] + \EE[\bDelta_m(a) \mid \bW] .
        \end{align*}
        where the first equality is from the consistency and positivity assumption, and the second equality is from the unmeasured confounder assumption.
        From the asymptotic unbiasedness as in \Cref{def:unbias}, it follows that
        \begin{align}
            \EE[\EE[\tilde{\bY} \mid A=a, \bW]] =  \EE[\bY(a)] + \EE[\bDelta_m(a) ] = \EE[\bY(a)] + \bo(1) ,\label{eq:prop:identification-eq-1}
        \end{align}
        where the little-o notation is with respect to $m$ and uniform in $p$.
        This completes the proof.
    \end{proof}

    \subsection{Proof of \Cref{lem:identification-es}}\label{app:subsec:iden-M}
    
    \begin{proof}[Proof of \Cref{lem:identification-es}]
        Define $\tilde{M}_j(\theta) = \EE[F_j(\tilde{Y}_j(a), \theta)]$.
        From the assumption, we have that
        \begin{align*}
            \max_{j\in[p]} |M_j(\theta) - \tilde{M}_j(\theta)| = \max_{j\in[p]} |\EE[\Delta_{mj}(a, \theta)]|.
        \end{align*}
        By Taylor expansion, we have
        \begin{align*}
            M_j(\tilde{\theta}_{aj}) - M_j(\theta_{aj}) &= M_j'(\bar{\theta}_{aj})(\tilde{\theta}_{aj} - \theta_{aj}).
        \end{align*}
        for some $\bar{\theta}_{aj}$ between $\tilde{\theta}_{aj}$ and $\theta_{aj}$.
        These two results imply that
        \begin{align*}
            \max_{j\in[p]}|\tilde{\theta}_{aj} - \theta_{aj}| &= \max_{j\in[p]}\frac{1}{|M_j'(\bar{\theta}_{aj})|} |M_j(\tilde{\theta}_{aj}) - M_j(\theta_{aj})| \\
            &\leq \frac{1}{c} \max_{j\in[p]}|M_j(\tilde{\theta}_{aj})|\\
            &=\frac{1}{c} \max_{j\in[p]}|\tilde{M}_j(\tilde{\theta}_{aj}) + {M}_j(\tilde{\theta}_{aj})-\tilde{M}_j(\tilde{\theta}_{aj})|\\
            & = \frac{1}{c} \max_{j\in[p]}|{M}_j(\tilde{\theta}_{aj})-\tilde{M}_j(\tilde{\theta}_{aj})|
        \end{align*}
        Similar to the proof of \Cref{prop:identification}, we have that
        \begin{align*}
            \tilde{M}_j({\theta}) &= \EE[\EE[F_j(\tilde{Y}_j,{\theta}) \mid A=a,\bW]] = M_{j}(\theta) + \EE[\Delta_{mj}(a,\theta)] .
        \end{align*}
        Thus we have
        \[
        \max_{j\in[p]}|\tilde{\theta}_{aj} - \theta_{aj}| \leq \frac{1}{c} \max_{j\in[p]}|\EE[\Delta_{mj}(a,\tilde{\theta}_{aj})]| \leq \frac{1}{c} \max_{j\in[p]} \sup_{\theta \in \cB(\theta_{aj}, \delta)}|\EE[\Delta_{mj}(a,{\theta})]| \lesssim \delta_m \rightarrow 0
        \]
        as $m \rightarrow \infty$.
    \end{proof}

\section{Doubly robust estimation}\label{app:sec:TE}
\subsection{Proof of \Cref{lem:identification-STE}}\label{app:subsec:iden-STE}

    \begin{proof}[Proof of \Cref{lem:identification-STE}]
        For all $i=1,2$, and $j=1,\ldots,p$, define
        \begin{align*}
            \psi_{aij} &= \EE[Y_j(a)^i],\qquad \tilde{\psi}_{aij} = \EE[\EE[\tilde{Y}_{j}^i \mid A=a, \bW]].
        \end{align*}
        From \Cref{prop:identification} we have that $\psi_{aij} = \tilde{\psi}_{aij} + o(1)$ as $m\rightarrow\infty$.
        It follows that
        \begin{align*}
            \tilde{\tau}_j &= \frac{\EE[\EE(\tilde{Y}_j \mid A=1, \bW)] - \EE[\EE(\tilde{Y}_j \mid A=0, \bW)]}{\sqrt{ \EE[\EE(\tilde{Y}_j^2 \mid A=0, \bW)] - \EE[\EE(\tilde{Y}_j \mid A=0, \bW)]^2 }} \\
            &= \frac{\tilde{\psi}_{11j} - \tilde{\psi}_{01j}}{\sqrt{ \tilde{\psi}_{02j} - \tilde{\psi}_{01j}^2 }} \\
            &= \frac{\psi_{11j} - \psi_{01j} + o(1)}{\sqrt{ \psi_{02j} - \psi_{01j}^2 + o(1)} }\\
            &= \frac{\psi_{11j} - \psi_{01j} }{\sqrt{ \psi_{02j} - \psi_{01j}^2 } } + o(1) \\
            &= \tau_j + o(1),
        \end{align*}
        where the second last equality holds because $\Var[Y_{j}(0)]=\psi_{02j} - \psi_{01j}^2>0$ as assumed.
    \end{proof}

\subsection{Proof of \Cref{thm:lin-STE}}\label{app:subsec:lin-STE}
    \begin{proof}[Proof of \Cref{thm:lin-STE}]
        In this proof we will abbreviate $\tau_j^{\text{STE}} $ as $\tau_j$ and denote 
        \[
        \begin{aligned}
            \tilde{\tau}_j &= \frac{\EE[\EE(\tilde{Y}_j \mid A=1, \bW)] - \EE[\EE(\tilde{Y}_j \mid A=0, \bW)]}{\sqrt{ \EE[\EE(\tilde{Y}_j^2 \mid A=0, \bW)] - \EE[\EE(\tilde{Y}_j \mid A=0, \bW)]^2 }}.
        \end{aligned}
        \]
        We split the proof into three parts, conditioned on the event when the assumptions hold (which holds with probability tending to one).

        \paragraph{Part (1) Individual ATE estimators.}
        Let $\boldsymbol{\beta}_j=\left(\beta_{0 j}, \beta_{1 j}, \beta_{2 j}\right)^{\top}$ with
        $$
        \beta_{0 j}=\mathbb{E} [Y_j(0) ],\;\beta_{1 j}=\mathbb{E} [Y_j(1) ],\;\beta_{2 j}=\mathbb{E} [Y_j(0)^2 ],
        $$
        Let $\tilde{\boldsymbol{\beta}}_j=\left(\tilde{\beta}_{0 j}, \tilde{\beta}_{1 j}, \tilde{\beta}_{2 j}\right)^{\top}$ with
        $$
        \tilde{\beta}_{0 j}=\EE[\mathbb{E} (\tilde{Y}_j\mid A=0, \bW )],\;\tilde{\beta}_{1 j}=\EE[\mathbb{E} (\tilde{Y}_j \mid A=1, \bW )],\;\tilde{\beta}_{2 j}=\EE[\mathbb{E} (\tilde{Y}_j^2 \mid A=0, \bW )],
        $$
        and define the corresponding estimator $\hat{\boldsymbol{\beta}}_j=(\hat{\beta}_{0 j}, \hat{\beta}_{1 j}, \hat{\beta}_{2 j})^{\top}$ for
        \[
        \hat{\beta}_{0j}=\mathbb{P}_n\left\{\tilde\phi_{01 j  }(\bZ ; \hat{\pi}, \hat{\bmu})\right\}, \qquad \hat{\beta}_{1 j}=\mathbb{P}_n\left\{\tilde\phi_{11 j }(\bZ ; \hat{\pi}, \hat{\bmu})\right\}, \qquad\hat{\beta}_{2 j}=\mathbb{P}_n\left\{\tilde\phi_{02 j }(\bZ ; \hat{\pi}, \hat{\bmu})\right\} .
        \]
        Let $\bphi_j=(\tilde\phi_{01 j }, \tilde\phi_{11 j }, \tilde\phi_{02 j })^{\top}$ and
        \[\delta_{m0} = \max_{1\leq j\leq p}|\EE[\Delta_{m1j}(0)]|,\qquad \delta_{m1} = \max_{1\leq j\leq p}|\EE[\Delta_{m1j}(1)]|,\qquad \delta_{m2} = \max_{1\leq j\leq p}|\EE[\Delta_{m2j}(0)]|.\]
        Similar to \Cref{lem:ATE}, we can show that
        $$
        \hat{\bbeta}_j-\tilde{\bbeta}_j=(\mathbb{P}_n-\PP)\left[\bphi_j(\bZ ; {\pi}, {\bmu})\right]+ \bepsilon_j
        $$
        where $\max_{1\leq j\leq p}\|\bepsilon_j\|_{\infty} =\cO( r_{np})$ and
        \[r_{np} = (\log p)^{1/2}n^{-(1/2+\alpha \wedge \beta)} + \log p /n+ n^{-(\alpha+\beta)} .\]

        \paragraph{Part (2) Bounding $\hat{\tau}_j-\tilde{\tau}_j$.}
        Now since $\hat{\tau}_j=h(\hat{\boldsymbol{\beta}}_j):=(\hat{\beta}_{1 }-\hat{\beta}_{0 } )(\hat{\beta}_{2 }-\hat{\beta}_{0 }^2)^{-1/2}$, an application of Taylor's expansion yields
        \begin{align}            
            & \hat{\tau}_j-\tilde{\tau}_j \notag\\
            & =(\mathbb{P}_n - \PP)\left[ \nabla h(\tilde{\bbeta}_j)^{\top}\bphi_j(\bZ ; {\pi}, {\bmu})\right]+ \nabla h(\tilde{\bbeta}_j)^{\top}  \bepsilon_j + \Op(\|\hat{\bbeta}_j-\tilde{\bbeta}_j\|_2^2) \notag\\
            & =\mathbb{P}_n\left[\frac{\tilde\phi_{11 j }-\tilde\phi_{01 j }}{\sqrt{\tilde{\beta}_{2 j}-\tilde{\beta}_{0 j}^2}}-\tilde{\tau}_j\left\{\frac{\tilde\phi_{02 j }+\tilde{\beta}_{2 j}-2 \tilde{\beta}_{0 j} \tilde\phi_{01 j }}{2\left(\tilde{\beta}_{2 j}-\tilde{\beta}_{0 j}^2\right)}\right\}\right] + \nabla h(\tilde{\bbeta}_j)^{\top}  \bepsilon_j + \Op(\|\hat{\bbeta}_j-\tilde{\bbeta}_j\|_2^2).\label{eq:sate-temp}
        \end{align}
        We denote the first term by $\PP_n\{\tilde\varphi_{j}\}$.
        For remainder term $\nabla h(\tilde{\bbeta}_j)^{\top}  \bepsilon_j$, by proof of Proposition \ref{prop:identification} we have 
        \[
        \max_{1 \leq j \leq p}\|\tilde{\bbeta}_j-\bbeta_j\|_2 \leq \sqrt{3} \max_{1 \leq j \leq p}\|\tilde{\bbeta}_j-\bbeta_j\|_{\infty} = \sqrt{3} \delta_m 
        \]
        Thus, from the bounded variance assumption that $\Var[Y_j(0)]=\beta_{j2}-\beta_{j0}^2\geq c$ and 
        \[
        \max_{1\leq j\leq p}|\EE[\Delta_{mkj}(a)]|= \delta_m \rightarrow 0,
        \] 
        when $m$ is large, $\max_{1 \leq j \leq p }\|\nabla h(\tilde{\bbeta}_j)\|_2 \leq C$ for some constant $C$. We have
        \begin{align*}
            |\nabla h(\tilde{\bbeta}_j)^{\top}  \bepsilon_j| &\leq \|\nabla h(\tilde{\bbeta}_j)\|_2 \|\bepsilon_j\|_2 \leq \sqrt{3} C\|\bepsilon_j\|_{\infty},    \\
            \max_{1 \leq j \leq p} |\nabla h(\tilde{\bbeta}_j)^{\top}  \bepsilon_j| &\lesssim \EE\left[\max_{1 \leq j \leq p} \|\bepsilon_j\|_{\infty}\right] = \cO( r_{np}).
        \end{align*}
        The second-order remainder is bounded as 
        \[
        \begin{aligned}
            \|\hat{\bbeta}_j-\tilde{\bbeta}_j\|_2^2 
            \lesssim & \|(\mathbb{P}_n -\PP)\left[\bphi_j(\bZ ; {\pi}, {\bmu}) \right ]\|_2^2 + \|\bepsilon_j\|_{\infty}^2 ,
        \end{aligned}
        \]
        where we have
        \[
        \begin{aligned}
            &\,\|(\mathbb{P}_n -\PP)\left[\bphi_j(\bZ ; {\pi}, {\bmu}) \right ]\|_2^2 \\
            =&\,|(\mathbb{P}_n -\PP)\left[\tilde\phi_{01j}(\bZ ; {\pi}, {\bmu}) \right ]|^2 + |(\mathbb{P}_n -\PP)\left[\tilde\phi_{11j}(\bZ ; {\pi}, {\bmu}) \right ]|^2+|(\mathbb{P}_n -\PP)\left[\tilde\phi_{02j}(\bZ ; {\pi}, {\bmu}) \right ]|^2
        \end{aligned}
        \]
        Since $\tilde\phi_{01j}, \tilde\phi_{11j}, \tilde\phi_{02j}$ are all bounded, by \Cref{lem:emp-process-term} with $q = \infty$ we have
        \[
            \max_{1 \leq j \leq p}|(\mathbb{P}_n -\PP)\left[\tilde\phi_{01j}(\bZ ; {\pi}, {\bmu}) \right ]|^2 = \cO\left(\frac{\log p}{n}\right) 
        \]
        and same bound holds for $\tilde\phi_{11j}, \tilde\phi_{02j}$ as well. This together with $\EE[\max_{1\leq j\leq p}\|\bepsilon_j\|_{\infty}] =\cO(r_{np})$ implies (note that $r_{np}$ includes the $\log p/n$ term)
        \begin{align}
            \max_{1 \leq j \leq p}\|\hat{\bbeta}_j-\tilde{\bbeta}_j\|_2^2 =\cO\left(\frac{\log p}{n}+r_{np}^2\right)=\cO(r_{np}). \label{eq:thm:AN-STE-remainder}
        \end{align}
        Combining the above bounds with \eqref{eq:sate-temp} implies that
        \begin{align}
            \hat{\tau}_j-\tilde{\tau}_j &= \PP_n\{\tilde\varphi_{j}\} + \varepsilon_j' \label{eq:diff-tau-ttau}
        \end{align}
        where the residual terms satisfy $\max_{j\in[p]}|\varepsilon_j'| = \cO((\log p)^{1/2}n^{-(1/2+\alpha \wedge \beta)} + \log p /n + n^{-(\alpha+\beta)})$.

        \paragraph{Part (3) Bounding $\tilde{\tau}_j -\tau_j$.}
        By Taylor expansion, we also have
        \[
        \tilde{\tau}_j -\tau_j = h(\tilde{\bbeta}_j) - h(\bbeta_j) = \nabla h(\bar{\bbeta}_j) ^{\top} (\tilde{\bbeta}_j-\bbeta_j)
        \]
        where $\bar{\bbeta}_j$ lies between $\tilde{\bbeta}_j$ and $h(\bbeta_j)$ so it also lies in the ball $\cB_j(\bbeta_j,\delta_m)$ where $\delta_m=\max\{\delta_{m0},\delta_{m1},\delta_{m2}\}$ is the maximum bias of derived outcomes.
        When $m$ is large, we also have $\|\nabla h(\bar{\bbeta}_j)\|_2 \leq C$. This implies
        \[
            \max_{1 \leq j \leq p}|\tilde{\tau}_j -\tau_j| \lesssim \|\tilde{\bbeta}_j -\bbeta_j\|_2 \lesssim \delta_m. 
        \]
        Combining the above bounds with \eqref{eq:diff-tau-ttau}, we have
        \[
            \hat{\tau}_j-{\tau}_j = \PP_n\{\tilde\varphi_{j}\}  + \varepsilon_j,
        \]
        where the residual terms satisfy $\max_{j\in[p]}|\varepsilon_j| = \cO((\log p)^{1/2}n^{-(1/2+\alpha \wedge \beta)} + \log p /n + n^{-(\alpha+\beta)} + \delta_m)$.
    \end{proof}

    \subsection{Proof of \Cref{prop:Var-STE}}\label{app:subsec:Var-STE}
    \begin{proof}[Proof of \Cref{prop:Var-STE}]

        We next verify that the conditions in \Cref{lem:var} hold the event when the assumptions hold.
        Recall the centered influence function $\tilde\varphi_j=\tilde\varphi_j^\STE$ defined in \Cref{thm:lin-STE}.
        We write $\tilde\varphi_{j}(\bZ; \PP) = \tilde{\varphi}_{j}(\bZ; \pi,\bmu) $ and $\tilde\varphi_{j}(\bZ; \hat{\PP}) = \tilde{\varphi}_{j}(\bZ; \hat{\pi},\hat{\bmu}) $.
        By the boundedness assumption, we have that 
        \[\max_{j\in[p]}|\tilde\varphi_j(\bZ; \PP) + \tilde\varphi_j(\bZ; \hat{\PP})|=\cO(1).\]
        From the proof of \Cref{lem:ATE}, the individual influence functions satisfy that
        \begin{align*}
            \max_{1\leq j\leq p}  \|\tilde{\phi}_{akj}(\bZ; \hat{\PP}) - \tilde{\phi}_{akj}(\bZ; {\PP})\|_{L_2} &= \cO( n^{-\alpha\wedge\beta}),\qquad a=0,1,\ k=0,1,2,
        \end{align*}
        The estimation error  $\|\tilde\varphi_{j}(\bZ; \hat{\PP}) - \tilde\varphi_{j}(\bZ; {\PP})\|_2$ then depends on the slowest rate among $\|\hat{\bbeta}_j -\tilde{\bbeta}_j\|_2, \hat{\tau}_j-\tilde{\tau}_j$ and $\tilde{\phi}_{akj}(\bZ; \hat{\PP}) - \tilde{\phi}_{akj}(\bZ; {\PP})$. By the proof in Appendix \ref{app:subsec:lin-STE}, we have
        \begin{align*}
            \max_{1 \leq j \leq p}\|\hat{\bbeta}_j-\tilde{\bbeta}_j\|_2  &= \cO\left(\sqrt{\frac{\log p}{n}}+r_{np}\right), \\
            \max_{1 \leq j \leq p} |\hat{\tau}_j - \tilde{\tau}_j| &= \cO\left(\sqrt{\frac{\log p}{n}}+r_{np}\right).
        \end{align*}
        Combining this with the positivity assumption of $\VV[\tilde{Y}_j(0)]$ implies that
        \begin{align}
            \max_{1\leq j\leq p}  \|\tilde\varphi_{j}(\bZ; \hat{\PP}) - \tilde\varphi_{j}(\bZ; {\PP})\|_{L_2} &= \cO\left( \sqrt{\frac{\log p}{n}}+ n^{-\alpha\wedge\beta}\right). \label{eq:expect-diff-influence-est}
        \end{align}
        
        Therefore, by applying \Cref{lem:var}, we have that $\max_{j\in[p]}|\hat{\sigma}_j^2-\sigma_j^2| = \Op(\sqrt{\log p/n}+n^{-\alpha\wedge\beta})$ as $m,n,p\rightarrow\infty$ such that $\log p =o(n^{\min(\frac{1}{2}, 2\alpha, 2\beta)})$.
    \end{proof}

\subsection{Proof of \Cref{thm:lin-QTE}}\label{app:subsec:lin-QTE}
\begin{proof}[Proof of \Cref{thm:lin-QTE}]
    We condition on the event when the assumptions hold.
    In the proof of Proposition \ref{lem:identification-es}, we show
    \[\max_{j\in[p]}|\tilde{\theta}_{aj} - \theta_{aj}| \leq \frac{1}{c} \max_{j\in[p]} \sup_{\theta \in \cB(\tilde{\theta}_{aj}, \delta)}|\EE[\Delta_{mj}(a,{\theta})]|  \lesssim \delta_m.
    \]
    We first inspect the estimation error of counterfactual quantiles for $a\in\{0,1\}$. 
    Note that Condition \ref{asm:quantile-init} implies that
    \begin{align*}
        \PP(\cap_{j=1}^p \{\hat{\theta}_{aj}^{\init}\in \cB(\tilde{\theta}_{aj},\delta)\}) &\rightarrow 1.
    \end{align*}
    Also, Conditions \ref{asm:quantile-boundedness} and \ref{asm:quantile-init} imply that
    \begin{align}
        \max_{j\in[p]} \hf_{aj}(\hat{\theta}_{aj}^{\init}) &= \cO(1) \label{eq:max-hf}\\
        \hat{L}_{np} := \max_{j\in[p]}|\hf_{aj}(\hat{\theta}_{aj}^{\init})^{-1} - f_{aj}(\tilde{\theta}_{aj})^{-1}| &= \cO(n^{-\kappa}) \label{eq:max-L}
    \end{align}
    We begin by writing the estimation error as 
    \begin{align*}
        \hat{\theta}_{aj} - \tilde{\theta}_{aj}
        &=\hat{\theta}_{aj}^{\init} + \frac{1}{\hf_j(\hat{\theta}_{aj}^{\init})} \PP_n[\hat{\omega}_{aj}(\bZ, \hat{\theta}_{aj}^{\init})]  - \tilde{\theta}_{aj}\\
         &=  f_{aj}({\tilde{\theta}}_{aj})^{-1}\PP_n[{\omega}_{aj}(\bZ, \tilde{\theta}_{aj})] \\        
        &\qquad + \hf_{aj}(\hat{\theta}_{aj}^{\init})^{-1}(\PP_n-\PP)[\hat{\omega}_{aj}(\bZ, \hat{\theta}_{aj}^{\init}) - {\omega}_{aj}(\bZ, \hat{\theta}_{aj}^{\init})] \\
        &\qquad +\hf_{aj}(\hat{\theta}_{aj}^{\init})^{-1}\PP[\hat{\omega}_{aj}(\bZ, \hat{\theta}_{aj}^{\init}) - {\omega}_{aj}(\bZ, \hat{\theta}_{aj}^{\init})] \\
        &\qquad + (\hat{\theta}_{aj}^{\init} - \tilde{\theta}_{aj})  + \hf_{aj}(\hat{\theta}_{aj}^{\init})^{-1}\PP_n[{\omega}_{aj}(\bZ, \hat{\theta}_{aj}^{\init})] - f_{aj}({\tilde{\theta}}_{aj})^{-1}\PP_n[{\omega}_{aj}(\bZ, \tilde{\theta}_{aj})] \\
        &=: f_{aj}({\tilde{\theta}}_{aj})^{-1}\PP_n[{\omega}_{aj}(\bZ, \tilde{\theta}_{aj})] + T_{j1} + T_{j2} + T_{j3}.
    \end{align*}
    Next, we analyze the three residual terms separately.

    \paragraph{Part (1) $T_{j1}$.} This empirical process term can be uniformly bounded using \Cref{lem:err-omega} with \eqref{eq:max-hf}:
    \begin{align*}
        \max_{j\in[p]} |T_{j1}| &= \cO( \max_{j\in[p]}|(\PP_n-\PP)\{\hat{\omega}_{aj}(\bZ, \hat{\theta}_{aj}^{\init}) - {\omega}_{aj}(\bZ, \hat{\theta}_{aj}^{\init})\}|) \\
        &= \cO\left( \sqrt{\frac{\log p}{n}} n^{-\alpha\wedge\beta} + \frac{\log p}{n}\right).
    \end{align*}
    
    \paragraph{Part (2) $T_{j2}$.} This bias term can be uniformly bounded using \Cref{lem:err-omega} with \eqref{eq:max-hf}:
    \begin{align*}
        \max_{j\in[p]} |T_{j2}| &= \cO( \max_{j\in[p]}|\PP\{\hat{\omega}_{aj}(\bZ, \hat{\theta}_{aj}^{\init}) - {\omega}_{aj}(\bZ, \hat{\theta}_{aj}^{\init})\}|) \\
        &= \cO\left( n^{-(\alpha+\beta)}\right).
    \end{align*}

    \paragraph{Part (3) $T_{j3}$.} This extra bias term can be decomposed into two terms:
    \begin{align*}
       T_{j3} &=  (\hat{\theta}_{aj}^{\init} - \tilde{\theta}_{aj})  + \hf_{aj}(\hat{\theta}_{aj}^{\init})^{-1}\PP_n[{\omega}_{aj}(\bZ, \hat{\theta}_{aj}^{\init})] - f_{aj}({\tilde{\theta}}_{aj})^{-1}\PP_n[{\omega}_{aj}(\bZ, \tilde{\theta}_{aj})] \\
       &= (\hat{\theta}_{aj}^{\init} - \tilde{\theta}_{aj}) + f_{aj}({\tilde{\theta}}_{aj})^{-1}\PP[{\omega}_{aj}(\bZ, \hat{\theta}_{aj}^{\init}) - {\omega}_{aj}(\bZ, \tilde{\theta}_{aj})] \\
       & \qquad + f_{aj}({\tilde{\theta}}_{aj})^{-1}(\PP_n-\PP)[{\omega}_{aj}(\bZ, \hat{\theta}_{aj}^{\init}) - {\omega}_{aj}(\bZ, \tilde{\theta}_{aj})] \\
       &\qquad + [\hf_{aj}(\hat{\theta}_{aj}^{\init})^{-1} - f_{aj}({\tilde{\theta}}_{aj})^{-1}]\PP_n[{\omega}_{aj}(\bZ, \hat{\theta}_{aj}^{\init})],
    \end{align*}
    where we use the fact that $\PP\{{\omega}_{aj}(\bZ, \tilde{\theta}_{aj})\} = 0$.
    From \Cref{lem:err-omega-htheta} and \eqref{eq:max-L}, we have that
    \begin{align*}
        \max_{j\in[p]}|T_{j3}| &= \cO(n^{-2\gamma}) +\cO\left( \sqrt{\frac{\log p}{n}} n^{-\gamma/2} +\frac{\log p}{n}\right)\\
        &\qquad +\cO\left( n^{-\kappa} \left(\frac{\log p}{n} + \sqrt{\frac{\log p}{n}} +  n^{-\gamma} \right) \right) \\
        &=\cO \left( n^{-2\gamma} + n^{-(\gamma+\kappa)} +   \frac{\log p}{n} + \sqrt{\frac{\log p}{n}} n^{- \frac{\gamma}{2} \wedge \kappa} \right).
    \end{align*}

    Combining the three terms yields 
    \begin{align*}
        \hat{\theta}_{aj} - \tilde{\theta}_{aj} &= f_{aj}({\tilde{\theta}}_{aj})^{-1}\PP_n[{\omega}_{aj}(\bZ, \tilde{\theta}_{aj})] + \cO\left(\frac{(\log p)^{1/2}}{n^{1/2+\alpha \wedge \beta \wedge \kappa \wedge  \frac{\gamma}{2}}}  + \frac{\log p}{n} + n^{-(\alpha+\beta)\wedge(\gamma+\kappa)\wedge(2\gamma)}\right) .
    \end{align*}
    Setting $\tilde\varphi_j^\QTE = f_{1j}({\tilde{\theta}}_{aj})^{-1}\PP_n[{\omega}_{1j}(\bZ, \tilde{\theta}_{1j})]  - f_{0j}({\tilde{\theta}}_{0j})^{-1}\PP_n[{\omega}_{0j}(\bZ, \tilde{\theta}_{0j})] $ gives that
    \begin{align*}
        \hat{\tau}_j^{\QTE} - \tilde{\tau}_j^{\QTE} &= \PP_n\{\tilde\varphi_{j}^{\QTE}\} + \cO\left(\frac{(\log p)^{1/2}}{n^{1/2+\alpha \wedge \beta \wedge \kappa \wedge  \frac{\gamma}{2}}}  + \frac{\log p}{n} + n^{-(\alpha+\beta)\wedge(\gamma+\kappa)\wedge(2\gamma)}\right)
    \end{align*}
    Recall we have
    \[\max_{j\in[p]}|\tilde{\theta}_{aj} - \theta_{aj}| \leq \frac{1}{c} \max_{j\in[p]} \sup_{\theta \in \cB(\tilde{\theta}_{aj}, \delta)}|\EE[\Delta_{mj}(a,{\theta})]|  \lesssim \delta_m,
    \]
    which implies
    \begin{align*}
    \max_{j\in[p]}|\tilde{\tau}_j^{\QTE} - \tau_j^{\QTE}| &=  \cO(\delta_m).
    \end{align*}
    Because the above results hold with probability tending to one, it further implies that
    \begin{align*}
        \hat{\tau}_j^{\QTE} - {\tau}_j^{\QTE} &= \PP_n\{\tilde\varphi_{j}^{\QTE}\} + \Op\left(\frac{(\log p)^{1/2}}{n^{1/2+\alpha \wedge \beta \wedge \kappa \wedge  \frac{\gamma}{2}}}  + \frac{\log p}{n} + n^{-(\alpha+\beta)\wedge(\gamma+\kappa)\wedge(2\gamma)} + \delta_m \right),
    \end{align*}
    which finishes the proof.
\end{proof}

\subsection{Proof of \Cref{prop:AN-QTE}}\label{app:subsec:AN-QTE}
\begin{proof}[Proof of \Cref{prop:AN-QTE}]
    From \Cref{thm:lin-QTE}, when $\log p =o(n^{2\left(\frac{1}{4}\wedge\alpha\wedge\beta\wedge\frac{\gamma}{2}\right)})$ the asymptotic normality follows immediately from the triangular-array central limit theorem in \Cref{lem:lindeberg}.

    We next verify that the conditions in \Cref{lem:var} hold under the assumptions.
    For simplicity, we drop the superscript $\QTE$ from $\tilde\varphi_j^\QTE$ and write $\tilde\varphi_j$.
    By the boundedness assumption, we have that 
    \[\max_{j\in[p]}|\tilde\varphi_j(\bZ; \PP) + \tilde\varphi_j(\bZ; \hat{\PP})|=\cO(1).\]
    Notice that
    \begin{align*}
        &\tilde\varphi_{j}(\bZ; \hat{\PP}) - \tilde\varphi_{j}(\bZ; {\PP}) \\
        &= \sum_{a\in\{0,1\}} \frac{1}{\hf_{aj}(\hat{\theta}_{aj}^{\init})}\hat{\omega}_{aj}(\bZ, \hat{\theta}_{aj}^{\init}) - \frac{1}{f_{aj}({\tilde{\theta}}_{aj})} {\omega}_{aj}(\bZ, {\tilde{\theta}}_{aj}) \\
        &= \sum_{a\in\{0,1\}} \left [\left(\frac{1}{\hf_{aj}(\hat{\theta}_{aj}^{\init})} - \frac{1}{f_{aj}({\tilde{\theta}}_{aj})} \right)\hat{\omega}_{aj}(\bZ, \hat{\theta}_{aj}^{\init}) + \frac{1}{f_{aj}({\tilde{\theta}}_{aj})} \left(\hat{\omega}_{aj}(\bZ, \hat{\theta}_{aj}^{\init})-{\omega}_{aj}(\bZ, \hat{\theta}_{aj}^{\init})\right)\right. \\
        &\left.+\frac{1}{f_{aj}({\tilde{\theta}}_{aj})} \left({\omega}_{aj}(\bZ, \hat{\theta}_{aj}^{\init})-{\omega}_{aj}(\bZ, {\tilde{\theta}}_{aj})\right) \right] .
    \end{align*}
    Then we have that
    \begin{align*}
        & \max_{j\in[p]}\PP[|\tilde\varphi_{j}(\bZ; \hat{\PP}) - \varphi_{j}(\bZ; {\PP})|] \\
        &\leq \max_{j\in[p]}\|\tilde\varphi_{j}(\bZ; \hat{\PP}) - \tilde\varphi_{j}(\bZ; {\PP})\|_{L_2} \\
        &\leq \max_{j\in[p]}\sum_{a\in\{0,1\}} \left[\left|\frac{1}{\hf_{aj}(\hat{\theta}_{aj}^{\init})} - \frac{1}{f_{aj}(\tilde{\theta}_{aj})} \right| \|{\omega}_{aj}(\bZ, \hat{\theta}_{aj}^{\init})\|_{L_2} + \frac{1}{f_{aj}(\tilde{\theta}_{aj})}\| \hat{\omega}_{aj}(\bZ, \hat{\theta}_{aj}^{\init})-{\omega}_{aj}(\bZ, \hat{\theta}_{aj}^{\init})\|_{L_2} \right.\\
        &\left.+  \frac{1}{f_{aj}(\tilde{\theta}_{aj})} \|{\omega}_{aj}(\bZ, \hat{\theta}_{aj}^{\init})-{\omega}_{aj}(\bZ, \tilde{\theta}_{aj})\|_{L_2}\right]\\
        &=\cO(n^{-\alpha\wedge\beta\wedge\kappa\wedge\frac{\gamma}{2}}).
    \end{align*}
    where in the last equality, we use \eqref{eq:lem:err-omega-eq-2} and Lemma \ref{lem:err-omega-htheta}.
    Therefore, by applying \Cref{lem:var}, we have that $\max_{j\in[p]}|\hat{\sigma}_j^2-\tilde{\sigma}_j^2| = \Op(\log p/n + (\log p)^{1/2}n^{-(1/2+\alpha\wedge\beta\wedge\kappa\wedge\gamma/2)}+ n^{-\alpha\wedge\beta\wedge\kappa\wedge\gamma/2})$ as $m,n,p\rightarrow\infty$ such that $\log p =o(n^{2\left(\frac{1}{4}\wedge\alpha\wedge\beta\wedge\kappa\wedge\frac{\gamma}{2}\right)})$.
\end{proof}

\clearpage
\subsection{Helper lemmas}\label{app:subsec:TE-helper}
    Recall that the uncentered influence function is defined as $\phi_j(\bZ;\PP) = \varphi_j(\bZ;\PP) + \tau_j(\PP)$.
    The lemma below provides results for the asymptotic normality of the one-step estimator of $\EE[Y_j(a)]$ in the setting of multiple derived outcomes.    
    It can be analogously adapted to the one of ATE, which we omit for brevity.

    \begin{lemma}[Counterfactual expectation]\label{lem:ATE}
        For $a\in\{0,1\}$, suppose that \Crefrange{asm:consistency}{asm:nuc} hold and $\tilde{\bY}(a)$ is asymptotically unbiased to $\bY(a)$.
        For $j=1,\ldots,p$, define $\tau_{aj} = \EE[Y_j(a)]$, $\pi_a(\bW) = \PP(A=a\mid\bW)$ and $\mu_{aj}(\bW) = \EE[\tilde{Y}_j\mid A=a,\bW]$, the uncentered influence function as
        \begin{align*}
            \phi_{aj}(\bZ; \pi,\bmu) &= \frac{\ind\{A=a\}}{\pi_a(\bW)} (\tilde{Y}_j - \mu_{aj}(\bW) ) + \mu_{aj}(\bW),
        \end{align*}
        and the one-step estimator as $\hat{\tau}_{aj} = \PP_n[\phi_{aj}(\bZ;\hat{\pi},\hat{\bmu})]$, where $\PP_n$ is the empirical measure over $\cD=\{\bZ_1,\ldots,\bZ_n\}$ and $(\hat{\pi},\hat{\bmu})$ is an estimate of $({\pi},{\bmu})$ from samples independent of $\cD$.  
        
        Assume additionally the following holds for $j\in[p]$ with probability at least $1-n^{-c}$ for some constant $c>0$:
        \begin{enumerate}[label=(\arabic*)]
        
            \item Boundedness: $\tilde{Y}_j$, $\mu_{aj}(\bW)$ and $\hat{\mu}_{aj}(\bW)$ are bounded in $[-C,C]$ for some constant $C$.
            \item\label{asm:lem:ATE-bias} Bias of derived outcomes: $\Delta_{mj}(a):=\EE[\tilde{Y}_j(a)\mid\bW,\bS(a)]-Y_j(a)$ satisfies $\max_{j\in[p]}|\EE[\Delta_{mj}(a)]| = \delta_m = o(n^{-1/2})$.
            \item Strict positivity: The true and estimated propensity score functions satisfy $\pi_a\geq\epsilon$ and $\hat{\pi}_a\geq\epsilon$ for some constant $\epsilon\in(0,1/2)$.
            \item Nuisance: 
            The rates of nuisance estimates satisfy 
            $\|\hat{\pi}_a-\pi_a\|_{L_2} = \cO(n^{-\alpha})$, $\max_{j\in[p]}\|\hat{\mu}_{aj}-\mu_{aj}\|_{L_{2}} = \cO(n^{-\beta})$ for some $\alpha,\beta\in(0,1/2)$, and $\alpha+\beta >1/2$.
        \end{enumerate}
        Then as $m,n,p\rightarrow\infty$ such that $\log p =o(n^{\min(\frac{1}{2}, 2\alpha, 2\beta)})$, it holds that
        \begin{enumerate}[label=(\arabic*)]
            \item Asymptotic normality:
            \begin{align*}
                \sqrt{n}(\hat{\tau}_{aj} - \tau_{aj}) = \sqrt{n}(\PP_n-\PP)\phi_{aj}(\bZ; \pi,\bmu) + \sqrt{n}\epsilon_j \dto \cN(0,\sigma_{aj}^2),
            \end{align*}
            where $\sigma_{aj}^2 = \Var[\phi_{aj}(\bZ;\pi,\bmu)]$ and $\max_{j\in[p]}|\epsilon_j| = \Op((\log p)^{1/2}n^{-(1/2+\alpha \wedge \beta)} + \log p /n + n^{-(\alpha+\beta)} + \delta_m)$.

            \item Uniform control of variance estimation error:
            $\max_{j\in[p]}|\hat{\sigma}_{aj}^2 - \sigma_{aj}^2| = \Op(n^{-\alpha\wedge\beta})$, where $\hat{\sigma}_j^2 = \VV_n[\phi_j(\bZ;\hat{\PP})]$.
        \end{enumerate}
    \end{lemma}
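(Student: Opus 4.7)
\bigskip

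\noindent\textbf{Proof proposal for \Cref{lem:ATE}.}

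The plan is to follow the standard three-term decomposition \eqref{eq:decomposition} adapted to the derived-outcome setting, i.e., to write
\begin{align*}
    \hat{\tau}_{aj} - \tau_{aj} = \underbrace{(\PP_n-\PP)\phi_{aj}(\bZ;\pi,\bmu)}_{T_{{\rm S},j}} + \underbrace{(\PP_n-\PP)\{\phi_{aj}(\bZ;\hat\pi,\hat\bmu) - \phi_{aj}(\bZ;\pi,\bmu)\}}_{T_{{\rm E},j}} + \underbrace{\PP\{\phi_{aj}(\bZ;\hat\pi,\hat\bmu)\} - \tau_{aj}}_{T_{{\rm R},j}'},
\end{align*}
and then bound each of the three residual terms uniformly in $j$. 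The central-limit part for $\sqrt{n}\,T_{{\rm S},j}$ is immediate once the remaining terms are shown to be $o_{\PP}(n^{-1/2})$, since the summands are i.i.d.\ with bounded variance $\sigma_{aj}^2$.

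For the bias term $T_{{\rm R},j}'$, I would condition on $\bW$ inside $\EE[\ind\{A=a\}(\tilde Y_j - \hat\mu_{aj})/\hat\pi_a]$ and use \Cref{asm:nuc} and the definition of $\mu_{aj}$ to rewrite
\begin{align*}
    \PP\{\phi_{aj}(\bZ;\hat\pi,\hat\bmu)\} - \EE[\tilde Y_j(a)] = \EE\!\left[(\mu_{aj}(\bW) - \hat\mu_{aj}(\bW))\,\frac{\pi_a(\bW) - \hat\pi_a(\bW)}{\hat\pi_a(\bW)}\right].
\end{align*}
Cauchy--Schwarz together with the positivity $\hat\pi_a\geq\epsilon$ bounds this by $\epsilon^{-1}\|\hat\mu_{aj}-\mu_{aj}\|_{L_2}\|\hat\pi_a-\pi_a\|_{L_2}=\cO(n^{-(\alpha+\beta)})$, uniformly in $j$ by assumption~(4). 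The gap $\EE[\tilde Y_j(a)] - \tau_{aj} = \EE[\Delta_{mj}(a)]$ contributes the $\delta_m$ term via assumption~\ref{asm:lem:ATE-bias}.

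For the empirical process term, I would invoke \Cref{lem:emp-process-term} with $g_j := \phi_{aj}(\bZ;\hat\pi,\hat\bmu) - \phi_{aj}(\bZ;\pi,\bmu)$, which is legitimate because $(\hat\pi,\hat\bmu)$ is fit on an independent sample. Under the boundedness and positivity assumptions, $|g_j|\lesssim 1$, so the envelope satisfies $\|G\|_{L_\infty}\lesssim 1$ and we may take $q=\infty$. A direct calculation using
\begin{align*}
    \phi_{aj}(\bZ;\hat\pi,\hat\bmu) - \phi_{aj}(\bZ;\pi,\bmu) = \ind\{A=a\}(\tilde Y_j-\mu_{aj})\!\left(\tfrac{1}{\hat\pi_a}-\tfrac{1}{\pi_a}\right) - \tfrac{\ind\{A=a\}-\hat\pi_a}{\hat\pi_a}(\hat\mu_{aj}-\mu_{aj})
\end{align*}
gives $\max_j \|g_j\|_{L_2} = \cO(n^{-\alpha\wedge\beta})$, hence $\max_j |T_{{\rm E},j}| = \Op(\sqrt{(\log p)/n}\,n^{-\alpha\wedge\beta} + (\log p)/n)$ by \Cref{lem:emp-process-term}. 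Combining all three residual pieces yields the stated rate, and $\sqrt{n}\,T_{{\rm S},j}\dto\cN(0,\sigma_{aj}^2)$ by the CLT, giving the asymptotic normality.

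For the variance claim, I would apply \Cref{lem:var} to $\hat\varphi_j=\phi_{aj}(\bZ;\hat\pi,\hat\bmu)-\hat\tau_{aj}$ and $\varphi_j=\phi_{aj}(\bZ;\pi,\bmu)-\tau_{aj}$. Boundedness of $(\tilde Y_j,\mu_{aj},\hat\mu_{aj})$ and positivity of $(\pi_a,\hat\pi_a)$ give envelope control with $q=\infty$ and $r_{1n}\lesssim 1$; the $L_2$ bound above together with $|\hat\tau_{aj}-\tau_{aj}|=\op(1)$ supplies $r_{2n}\asymp n^{-\alpha\wedge\beta}$; and the bias calculation for $T_{{\rm R},j}'$ combined with $\delta_m=o(n^{-1/2})$ supplies $r_{3n}\asymp n^{-(\alpha+\beta)}+\delta_m$. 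Plugging into \Cref{lem:var} produces $\max_j|\hat\sigma_{aj}^2-\sigma_{aj}^2|=\Op((\log p/n)^{1-1/q}+\sqrt{\log p/n}\,n^{-\alpha\wedge\beta}+n^{-(\alpha+\beta)}+\delta_m)$, which simplifies to $\Op(n^{-\alpha\wedge\beta})$ under the assumed rates on $\log p$. The main obstacle is purely bookkeeping: separating the propensity-score error from the outcome-model error in the $L_2$ bound on $g_j$ so that the product-structure of the double-robustness bound survives the uniform maximum over $j\in[p]$; everything else is a direct invocation of \Cref{lem:emp-process-term,lem:var}.
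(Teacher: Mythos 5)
Your proposal is correct and follows essentially the same route as the paper: the same three-term decomposition, \Cref{lem:emp-process-term} with $q=\infty$ for the empirical process term, Cauchy--Schwarz for the doubly robust bias $\cO(n^{-(\alpha+\beta)})$, the $\delta_m$ contribution from \Cref{asm:lem:ATE-bias}, and \Cref{lem:var} for the variance estimates. The only cosmetic difference is that for the leading term the paper invokes a Lindeberg triangular-array CLT (since the law of $\tilde{Y}_j$ depends on $m=m_n$) rather than the plain i.i.d.\ CLT you cite, but this does not change the argument.
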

    \begin{proof}[Proof of \Cref{lem:ATE}]    
        We write $\tilde{\tau}_{aj}(\PP)=\EE[\phi_{aj}(\bZ; \pi,\bmu)]$. Note that $\tau_{aj} =\tilde{\tau}_{aj} + \EE[\Delta_{mj}(a)]$ from \eqref{eq:prop:identification-eq-1} in the proof of \Cref{prop:identification}.

        From \citet[Example 2]{kennedy2022semiparametric}, $\phi_{aj}$ is the efficient influence function of $\tilde{\tau}_{aj}$.
        Then, from \eqref{eq:decomposition}, we have a three-term decomposition
        \begin{align}
            \hat{\tau}_{aj}(\PP) - \tau_{aj}(\PP) &= \hat{\tau}_{aj}(\PP) - \tilde{\tau}_{aj}(\PP)  + \tilde{\tau}_{aj}(\PP)  - \tau_{aj}(\PP) \notag\\
            &= T_{{\rm S},j} + T_{{\rm E},j} + T_{{\rm R},j} + \EE[\Delta_{mj}], \label{eq:lem:ATE-decomp}
        \end{align}
        where
        \begin{align*}
            T_{{\rm S},j} & = (\PP_n - \PP) \{\phi_{aj}(\bZ; \PP)\} \\
            T_{{\rm E},j} &= (\PP_n - \PP) \{\phi_{aj}(\bZ; \hat{\PP}) - \phi_{aj}(\bZ; \PP)\} \\
            T_{{\rm R},j} &= \PP \{ \phi_{aj}(\bZ; \hat{\PP}) - \phi_{aj}(\bZ; \PP) \}.
        \end{align*}
        From assumption \ref{asm:lem:ATE-bias}, we know that the last term can be uniformly controlled as $\delta_m = \max_{j\in[p]} |\EE[\Delta_{mj}(a)]| = \op(n^{-1/2})$ as $m,n,p$ tend to infinity.

        \paragraph{Part (1) Uniform control of empirical process terms.}
        We next verify that the conditions in \Cref{lem:emp-process-term} hold under the assumptions.
        Note that
        \begin{align*}
            \phi_{aj}(\bZ; \hat{\PP}) - \phi_{aj}(\bZ; \PP) &= \left(1-\frac{\ind\{A=a\}}{\pi_a(\bW)}  \right)(\hat{\mu}_{aj}(\bW) - \mu_{aj}(\bW)) \\
            &\qquad + \frac{\ind\{A=a\}}{\hat{\pi}_a(\bW)\pi_a(\bW) }(\tilde{Y}_j - \hat{\mu}_{aj}(\bW)) (\pi_a(\bW) - \hat{\pi}_a(\bW)).
        \end{align*}
        Then, by the boundedness assumptions, we have that
        \begin{align*}
            \max_{j\in[p]}\|\phi_{aj}(\bZ; \hat{\PP}) - \phi_{aj}(\bZ; \PP)\|_{L_{2}} &\leq  \frac{1}{\epsilon}  \max_{j\in[p]}\|\hat{\mu}_{aj}-\mu_{aj}\|_{L_{2}} + \frac{1}{\epsilon^2} \|\hat{\pi}_a - \pi_a\|_{L_2} = \cO(n^{-\alpha \wedge \beta}) \\
            \left\|\max_{j\in[p]} | \phi_{aj}(\bZ; \hat{\PP}) - \phi_{aj}(\bZ; \PP)| \right\|_{L_{\infty}} &\leq \frac{2C}{\epsilon^2},
        \end{align*}
        From \Cref{lem:emp-process-term}, when $\log p =o(n^{\min(\frac{1}{2}, 2\alpha, 2\beta)})$, it follows that
        \begin{align}
            \max_{j\in[p]}|T_{{\rm E},j}| & = \cO\left(\sqrt{\frac{\log p}{n}}  n^{-\alpha \wedge \beta}  + \frac{\log p}{n} \right) = o\left( n^{-1/2}\right). \label{eq:lem:ATE-CLT}
        \end{align}

        \paragraph{Part (2) Uniform control of remaining bias terms.}
        From \citet[Example 54]{kennedy2022semiparametric}, it follows that
        \begin{align*}
            T_{{\rm R},j} &= \PP\left\{\left(\frac{1}{\pi_a}-\frac{1}{\hat{\pi}_a}\right)(\mu_{aj}  - \hat{\mu}_{aj}) \pi_a\right\}.
        \end{align*}
        When $\hat{\pi}\geq\epsilon$, conditioned on nuisance estimators, by the Cauchy-Schwarz inequality, we have
        \begin{align}
            \max_{j\in[p]}|T_{{\rm R},j}| &\leq \frac{1}{\epsilon} \|\hat{\pi}_a-{\pi}_a\|_{L_2}\max_{j\in[p]}\|\hat{\mu}_{aj}-{\mu}_{aj}\|_{L_2}
            = \cO(n^{-(\alpha+\beta)}) = o(n^{-1/2}), \label{eq:lem:ATE-bias}
        \end{align}
        where the second inequality is from the Cauchy-Schwarz inequality, and the last equality is from the assumed rate for nuisance function estimation.

        From \eqref{eq:lem:ATE-decomp}-\eqref{eq:lem:ATE-bias}, we have that 
        \[\hat{\tau}_{aj}(\PP) - \tau_{aj}(\PP) = (\PP_n-\PP) \{\phi_{aj}(\bZ; \PP)\} + \epsilon_j\]
        such that $\max_{j\in[p]}|\epsilon_j| = \cO((\log p)^{1/2}n^{-(1/2+\alpha \wedge \beta)} + \log p /n + n^{-(\alpha+\beta)} + \delta_m) = o(n^{-1/2})$.
        
        \paragraph{Part (3) Sample average terms.}
        
        By \Cref{lem:lindeberg}, it follows that
        \begin{align*}
            \sqrt{n} T_{{\rm S},j} \dto \cN(0,\sigma_{aj}^2).
        \end{align*}

        \paragraph{Part (4) Uniform control of variance estimates.}
        We next verify that the conditions in \Cref{lem:var} hold under the assumptions.
        Denote $\hat{\phi}_j = \phi_j(\bZ; \hat{\PP}) $ and $\phi_j = \phi_j(\bZ; {\PP})$.
        Note that by the boundedness conditions, we have that
        \begin{align}
            \max_{1\leq j\leq p}|\hat{\phi}_j + \phi_j| \lesssim 1,\qquad \max_{1\leq j\leq p}|\hat{\phi}_j - \phi_j| \lesssim 1, \label{eq:lem:ATE-bound}
        \end{align}
        which verifies the first condition of \Cref{lem:var}.

        From Part (2), we also have
        \begin{align*}
            \max_{1\leq j\leq p}\|\hat{\phi}_j-\phi_j\|_{L_1} \leq \max_{1\leq j\leq p}\|\hat{\phi}_j-\phi_j\|_{L_2} = \cO\left(n^{-\alpha \wedge \beta} \right) 
        \end{align*}
        which verifies the last two conditions of \Cref{lem:var}.
        
        Therefore, we are able to apply \Cref{lem:var} to conclude that \[\max_{j\in[p]}|\hat{\sigma}_j^2-\sigma_j^2| = \cO\left(\frac{\log p}{n}  + \sqrt{\frac{\log p}{n}} n^{-\alpha\wedge\beta} + n^{-\alpha \wedge \beta}\right) = \cO(n^{-\alpha\wedge\beta}). \]
        when $\log p =o(n^{\min(\frac{1}{2}, 2\alpha, 2\beta)})$.
    \end{proof}

    \begin{lemma}[Error bounds of estimating equations]\label{lem:err-omega}
        Under the event that the conditions in \Cref{thm:lin-QTE} hold, it holds that
        \begin{align*}
            \max_{j\in[p]}|(\PP_n - \PP)\{\hat{\omega}_{aj}(\bZ, \hat{\theta}_{aj}^{\init}) - {\omega}_{aj}(\bZ, \hat{\theta}_{aj}^{\init})\}| &= \cO\left( \sqrt{\frac{\log p}{n}} n^{-\alpha\wedge\beta} + \frac{\log p}{n}\right) \\
            \max_{j\in[p]}|\PP\{\hat{\omega}_{aj}(\bZ, \hat{\theta}_{aj}^{\init}) - {\omega}_{aj}(\bZ, \hat{\theta}_{aj}^{\init})\}| &= \cO\left( n^{-(\alpha+\beta)}\right).
        \end{align*}
    \end{lemma}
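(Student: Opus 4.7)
}
The plan is to mirror the structure of the proof of \Cref{lem:ATE}: first rearrange $\hat{\omega}_{aj}-\omega_{aj}$ into a doubly robust form in which each summand is either a product of two nuisance errors or a mean-zero quantity times a nuisance error, and then apply \Cref{lem:emp-process-term} for the centered piece. Concretely, I would start from the algebraic identity
\begin{align*}
\hat{\omega}_{aj}(\bZ,\theta) - \omega_{aj}(\bZ,\theta)
&= \left(1-\frac{\ind\{A=a\}}{\pi_a(\bW)}\right)\bigl(\nu_{aj}(\bW,\theta)-\hat{\nu}_{aj}(\bW,\theta)\bigr) \\
&\quad + \frac{\ind\{A=a\}}{\hat{\pi}_a(\bW)\pi_a(\bW)}\bigl(\pi_a(\bW)-\hat{\pi}_a(\bW)\bigr)\bigl(\hat{\nu}_{aj}(\bW,\theta)-\psi(\tilde Y_j,\theta)\bigr),
\end{align*}
valid for every fixed $\theta$. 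Throughout I would condition on the independent sample used to produce $(\hat{\pi}_a,\hat{\bnu}_a,\hat{\btheta}^{\init}_a)$, so that $\hat{\theta}^{\init}_{aj}$ may be treated as a deterministic argument, and on the event (of probability tending to one) that $\hat{\theta}^{\init}_{aj}\in\cB(\tilde{\theta}_{aj},\delta)$ for all $j$, which follows from condition \ref{asm:quantile-init} of \Cref{thm:lin-QTE}.

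For the bias bound (Part 2), I would take $\PP$-expectations in the identity above. The first summand vanishes because $\EE[\ind\{A=a\}/\pi_a(\bW)\mid\bW]=1$. For the second summand I would condition on $(\bW,A)$ and use $\EE[\psi(\tilde Y_j,\theta)\mid A=a,\bW]=\nu_{aj}(\bW,\theta)$ to replace $\psi(\tilde Y_j,\theta)$ with $\nu_{aj}(\bW,\theta)$, and then take the outer expectation over $A\mid\bW$ to turn $\ind\{A=a\}/(\hat{\pi}_a\pi_a)$ into $1/\hat{\pi}_a$. Cauchy--Schwarz combined with $\hat{\pi}_a\geq\epsilon$, the uniform-in-$\theta$ nuisance rate $\sup_{\theta\in\cB(\tilde{\theta}_{aj},\delta)}\|\hat{\nu}_{aj}(\cdot,\theta)-\nu_{aj}(\cdot,\theta)\|_{L_2}=\cO(n^{-\alpha})$, and $\|\hat{\pi}_a-\pi_a\|_{L_2}=\cO(n^{-\beta})$ then yields $\max_j |\PP\{\hat{\omega}_{aj}(\bZ,\hat{\theta}^{\init}_{aj})-\omega_{aj}(\bZ,\hat{\theta}^{\init}_{aj})\}| = \cO(n^{-(\alpha+\beta)})$, uniformly over $j$ because the nuisance bounds are stated uniformly in $j$.

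For the empirical-process bound (Part 1), I would set $g_j(\bZ):=\hat{\omega}_{aj}(\bZ,\hat{\theta}^{\init}_{aj})-\omega_{aj}(\bZ,\hat{\theta}^{\init}_{aj})$ and invoke \Cref{lem:emp-process-term}. Two ingredients are needed. First, a uniform $L_2$ bound: from the decomposition above, the boundedness of $\psi$, $\nu_{aj}$, $\hat{\nu}_{aj}$, and $1/\pi_a,1/\hat{\pi}_a$ (all under condition \ref{asm:quantile-boundedness}), triangle inequality yields $\max_j\|g_j\|_{L_2}\lesssim \max_j\|\hat{\nu}_{aj}(\cdot,\hat{\theta}^{\init}_{aj})-\nu_{aj}(\cdot,\hat{\theta}^{\init}_{aj})\|_{L_2}+\|\hat{\pi}_a-\pi_a\|_{L_2}=\cO(n^{-\alpha\wedge\beta})$, again using the uniform-in-$\theta$ nuisance rate. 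Second, a bounded envelope $G=\max_j|g_j|\lesssim 1$, so I can take $q=\infty$. \Cref{lem:emp-process-term} then gives $\max_j|(\PP_n-\PP)g_j|=\cO(\sqrt{\log p/n}\,n^{-\alpha\wedge\beta}+\log p/n)$.

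The main obstacle is the random argument $\hat{\theta}^{\init}_{aj}$, since the nuisance estimation error is only controlled on a neighborhood of $\tilde{\theta}_{aj}$; this is handled cleanly by conditioning on the auxiliary sample and restricting to the high-probability event $\{\hat{\theta}^{\init}_{aj}\in\cB(\tilde{\theta}_{aj},\delta)\;\forall j\}$, after which the uniform-in-$\theta$ nuisance rate of Condition (3) of \Cref{thm:lin-QTE} supplies the required $L_2$ control at the (now arbitrary) point $\hat{\theta}^{\init}_{aj}$. A minor subtlety is that sample splitting is also needed so that $g_j$ is independent of the observations in $\cD$, ensuring \Cref{lem:emp-process-term} applies without Donsker-type complexity assumptions.
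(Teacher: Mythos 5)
Your proposal is correct and follows essentially the same route as the paper: the same doubly robust decomposition of $\hat{\omega}_{aj}-\omega_{aj}$ (the paper merely splits your second summand into a conditionally mean-zero piece and a product-of-errors piece, which your conditioning argument for the bias term reproduces implicitly), the same Cauchy--Schwarz bound giving $\cO(n^{-(\alpha+\beta)})$ for the bias, and the same application of \Cref{lem:emp-process-term} with a bounded envelope ($q=\infty$) and the uniform $L_2$ rate $\cO(n^{-\alpha\wedge\beta})$ for the empirical-process term. Your handling of the random argument $\hat{\theta}^{\init}_{aj}$ via sample splitting and the high-probability event $\{\hat{\theta}^{\init}_{aj}\in\cB(\tilde{\theta}_{aj},\delta)\}$ also matches the paper's treatment.
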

    \begin{proof}[Proof of \Cref{lem:err-omega}]
        We begin by decomposing the estimation error as
        \begin{align*}
            \hat{\omega}_{aj}(\bZ, \hat{\theta}_{aj}^{\init}) - {\omega}_{aj}(\bZ, \hat{\theta}_{aj}^{\init}) &= \left(\frac{\ind\{A=a\}}{\pi_a(\bW)} - 1 \right)(\hat{\nu}_{aj}(\bW,\hat{\theta}_{aj}^{\init}) - \nu_{aj}(\bW,\hat{\theta}_{aj}^{\init})) \\    
            &\qquad +\left(\frac{1}{\hat{\pi}_a(\bW)}-\frac{1}{\pi_a(\bW)}\right) {\ind\{A=a\}}({\nu}_{aj}(\bW,\hat{\theta}_{aj}^{\init}) - \psi(Y_j,\hat{\theta}_{aj}^{\init})) \\
            &\qquad +\left(\frac{1}{\hat{\pi}_a(\bW)}-\frac{1}{\pi_a(\bW)}\right) {\ind\{A=a\}}(\hat{\nu}_{aj}(\bW,\hat{\theta}_{aj}^{\init}) - \nu_{aj}(\bW,\hat{\theta}_{aj}^{\init})) \\
            &=: D_{j1}(\bZ) + D_{j2}(\bZ) + D_{j3}(\bZ).
        \end{align*}

        Next, we split the proof into two parts.

        \paragraph{Part (1)}     
        From the boundedness assumption \ref{asm:quantile-boundedness}, we have that
        \begin{align*}
            \left\|\max_{j\in[p]}|\hat{\omega}_{aj}(\bZ, \hat{\theta}_{aj}^{\init}) - {\omega}_{aj}(\bZ, \hat{\theta}_{aj}^{\init})|\right\|_{L_{\infty}} &\leq 2\left(\frac{1}{\epsilon}+1\right)  + \frac{4}{\epsilon} + \frac{4}{\epsilon} \leq C', 
        \end{align*}
        for some constant $C'>0$.
        This gives the $L_{\infty}$-boundedness of the estimation error.
    
        We next derive the upper bound of the $L_2$-norm by analyzing the three terms separately.       
        Condition \ref{asm:quantile-init} implies that
        \begin{align*}
            \max_{j\in[p]}\|D_{j1}\|_{L_2} & \leq \frac{1}{\epsilon}  \max_{j\in[p]}\sup_{\theta \in \cB(\tilde{\theta}_{aj},\delta)} \|\hat{\nu}_{aj}(\bW,\theta) - \nu_{aj}(\bW,\theta) \|_{L_2} = \cO(n^{-\alpha})\\
            \max_{j\in[p]}\|D_{j2}\|_{L_2} & \leq \frac{1}{\epsilon^2} \|\hat{\pi}_{a} - \pi_{a} \|_{L_2} = \cO(n^{-\beta}) \\
            \max_{j\in[p]}\|D_{j3}\|_{L_2} & \leq \frac{1}{\epsilon^2} \|\hat{\pi}_{a} - \pi_{a} \|_{L_2} = \cO(n^{-\beta}) .
        \end{align*}
        Thus, we have
        \begin{align}
            \max_{j\in[p]} \|\hat{\omega}_{aj}(\bZ, \hat{\theta}_{aj}^{\init}) - {\omega}_{aj}(\bZ, \hat{\theta}_{aj}^{\init})\|_{L_2} =\cO(n^{-\alpha\wedge\beta}). \label{eq:lem:err-omega-eq-2}
        \end{align}
        From \Cref{lem:emp-process-term}, we have that
        \begin{align*}
            (\PP_n - \PP)\{\hat{\omega}_{aj}(\bZ, \hat{\theta}_{aj}^{\init}) - {\omega}_{aj}(\bZ, \hat{\theta}_{aj}^{\init})\} &= \cO\left( \sqrt{\frac{\log p}{n}} n^{-\alpha\wedge\beta} + \frac{\log p}{n}\right).
        \end{align*}
        which finishes the proof of the first part of the lemma.
    
        \paragraph{Part (2)} 
        On the other hand, because $\EE[\ind\{A=a\}\mid \bW] = \pi_a(\bW)$, we have that
        \begin{align*}
            \PP\{D_{j1}(\bZ)\} &= 0.
        \end{align*}
        Similarly, because $\EE[\psi(Y_j,\hat{\theta}_{aj}^{\init})\mid \bW] = {\nu}_{aj}(\bW,\hat{\theta}_{aj}^{\init})$, we also have that
        \begin{align*}
            \PP\{D_{j2}(\bZ)\} &= 0.
        \end{align*}
        For the third term, we have that
        \begin{align*}
            \max_{j\in[p]}|\PP\{D_{j3}(\bZ)\}| &\leq \frac{1}{\epsilon} \|\hat{\pi}_{a} - \pi_{a} \|_{L_2} \max_{j\in[p]}\sup_{\theta \in \cB(\tilde{\theta}_{aj},\delta)} \|\hat{\nu}_{aj}(\bW,\theta) - \nu_{aj}(\bW,\theta) \|_{L_2} \\
            &= \cO(n^{-(\alpha+\beta)}).
        \end{align*}
        Combining the above results, we have that
        \begin{align*}
            \max_{j\in[p]} |\PP \{\hat{\omega}_{aj}(\bZ, \hat{\theta}_{aj}^{\init}) - {\omega}_{aj}(\bZ, \hat{\theta}_{aj}^{\init})\} | &= \cO(n^{-(\alpha+\beta)}).
        \end{align*}
    \end{proof}
    
    \begin{lemma}[Error bounds of estimating equations evaluated with respect to the initial estimators]\label{lem:err-omega-htheta}
        Under the event that the conditions in \Cref{thm:lin-QTE} hold, it holds that
        \begin{align*}
            \max_{j\in[p]}|(\hat{\theta}_{aj}^{\init} - \tilde{\theta}_{aj}) + f_{aj}({\tilde{\theta}}_{aj})^{-1}\PP[{\omega}_{aj}(\bZ, \hat{\theta}_{aj}^{\init})-{\omega}_{aj}(\bZ, \tilde{\theta}_{aj})]| &= \cO(n^{-2\gamma}) \\
            \max_{j\in[p]}|(\PP_n-\PP)[{\omega}_{aj}(\bZ, \hat{\theta}_{aj}^{\init}) - {\omega}_{aj}(\bZ, \tilde{\theta}_{aj})]| &= \cO\left( \sqrt{\frac{\log p}{n}} n^{-\gamma/2} + \frac{\log p}{n} \right) \\
            \max_{j\in[p]}|\PP_n\{{\omega}_{aj}(\bZ, \hat{\theta}_{aj}^{\init})\}| &= \cO\left(  \frac{\log p}{n} + \sqrt{\frac{\log p}{n}} +  n^{-\gamma} \right).
        \end{align*}
    \end{lemma}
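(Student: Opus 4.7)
The plan is to establish the three bounds in order, exploiting the fact that $\hat{\theta}_{aj}^{\init}$ is independent of $\cD$ (constructed on an auxiliary sample) and the smooth dependence of $\PP[\omega_{aj}(\bZ,\theta)]$ on $\theta$, while isolating the only non-Lipschitz piece of the integrand via anti-concentration of the density $f_{aj}$.

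For the first bound, I would start from the identity $\PP[\omega_{aj}(\bZ,\tilde{\theta}_{aj})]=0$, which follows from $\EE[\nu_{aj}(\bW,\tilde{\theta}_{aj})]=0$ (the definition of $\tilde{\theta}_{aj}$ in \Cref{lem:identification-es}) together with the tower identities $\EE[\ind\{A=a\}\psi(\tilde{Y}_j,\theta)/\pi_a(\bW)]=\EE[\nu_{aj}(\bW,\theta)]$ and $\EE[\ind\{A=a\}\nu_{aj}(\bW,\theta)/\pi_a(\bW)]=\EE[\nu_{aj}(\bW,\theta)]$, which together give $\PP[\omega_{aj}(\bZ,\theta)]=-\EE[\nu_{aj}(\bW,\theta)]$. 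Differentiating in $\theta$ under the integral identifies the derivative as $-f_{aj}(\theta)$ and the second derivative as $-f_{aj}'(\theta)$. A second-order Taylor expansion around $\tilde{\theta}_{aj}$, with remainder controlled by the uniform bound $\max_j\sup_{\theta\in\cB(\tilde{\theta}_{aj},\delta)}|f_{aj}'(\theta)|\leq C$ from assumption \ref{asm:quantile-boundedness}, yields
\[
\PP[\omega_{aj}(\bZ,\hat{\theta}_{aj}^{\init})-\omega_{aj}(\bZ,\tilde{\theta}_{aj})]=-f_{aj}(\tilde{\theta}_{aj})(\hat{\theta}_{aj}^{\init}-\tilde{\theta}_{aj})+\cO(|\hat{\theta}_{aj}^{\init}-\tilde{\theta}_{aj}|^2).
\]
Dividing by $f_{aj}(\tilde{\theta}_{aj})\geq c$ and adding $(\hat{\theta}_{aj}^{\init}-\tilde{\theta}_{aj})$ cancels the linear term, and condition \ref{asm:quantile-init} gives the uniform $\cO(n^{-2\gamma})$ remainder.

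For the second bound, I would condition on $\hat{\theta}_{aj}^{\init}$ and apply \Cref{lem:emp-process-term} with $q=\infty$ to $g_{aj}(\cdot):=\omega_{aj}(\cdot,\hat{\theta}_{aj}^{\init})-\omega_{aj}(\cdot,\tilde{\theta}_{aj})$. The envelope is uniformly bounded by a constant from the boundedness of $\pi_a$ and $\psi\in[-\varrho,1-\varrho]$. The main work is bounding $\max_j\|g_{aj}\|_{L_2}$. The only non-Lipschitz-in-$\theta$ piece is $\psi(\tilde{Y}_j,\hat{\theta}_{aj}^{\init})-\psi(\tilde{Y}_j,\tilde{\theta}_{aj})=\ind\{\tilde{Y}_j\in I_{aj}\}$ (up to a sign) where $I_{aj}$ is the interval between the two quantiles. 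Its $L_2^2$ norm equals $\PP(\tilde{Y}_j\in I_{aj})$, which by the bounded density $f_{aj}\leq C$ in assumption \ref{asm:quantile-boundedness} is at most $C|\hat{\theta}_{aj}^{\init}-\tilde{\theta}_{aj}|=\cO(n^{-\gamma})$. The $\nu_{aj}(\bW,\theta)$ pieces are Lipschitz in $\theta$ via the same density bound, so they contribute at most the same order. Therefore $\max_j\|g_{aj}\|_{L_2}=\cO(n^{-\gamma/2})$, and \Cref{lem:emp-process-term} gives the stated rate $\sqrt{(\log p)/n}\,n^{-\gamma/2}+(\log p)/n$.

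For the third bound, I would decompose
\[
\PP_n[\omega_{aj}(\bZ,\hat{\theta}_{aj}^{\init})]=(\PP_n-\PP)[\omega_{aj}(\bZ,\tilde{\theta}_{aj})]+(\PP_n-\PP)[\omega_{aj}(\bZ,\hat{\theta}_{aj}^{\init})-\omega_{aj}(\bZ,\tilde{\theta}_{aj})]+\PP[\omega_{aj}(\bZ,\hat{\theta}_{aj}^{\init})],
\]
using $\PP[\omega_{aj}(\bZ,\tilde{\theta}_{aj})]=0$. The first term is $\cO(\sqrt{(\log p)/n}+(\log p)/n)$ by applying \Cref{lem:emp-process-term} to the deterministic, uniformly bounded functions $\omega_{aj}(\cdot,\tilde{\theta}_{aj})$ whose $L_2$ norms are bounded by a constant; the second term is exactly the bound just proved in part two; the third is $\cO(n^{-\gamma})$ from the Taylor expansion in part one. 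Combining by the triangle inequality gives the claimed rate.

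The main obstacle I anticipate is the $L_2$ control of the indicator difference in part two, since the non-Lipschitz nature of $\psi$ precludes a direct mean-value argument and forces a careful anti-concentration argument; a subtlety is ensuring the density bound $f_{aj}\leq C$ applies on the random interval between $\tilde{\theta}_{aj}$ and $\hat{\theta}_{aj}^{\init}$, which requires first establishing that $\hat{\theta}_{aj}^{\init}\in\cB(\tilde{\theta}_{aj},\delta)$ uniformly in $j$ with probability tending to one (a consequence of condition \ref{asm:quantile-init} once $\delta_m$ is absorbed). The other steps are relatively mechanical Taylor expansion and applications of \Cref{lem:emp-process-term}.
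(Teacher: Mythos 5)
Your proposal is correct and follows essentially the same route as the paper's proof: the same second-order Taylor expansion with cancellation of the linear term for the bias, the same $L_2$ anti-concentration bound $\|\ind\{\tilde{Y}_j\in I_{aj}\}\|_{L_2}^2\lesssim|\hat{\theta}_{aj}^{\init}-\tilde{\theta}_{aj}|=\cO(n^{-\gamma})$ combined with \Cref{lem:emp-process-term} for the empirical process term, and the identical three-way decomposition for the third bound. One small imprecision: the claim that the $\nu_{aj}(\bW,\cdot)$ pieces are "Lipschitz via the same density bound" does not follow directly, since only the marginal density $f_{aj}$ is assumed bounded, not the conditional density given $\bW$; the paper instead applies Jensen's inequality to reduce $\|\nu_{aj}(\bW,\hat{\theta}_{aj}^{\init})-\nu_{aj}(\bW,\tilde{\theta}_{aj})\|_{L_2}^2$ to the same marginal probability difference, yielding the same $\cO(n^{-\gamma/2})$ rate, so your conclusion stands.
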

    \begin{proof}[Proof of \Cref{lem:err-omega-htheta}]
        We split the proof into different parts.

        \paragraph{Part (1)} 
        By Taylor's expansion, we have that
        \begin{align}
            \PP\{{\omega}_{aj}(\bZ, \hat{\theta}_{aj}^{\init})\}&=-\PP\{\psi(\tilde{Y}_j(a),\hat{\theta}_{aj}^{\init})\} \notag\\
            &= -f_{aj}(\tilde{\theta}_{aj}) (\hat{\theta}_{aj}^{\init} - \tilde{\theta}_{aj}) + \cO(|\hat{\theta}_{aj}^{\init} - \tilde{\theta}_{aj}|^2) \notag\\
            & = -f_{aj}(\tilde{\theta}_{aj}) (\hat{\theta}_{aj}^{\init} - \tilde{\theta}_{aj}) + \cO(n^{-2\gamma}) \notag\\
            &= \cO(n^{-\gamma}), \label{eq:lem:err-omega-htheta-1}
        \end{align}
        which is uniformly over $j\in[p]$ by Conditions \ref{asm:quantile-boundedness} and \ref{asm:quantile-init}.
        Note that
        \begin{align*}
            \PP\{{\omega}_{aj}(\bZ, \hat{\theta}_{aj}^{\init}) - {\omega}_{aj}(\bZ, \tilde{\theta}_{aj})\} &= -\PP{\psi}\{(Y_j(a), \hat{\theta}_{aj}^{\init})\} + 0 \\
            &= -\PP{\psi}\{(Y_j(a), \hat{\theta}_{aj}^{\init})\}\\
            &= - f_{aj}(\tilde{\theta}_{aj}) (\hat{\theta}_{aj}^{\init} - \tilde{\theta}_{aj}) + \cO(n^{-2\gamma}) .
        \end{align*}
        Then, we have that
        \begin{align*}
            &\max_{j\in[p]}|(\hat{\theta}_{aj}^{\init} - \tilde{\theta}_{aj}) + f_{aj}({\tilde{\theta}}_{aj})^{-1}\PP[{\omega}_{aj}(\bZ, \hat{\theta}_{aj}^{\init})-{\omega}_{aj}(\bZ, \tilde{\theta}_{aj})]| \\
            =& \max_{j\in[p]} |(\hat{\theta}_{aj}^{\init} - \tilde{\theta}_{aj}) + f_{aj}(\tilde{\theta}_{aj})^{-1} (- f(\tilde{\theta}_{aj}) (\hat{\theta}_{aj}^{\init} - \tilde{\theta}_{aj}))| + \cO(n^{-2\gamma})\\
            =& \cO(n^{-2\gamma}),
        \end{align*}
        which finishes the proof of the first part.

        \paragraph{Part (2)} Note that
        \begin{align}
            &\max_{j\in[p]}\|{\omega}_{aj}(\bZ, \hat{\theta}_{aj}^{\init}) - {\omega}_{aj}(\bZ, \tilde{\theta}_{aj})\|_{L_2} \notag\\
             &=\max_{j\in[p]} 
            \left[\left\|\left(\frac{\ind\{A=a\}}{\pi_a(\bW)} - 1\right) \left({\nu}_{aj}(\bW,\hat{\theta}_{aj}^{\init}) - \nu_{aj}(\bW,\tilde{\theta}_{aj})\right) \right\|_{L_2} \right. \\
             & \qquad\left.+ \left \| \frac{\ind\{A=a\}}{\pi_a(\bW)} (\ind\{\tilde{Y}_j \leq \hat{\theta}_{aj}^{\init}\} - \ind\{\tilde{Y}_j \leq \tilde{\theta}_{aj} \}) \right \|_{L_2}\right] \notag\\
             & \lesssim \max_{j\in[p]} 
            \left[\left\| {\nu}_{aj}(\bW,\hat{\theta}_{aj}^{\init}) - \nu_{aj}(\bW,\tilde{\theta}_{aj}) \right\|_{L_2}  + \left \|  \ind\{\tilde{Y}_j(a) \leq \hat{\theta}_{aj}^{\init}\} - \ind\{\tilde{Y}_j(a) \leq \tilde{\theta}_{aj} \} \right \|_{L_2}\right].
            \label{eq:lem:err-omega-htheta-eq-2}
        \end{align}
        Since $\hat{\theta}_{aj}^{\init}$ is estimated from a separate independent sample, in the following analysis, we condition on $\hat{\theta}_{aj}^{\init}$. By Jenson's inequality, we have
        \[
        \begin{aligned}
            &\,\EE[(\nu_{aj}(\bW, \hat{\theta}_{aj}^{\init}) - \nu_{aj}(\bW, \tilde{\theta}_{aj}))^2 ] \\
            = &\,  \EE[(\PP(\tilde{Y}_j \leq \hat{\theta}_{aj}^{\init} \mid \bW, A=a) - \PP (\tilde{Y}_j \leq {\tilde{\theta}}_{aj} \mid \bW, A=a))^2 ] \\
            = &\, \EE [ (\EE[\ind\{\tilde{Y}_j \leq \hat{\theta}_{aj}^{\init}\}-\ind\{\tilde{Y}_j \leq \tilde{\theta}_{aj}\} \mid \bW, A=a])^2] \\
            \leq &\, \EE [ \EE[(\ind\{\tilde{Y}_j \leq \hat{\theta}_{aj}^{\init}\}-\ind\{\tilde{Y}_j \leq \tilde{\theta}_{aj}\})^2 \mid \bW, A=a]] \\
            =&\, \EE [ \EE[\ind\{\tilde{Y}_j \leq \hat{\theta}_{aj}^{\init}\}+ \ind\{\tilde{Y}_j \leq \tilde{\theta}_{aj}\} -2 \ind\{\tilde{Y}_j \leq \hat{\theta}_{aj}^{\init}\wedge \tilde{\theta}_{aj}\} \mid \bW, A=a]] \\
            = &\, \PP(\tilde{Y}_j(a) \leq \hat{\theta}_{aj}^{\init}) + \PP(\tilde{Y}_j(a) \leq \tilde{\theta}_{aj}) - 2\PP(\tilde{Y}_j (a)\leq \hat{\theta}_{aj}^{\init}\wedge \tilde{\theta}_{aj})
        \end{aligned}
        \]
        where the last equation follows from the identification equation $\EE[\EE(\ind\{\tilde{Y}_j \leq \tilde{\theta}\}\mid \bW, A=a) ] = \PP(\tilde{Y}_j (a)\leq \tilde{\theta})$. Since $f_{aj}$ is uniformly bounded in $\cB(\tilde{\theta}_{aj},\delta)$ from Assumption \ref{asm:quantile-boundedness}, by Taylor's expansion we have
        \[
        \begin{aligned}
            &\,\PP(\tilde{Y}_j(a) \leq \hat{\theta}_{aj}^{\init}) + \PP(\tilde{Y}_j(a) \leq \tilde{\theta}_{aj}) - 2\PP(\tilde{Y}_j (a)\leq \hat{\theta}_{aj}^{\init}\wedge \tilde{\theta}_{aj})\\
            = &\, |\PP(\tilde{Y}_j(a) \leq \hat{\theta}_{aj}^{\init}) - \PP(\tilde{Y}_j(a) \leq \tilde{\theta}_{aj})| \\
            \lesssim &\, |\hat{\theta}_{aj}^{\init} - \tilde{\theta}_{aj}| = \cO(n^{-\gamma}).
        \end{aligned}
        \]
        The upper bound is uniform over $j \in [p]$ and we have
        \[
        \max_{j\in[p]} 
            \left\| {\nu}_{aj}(\bW,\hat{\theta}_{aj}^{\init}) - \nu_{aj}(\bW,\tilde{\theta}_{aj}) \right\|_{L_2} = \cO(n^{-\gamma/2}).
        \]
        Similarly, we have
        \[
        \begin{aligned}
            &\, \left \|  \ind\{\tilde{Y}_j(a) \leq \hat{\theta}_{aj}^{\init}\} - \ind\{\tilde{Y}_j(a) \leq \tilde{\theta}_{aj} \} \right \|_{L_2}^2 \\
            = &\, \EE [\ind\{\tilde{Y}_j(a) \leq \hat{\theta}_{aj}^{\init}\} + \ind\{\tilde{Y}_j(a) \leq \tilde{\theta}_{aj}\}-2 \ind\{\tilde{Y}_j(a) \leq \hat{\theta}_{aj}^{\init}\} \ind\{\tilde{Y}_j(a) \leq \tilde{\theta}_{aj}\}] \\
            = &\, \PP(\tilde{Y}_j(a) \leq \hat{\theta}_{aj}^{\init}) + \PP(\tilde{Y}_j(a) \leq \tilde{\theta}_{aj}) - 2\PP(\tilde{Y}_j (a)\leq \hat{\theta}_{aj}^{\init}\wedge \tilde{\theta}_{aj}) \\
            = &\, \cO(n^{-\gamma})
        \end{aligned}
        \]
        \[
        \max_{1 \leq j \leq p}\left \|  \ind\{\tilde{Y}_j(a) \leq \hat{\theta}_{aj}^{\init}\} - \ind\{\tilde{Y}_j(a) \leq \tilde{\theta}_{aj} \} \right \|_{L_2} = \cO(n^{-\gamma/2}).
        \]
        From \Cref{lem:emp-process-term}, we have that
        \begin{align*}
            \max_{j\in[p]}|(\PP_n-\PP)\{{\omega}_{aj}(\bZ, \hat{\theta}_{aj}^{\init}) - {\omega}_{aj}(\bZ, \tilde{\theta}_{aj})\}| &= \cO\left( \sqrt{\frac{\log p}{n}} n^{-\gamma/2} + \frac{\log p}{n} \right),
        \end{align*}
        which finishes the proof of the second part.

        \paragraph{Part (3)} Note that
        \begin{align*}
            \PP_n\{{\omega}_{aj}(\bZ, \hat{\theta}_{aj}^{\init} )\} &= (\PP_n-\PP)[{\omega}_{aj}(\bZ, \hat{\theta}_{aj}^{\init}) - {\omega}_{aj}(\bZ, \tilde{\theta}_{aj})] + \PP_n [{\omega}_{aj}(\bZ, \tilde{\theta}_{aj})] + \PP[{\omega}_{aj}(\bZ, \hat{\theta}_{aj}^{\init})] .
        \end{align*}
        Since $\omega$ is centered and bounded, by Lemma \ref{lem:emp-process-term} we have
        \[
        \max_{1\leq j \leq p}|(\PP_n-\PP) [{\omega}_{aj}(\bZ, \tilde{\theta}_{aj})]| = \cO \left ( \frac{\log p}{n} + \sqrt{\frac{\log p}{n}} \right)= \cO \left ( \sqrt{\frac{\log p}{n}} \right) .
        \]
        Combining it with \eqref{eq:lem:err-omega-htheta-1} and Part (2), we further have
        \begin{align*}
            \max_{j\in[p]}|\PP_n\{{\omega}_{aj}(\bZ, \hat{\theta}_{aj}^{\init} )\}| &=\cO\left( \sqrt{\frac{\log p}{n}} n^{-\gamma/2} + \frac{\log p}{n} + \sqrt{\frac{\log p}{n}} +  n^{-\gamma}\right)\\
            &= \cO\left(  \frac{\log p}{n} + \sqrt{\frac{\log p}{n}} +  n^{-\gamma}\right)
        \end{align*}
        which completes the proof.
    \end{proof}

    \begin{lemma}[Lindeberg CLT for triangular array]\label{lem:lindeberg}
        Let $m=m_n$ and $p=p_n$ be two sequences indexed by $n$.
        Consider the influence-function-based linear expansion for estimator $\hat{\tau}_j$ of $\tau_j$:
        \[\sqrt{n}(\hat{\tau}_j - \tau_j) = \sqrt{n}(\PP_n-\PP)\{\varphi_{m_nj}\} + \epsilon_{m_nj},\qquad j=1,\ldots,p\]
        where $\varphi_{m_nj}$ is the influence function that depends on $m$ and the residual $\epsilon_j$'s satisfy that $\max_{j\in p_n}|\epsilon_{m_nj}|=\op(1)$ as $n\rightarrow\infty$.
        Let $B_n^2=\sum_{i\in[n]}\Var(\varphi_{m_nj}(\bZ_i))$.
        Further assume that (i) there exists a constant $c>0$, such that $\VV(\varphi_{m_nj}(\bZ_1))\geq c$, and (ii) there exists a sequence $\{L_n\}_{n\in\NN}$ such that $\max_{i\in[n]}|\varphi_{m_nj}(\bZ_i)| \leq L_n$ and $L_n/B_n\rightarrow0$, then 
        \[\sqrt{n}\frac{\hat{\tau}_j - \tau_j}{\Var\{\varphi_{m_nj}\}^{1/2}} \dto \cN(0,1) \]
    \end{lemma}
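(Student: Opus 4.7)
The plan is to combine the given linear expansion with Slutsky's theorem, reducing the claim to a Lindeberg-type central limit theorem applied to the row-independent triangular array $\{\varphi_{m_n j}(\bZ_i)\}_{i\leq n}$, and then to verify the Lindeberg condition via the assumed uniform envelope $L_n$.

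First, I would divide the given expansion by $\Var\{\varphi_{m_n j}\}^{1/2}$ to obtain
\begin{align*}
\sqrt{n}\,\frac{\hat{\tau}_j - \tau_j}{\Var\{\varphi_{m_n j}\}^{1/2}}
&= \frac{\sqrt{n}(\PP_n - \PP)\{\varphi_{m_n j}\}}{\Var\{\varphi_{m_n j}\}^{1/2}}
+ \frac{\epsilon_{m_n j}}{\Var\{\varphi_{m_n j}\}^{1/2}}.
\end{align*}
By hypothesis $\epsilon_{m_n j} = \op(1)$ and $\Var\{\varphi_{m_n j}(\bZ_1)\}\geq c > 0$, so the second term is $\op(1)$. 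Setting $\xi_{n i} = \varphi_{m_n j}(\bZ_i) - \EE[\varphi_{m_n j}(\bZ_i)]$, the first term equals $B_n^{-1}\sum_{i=1}^n \xi_{n i}$, since for i.i.d.\ observations $B_n^2 = n\Var\{\varphi_{m_n j}(\bZ_1)\}$.

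Next, I would apply the Lindeberg CLT for triangular arrays to $\{\xi_{n i}\}_{i=1}^n$: the row entries are i.i.d., mean zero, with row-sum variance $B_n^2$. The Lindeberg condition requires that, for every $\eta > 0$, $B_n^{-2}\sum_{i=1}^n \EE[\xi_{n i}^2\,\ind\{|\xi_{n i}| > \eta B_n\}] \to 0$. Since $|\xi_{n i}| \leq 2 L_n$ deterministically and $L_n / B_n \to 0$, for all sufficiently large $n$ one has $2 L_n < \eta B_n$, so the indicator vanishes and the entire sum is identically zero. Hence Lindeberg's CLT yields $B_n^{-1}\sum_{i=1}^n \xi_{n i} \dto \cN(0,1)$, and Slutsky's theorem combined with the $\op(1)$ residual delivers the conclusion.

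There is no real obstacle here: the uniform envelope $L_n$ with $L_n/B_n \to 0$ makes the Lindeberg tail sum vanish trivially, bypassing the need for a Lyapunov-type third-moment calculation. The only point requiring care is preserving the $\op(1)$ rate of the remainder after division by $\Var\{\varphi_{m_n j}\}^{1/2}$, which is precisely the role of the variance lower bound $c$; without it, super-efficient behavior could cause the ratio to blow up and invalidate the Slutsky step.
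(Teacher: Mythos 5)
Your proposal is correct and follows essentially the same route as the paper's proof: verify the Lindeberg condition for the triangular array by noting that the uniform envelope $L_n$ with $L_n/B_n\to 0$ forces the tail indicator to vanish for large $n$, then use the identity $B_n^2=n\Var\{\varphi_{m_nj}\}$ and the variance lower bound together with Slutsky to absorb the $\op(1)$ remainder. The only cosmetic difference is that you re-center $\varphi_{m_nj}$ (giving envelope $2L_n$) whereas the paper takes the influence function as already centered.
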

    \begin{proof}[Proof of \Cref{lem:lindeberg}]
        Note that $\varphi_{m_nj}$ is the centered influence function such that $\EE[\varphi_{m_nj}(\bZ)] = 0$.
        Let $X_{nk} = \varphi_{m_nj}(\bZ_k)$.
        From assumption (ii) that $\max_{k\in[n]}|X_{nk}|\leq L_n$ and $L_n/B_n\rightarrow 0$, we have that, for any $\xi>0$,
        \begin{align*}
            \lim _{n \rightarrow \infty} \frac{1}{B_n^2} \sum_{k=1}^{n} \EE\left[ X_{nk}^2 \ind\{|X_{n k}| \geq \xi B_n\} \right]=0.
        \end{align*}
        This verifies Lindeberg's condition for a triangular array of random variables.
        From \citet[Theorem 27.2]{billingsley1995probability}, it follows that
        \[ \frac{n(\PP_n-\PP)\{\varphi_{m_nj}\}}{B_n} \dto \cN(0,1),\]
        as $n\rightarrow\infty$.
        Because $\bZ_i$'s are identically distributed, we have $B_n^2 = n\VV(\varphi_{m_nj})$.
        This implies that
        \[ \frac{\sqrt{n}(\PP_n-\PP)\{\varphi_{m_nj}\}}{\VV(\varphi_{m_nj})^{1/2}} \dto \cN(0,1),\]
        as $n\rightarrow\infty$.

        From assumption (i) that $\VV(\varphi_{m_nj})\geq c>0$ and $\max_{j\in p_n}|\epsilon_{m_nj}|=\op(1)$, we further have 
        \[\max_{j\in p_n} \frac{|\epsilon_{m_nj}|}{\VV(\varphi_{m_nj})^{1/2}} = \op(1)\]
        as $n\rightarrow\infty$.
        Consequently, the conclusion follows.
    \end{proof}

\clearpage
\section{Multiple testing}\label{app:sec:inference}

    \subsection{Proof of \Cref{prop:max-Gaussian}}\label{subsec:max-Gaussian}

    \begin{proof}[Proof of \Cref{prop:max-Gaussian}]
        We present the proof for $\tau_j=\tau_j^\STE$, and the proof for $\tau_j=\tau_j^\QTE$ follows similarly, which we omit for simplicity.
        From \Cref{thm:lin-STE} and \Cref{prop:Var-STE}, we have the following expansion on the statistic:
        \begin{align}
            \sqrt{n}\frac{\hat{\tau}_j - \tau_j^*}{\hat{\sigma}_j} &= \frac{\sqrt{n}}{\hat{\sigma}_j} \frac{1}{n}\sum_{i=1}^n{\varphi}_{ij} + \frac{\sqrt{n} \epsilon_j}{\hsigma_j} ,\notag\\
            &= \frac{\sqrt{n}}{{\sigma}_j} \frac{1}{n}\sum_{i=1}^n{\varphi}_{ij} + \left(\frac{1}{\hat{\sigma}_j}  - \frac{1}{{\sigma}_j} \right)\frac{1}{\sqrt{n}}\sum_{i=1}^n{\varphi}_{ij} + \frac{\sqrt{n} \epsilon_j}{\hsigma_j} ,\notag\\
            &= \frac{\sqrt{n}}{{\sigma}_j} \frac{1}{n}\sum_{i=1}^n\varphi_{ij} + \epsilon_j', \label{eq:lem:max-Gaussian-eq-1}
        \end{align}
        where $\epsilon_j' = \left(\frac{1}{\hat{\sigma}_j}  - \frac{1}{{\sigma}_j} \right)\frac{1}{\sqrt{n}}\sum_{i=1}^n{\varphi}_{ij} + \frac{\sqrt{n} \epsilon_j}{\hsigma_j}$.
        Note that the covariance matrix of the true scaled influence function is given by
        \[\Cov\left(\frac{1}{\sqrt{n}{\sigma}_j} \sum_{i=1}^n {\bvarphi}_{i\cS}\right) = \bD_{\cS}^{-1}\bE_\cS \bD_{\cS}^{-1},\]
        where $\bE_{\cS}=\EE[{\bvarphi}_{i\cS}{\bvarphi}_{i\cS}^{\top}]$ and $\bD_{\cS}=\text{diag}((\hat{\sigma}_j)_{j \in \cS})$.
        The associated gaussian vector is defined as $\bg_{0\cS} \sim\cN(\zero, \bD_\cS^{-1}\bE_\cS\bD_\cS^{-1})$.
        We first consider one-sided test problems with the pair of maximum statistics and the Gaussian vector based on linear expansion \eqref{eq:lem:max-Gaussian-eq-1}:
        \begin{align*}
            \overline{M}_{\cS} &= \max_{j\in\cS}\sqrt{n}\frac{\hat{\tau}_j-\tau_j^*}{\hat{\sigma}_j},\qquad W_{\cS} = \max_{j\in\cS} (g_{\cS})_j,
        \end{align*}
        and the pair based on the true influence functions and variances:
        \begin{align*}
            \overline{M}_{0\cS} &= \max_{j\in\cS} \frac{1}{\sqrt{n}} \sum_{i=1}^n \frac{\varphi_{ij}}{\sigma_j},\qquad             
            W_{0\cS} = \max_{j\in\cS} (g_{0\cS})_j.
        \end{align*}
        We next show that the distribution function of $\overline{M}_\cS$ can be approximated by $W_\cS$ uniformly.

        \paragraph{Step (1) Bounding difference between $W_\cS$ and $W_{0\cS}$.}        
        From Theorem J.1 of \citet{chernozhukov2013gaussian}, 
        \begin{align}
            \sup_{\cS \subseteq \cA^*}\sup_x |\PP(W_\cS > x) - \PP(W_{0\cS} > x\mid \{\bZ_i\}_{i=1}^n)| \pto 0. \label{eq:lem:max-gaussian-eq-1}
        \end{align}
        is implied if $\sup_{\cS \subseteq \cA^*} \|\hat{\bD}_{\cS}^{-1}\hat{\bE}_{\cS}\hat{\bD}_{\cS}^{-1} - \bD_{\cS}^{-1}\bE_{\cS}\bD_{\cS}^{-1}\|_{\max} =\Op(n^{-c})$ for some $c>0$.
        Because $\max_{j\in\cS}\sigma_j^2\geq c>0$, when $\log p =o(n^{2\left(\frac{1}{4}\wedge\alpha\wedge\beta\right)})$ and $\delta_m = o(n^{-1/2})$, we have that $\max_{j\in\cS}|\hat{\sigma}_j^{-1}-\sigma_j^{-1}|=\Op(r_{\sigma})$ from the proof of \Cref{prop:Var-STE}, where $r_{\sigma}= n^{-\alpha\wedge\beta} + \sqrt{\log p /n}$.
        On the other hand, \eqref{eq:lem:max-Gaussian-eq-1} also implies that,
        \begin{align*}
            &\sup_{\cS \subseteq \cA^*}\|\hat{\bE}_{\cS} - \bE_\cS\|_{\max} \\
            &=\max_{k,\ell\in\cA^*}|\hat{E}_{k\ell} - E_{k\ell}| \\
            &= \max_{k,\ell\in\cA^*}\left|\frac{1}{n}\sum_{i=1}^n(\hat{\varphi}_{ik}\hat{\varphi}_{i\ell} - \EE[\varphi_{ik}\varphi_{i\ell}])\right| \\
            &= \max_{k,\ell\in\cA^*}\left|\frac{1}{n}\sum_{i=1}^n[\hat{\varphi}_{ik}(\hat{\varphi}_{i\ell}- \varphi_{i\ell}) +  (\hat{\varphi}_{ik}-\varphi_{ik})\varphi_{i\ell} + (\varphi_{ik}\varphi_{i\ell}-\EE[\varphi_{ik}\varphi_{i\ell}])
            ]\right|\\
            &\leq    \max_{k,\ell\in\cA^*} \left|\frac{1}{n}\sum_{i=1}^n\hat{\varphi}_{ik}(\hat{\varphi}_{i\ell}- \varphi_{i\ell})\right| + \max_{k,\ell\in\cA^*}\left|\frac{1}{n}\sum_{i=1}^n (\hat{\varphi}_{ik}-\varphi_{ik})\varphi_{i\ell}\right| + \max_{k,\ell\in\cA^*}\left|\frac{1}{n}\sum_{i=1}^n (\varphi_{ik}\varphi_{i\ell}-\EE[\varphi_{ik}\varphi_{i\ell}])
            \right|\\
            &\leq \max_{k,\ell\in\cA^*} \max_{i}|\hat{\varphi}_{ik}|\cdot \frac{1}{n}\sum_{i=1}^n|\hat{\varphi}_{i\ell}- \varphi_{i\ell}| + \max_{k,\ell\in\cA^*} \frac{1}{n}\sum_{i=1}^n|\hat{\varphi}_{ik}-\varphi_{ik}|\cdot \max_i|\varphi_{i\ell}| + \Op\left(\sqrt{\frac{\log p}{n}} \right)\\
            &= \Op\left(r_{\varphi} \right)
        \end{align*}
        where in the last inequality we use \eqref{eq:expect-diff-influence-est} with $r_{\varphi}=n^{-\alpha\wedge\beta} + \sqrt{\log p /n}$ and the sub-Gaussianity of $\varphi_{ik}\varphi_{i\ell}$'s.
        Then we have
        \begin{align*}
            \sup_{\cS \subseteq \cA^*}\|\hat{\bD}_\cS^{-1}\hat{\bE}_\cS\hat{\bD}_\cS^{-1} - \bD_\cS^{-1}\bE_\cS\bD_\cS^{-1}\|_{\max} &\leq \sup_{\cS \subseteq \cA^*} \|\hat{\bD}_\cS^{-1}\hat{\bE}_\cS\hat{\bD}_\cS^{-1} - \bD_\cS^{-1}\hat{\bE}_\cS\hat{\bD}_\cS^{-1}\|_{\max} \\
            &\quad +  \sup_{\cS \subseteq \cA^*}\|\bD_\cS^{-1}\hat{\bE}_\cS\hat{\bD}_\cS^{-1} - \bD_\cS^{-1}\hat{\bE}_\cS\bD_\cS^{-1}\|_{\max} \\
            &\quad + \sup_{\cS \subseteq \cA^*}\|\bD_\cS^{-1}\hat{\bE}_\cS\bD_\cS^{-1} - \bD_\cS^{-1}\bE_\cS\bD_\cS^{-1}\|_{\max}\\
            &= \Op\left(r_{\varphi} + r_{\sigma}\right).
        \end{align*}
        Under the condition that $\log^2 (pn)\max\{\log^5(pn)/n, n^{-(\alpha\wedge\beta)}\}= o(1)$, we have that $(\log p)^2(r_{\varphi} + r_{\sigma})= o(1)$.
        Then, from Theorem J.1 of \citet{chernozhukov2013gaussian}, it follows that \eqref{eq:lem:max-gaussian-eq-1} holds.

        \paragraph{Step (2) Bounding difference between $W_{0\cS}$ and $\overline{M}_{0\cS}$.}
        Under \Cref{asm:cor} and the condition that $\log(pn)^7/n\leq C_2 n^{-c_2}$, because $\EE[\varphi_{ij}^2]/\sigma_j^2=1$ and constant $c, C$ is independent of $\cS$, from Corollary 2.1 of \citet{chernozhukov2013gaussian}, we have that
        \begin{align}
            \sup_{\cS \subseteq \cA^*}\sup_x | \PP(W_{0\cS} > x) - \PP(\overline{M}_{0\cS} > x)| \rightarrow 0. \label{eq:lem:max-gaussian-eq-2}
        \end{align}

        \paragraph{Step (3) Bounding difference between $\overline{M}_{0\cS}$ and $\overline{M}_{\cS}$.}
        We begin by bounding $\max_{j\in\cS}|\epsilon_j'|$.        
       
        Because $\sigma_j$ is uniformly lower bounded away from zero, we have $|\hsigma_j-\sigma_j| = |(\hsigma_j^2-\sigma_j^2)/(\hsigma_j+\sigma_j)| \lesssim |\hsigma_j^2-\sigma_j^2|$, which implies that $\max_{j\in[p]}|\hsigma_j-\sigma_j| \lesssim r_{\sigma}$ with probability tending to one from the proof of \Cref{prop:Var-STE}.        
        From the proof of \Cref{thm:lin-STE} and the boundedness assumptions, we have
        \begin{align*}
            \max_{j\in\cS} |\epsilon_j'|\leq \max_{j\in\cA^*} |\epsilon_j'| &\leq \max_{j\in\cA^*} \left|\frac{1}{\sqrt{n}}\sum_{i}\frac{\varphi_{ij}}{\sigma_j\hsigma_j} (\sigma_j-\hsigma_j) \right| + \max_{j\in\cA^*} \left|\frac{\sqrt{n}\epsilon_j}{\hsigma_j} \right| \notag\\
            &=\Op\left( \max_{j\in\cA^*}|\sigma_j-\hsigma_j| \max_{j\in\cA^*} \left|\frac{1}{\sqrt{n}}\sum_i \varphi_{ij} \right| + \max_{j\in\cA^*}|\sqrt{n}\epsilon_j| \right) \notag\\
            &=\Op \left( r_{\sigma}\sqrt{\log p} + n^{-(\alpha \wedge \beta)}\sqrt{\log p}  + (\log p) /\sqrt{n} + n^{1/2-(\alpha+\beta)} + \sqrt{n }\delta_m\right) \notag\\
            &\, =\Op \left(  n^{-(\alpha \wedge \beta)}\sqrt{\log p}  + (\log p) /\sqrt{n} + n^{1/2-(\alpha+\beta)} + \sqrt{n }\delta_m\right) \notag\\
            &\, = \op(\xi_n) ,
        \end{align*}
        where $\xi_n = [n^{-(\alpha \wedge \beta)}\sqrt{\log p}  + (\log p) /\sqrt{n} + n^{1/2-(\alpha+\beta)}+\sqrt{n}\delta_m]  \log (n) $.        
        Then,
        \begin{align}
            \sup_{\cS \subseteq \cA^*}\PP(|\overline{M}_{0\cS}-\overline{M}_{\cS}|>\xi_n) \leq \PP\left(\max_{j \in \cA^* } |\epsilon_j'|>\xi_n\right)  \rightarrow 0.\label{eq:Delta-1}
        \end{align}
        Then we have
        \begin{align*}
            &| \PP(\overline{M}_{0\cS} > x) - \PP(\overline{M}_{\cS} > x)| \\
            \leq&\, \PP(\overline{M}_{0\cS} \leq x, \overline{M}_{\cS} >x) + \PP(\overline{M}_{0\cS} > x, \overline{M}_{\cS} \leq x) \\
            \leq&\,   \PP(\overline{M}_{0\cS} > x - \xi_n, \overline{M}_{0\cS} \leq x)+\PP(\overline{M}_{0\cS} \leq x+\xi_n, \overline{M}_{0\cS} >x) + 2\PP\left(\max_{j \in \cA^* }|\epsilon_j'|>\xi_n \right)\\
            \leq&\, \PP(x-\xi_n < \overline{M}_{0\cS} \leq x+\xi_n) + 2\PP\left(\max_{j \in \cA^* }|\epsilon_j'|>\xi_n \right)\\
            \leq &\, \PP(x-\xi_n < W_{0\cS} \leq x+\xi_n) + 2\PP\left(\max_{j \in \cA^* }|\epsilon_j'|>\xi_n \right) + 2 \sup_{\cS \subseteq \cA^*}\sup_x | \PP(W_{0\cS} > x) - \PP(\overline{M}_{0\cS} > x)|
        \end{align*}
        which implies that
        \begin{align*}
            &\sup_{\cS \subseteq \cA^*} \sup_x | \PP(\overline{M}_{0\cS} > x) - \PP(\overline{M}_{\cS} > x)| \\
            \leq&\,  \sup_{\cS \subseteq \cA^*}\sup_x \PP(x-\xi_n \leq W_{0\cS} \leq x+\xi_n) + 2\PP\left(\max_{j \in \cA^* }|\epsilon_j'|>\xi_n \right) \\
            &\qquad + 2 \sup_{\cS \subseteq \cA^*}\sup_x | \PP(W_{0\cS} > x) - \PP(\overline{M}_{0\cS} > x)|\\
            \lesssim& \, \xi_n \sqrt{\log (pn)} + 2\PP\left(\max_{j \in \cA^* }|\epsilon_j'|>\xi_n \right) + 2 \sup_{\cS \subseteq \cA^*}\sup_x | \PP(W_{0\cS} > x) - \PP(\overline{M}_{0\cS} > x)|,
        \end{align*}
        where the second inequality is from the anti-concentration inequality \citep[Lemma 2.1]{chernozhukov2013gaussian}.
        Under the condition that $\max\{\log(pn)^7/n, \log(pn)^2 n^{-(\alpha\wedge\beta)}, \sqrt{ n\log (pn)}\delta_m\} \leq C_2 n^{-c_2}$, we have $\xi_n\sqrt{\log(pn)}\lesssim n^{-c_2}\log n = o(1)$.
        This implies that
        \begin{align}
            \sup_{\cS \subseteq \cA^*} \sup_x | \PP(\overline{M}_{0\cS} > x) - \PP(\overline{M}_{\cS} > x)| \rightarrow 0. \label{eq:lem:max-gaussian-eq-3}
        \end{align}

        \paragraph{Step (4) Combining results from previous steps.} Finally, combining \eqref{eq:lem:max-gaussian-eq-1}-\eqref{eq:lem:max-gaussian-eq-3} yields that
        \begin{align*}
            \sup_{\cS \subseteq \cA^*}\sup_x | \PP(\overline{M}_{0\cS} > x) - \PP(W_{0\cS} > x\mid \{\bZ_i\}_{i=1}^n)| \pto 0.
        \end{align*}
        Analogously results also hold for $\overline{M}_{\cS}' = \max_j  - \sqrt{n}(\hat{\tau}_j - \tau_j)/\hsigma_j$, $W_{\cS}'=\max_j -g_j$, $\overline{M}_{0\cS}' =\max_j -n^{-1/2}\sum_{i=1}^n\varphi_{ij}/\sigma_j $ and $W_{0\cS}'=\max_j -(g_0)_j$ , by applying the same argument.
        Thus, we have
        \begin{align*}
            & \sup_{\cS \subseteq \cA^*} \sup_x | \PP({M}_{\cS} > x) - \PP(\|\bg_\cS\|_{\infty} > x\mid \{\bZ_i\}_{i=1}^n)| \\
            \leq & \sup_{\cS \subseteq \cA^*}\sup_x | \PP(\overline{M}_{\cS} > x) - \PP(W_{\cS} > x\mid \{\bZ_i\}_{i=1}^n)| \\
            &\qquad + \sup_{\cS \subseteq \cA^*}\sup_x | \PP(\overline{M}_{\cS}' < x) - \PP(W_{\cS}' < x\mid \{\bZ_i\}_{i=1}^n)| 
            \pto  0,
        \end{align*}
        which finishes the proof.
        \end{proof}

    \subsection{Proof of \Cref{prop:simul-inference}}\label{app:subsec:simul-inference}
    \begin{proof}[Proof of \Cref{prop:simul-inference}]
        We split the proof into different parts.

        \paragraph{Part (1) Exact recovery of the active set.}
        From the proof of \Cref{prop:max-Gaussian}, we have
        \begin{align}
            \max_{1\leq j\leq p}|\hat{\sigma}_j^2 - \sigma_j^2| = \Op( r_{\sigma}).  \label{eq:sigma-uniform-conv}
        \end{align}
        Recall $\cA^* =  \{j \in [p] \mid \sigma_j^2 \geq c_1 \}$.
        From \Cref{asm:cor}, $\min_{j\in\cA^*}\sigma_j^2\geq c_1$ for some constant $c_1>0$.
        
        To screen out noninformative coordinates, define $\cn$ as in Proposition \ref{prop:simul-inference} for some constant $c>0$ and
        \begin{align*}
            {\cA}_1 &= \{j \in [p]\mid \hat{\sigma}_j^2 \geq \cn \}
        \end{align*}
       which is a random quantity because $\hat{\sigma}_j^2$ is the empirical variance.
        Then we have
        \[
        \PP(\cA^* \neq {\cA}_1 ) = \PP\left(\max_{j\in {\cA}_1 \setminus\cA^*} \hat{\sigma}_j^2 \geq \cn \right) + \PP\left(\min_{j\in \cA^*\setminus{\cA}_1} \hat{\sigma}_j^2 < \cn\right).
        \]
        For the first term, we have
        \begin{align*}
             \PP\left(\max_{j\in {\cA}_1 \setminus\cA^*} \hat{\sigma}_j^2 \geq \cn \right)  
            =&\PP\left(\max_{j\in {\cA}_1 \setminus\cA^*} \hat{\sigma}_j^2 - \min_{j\in {\cA}_1 \setminus\cA^*} {\sigma}_j^2 \geq \cn - \min_{j\in {\cA}_1 \setminus\cA^*} {\sigma}_j^2\right)  \\
            \leq &\PP\left(\max_{j\in {\cA}_1 \setminus\cA^*} |\hat{\sigma}_j^2 - {\sigma}_j^2| \geq \cn - \min_{j\in {\cA}_1 \setminus\cA^*} {\sigma}_j^2\right)  \\
            \leq& \PP\left(\max_{j\in [p]} |\hat{\sigma}_j^2-{\sigma}_j^2| \geq c_n - \max_{j\in \cA^{*c}} {\sigma}_j^2 \right)  \rightarrow 0,
        \end{align*}
         where we use the fact $c_n \rightarrow 0, \min_{j \in \cA^*} \sigma_j^2 \geq c_1$ and $\max_{j\in [p]} |\sigma_j^2 - \hat{\sigma}_j^2|=\Op(r_{\sigma}) = \op(c_n)$ as $m,n,p\rightarrow\infty$, from \Cref{asm:cor} and \eqref{eq:sigma-uniform-conv}.
        
        For the second term, similarly we have
        \begin{align*}
            &\,\PP\left(\min_{j\in \cA^*\setminus{\cA}_1} \hat{\sigma}_j^2 < \cn\right)\\            
            \leq &\, \PP\left(\min_{j\in \cA^*\setminus{\cA}_1} \hat{\sigma}_j^2 < \cn, \max_{j\in [p]} |\sigma_j^2 - \hat{\sigma}_j^2| \leq c_n\right) + \PP\left( \max_{j\in [p]} |\sigma_j^2 - \hat{\sigma}_j^2|> c_n\right)\\
            \leq &\, \PP\left(\min_{j \in \cA^*} {\sigma}_j^2 < 2\cn \right) + \PP\left( \max_{j\in [p]} |\sigma_j^2 - \hat{\sigma}_j^2|> c_n\right) \rightarrow 0.
        \end{align*}
        Thus, we have
        \begin{align*}
            \PP(\cA^* = {\cA}_1 ) \rightarrow 1.
        \end{align*}

        \paragraph{Part (2) FWER.} From Part (1), the family-wise error rate satisfies that
        \begin{align*}
            \FWER &= \PP(\hat{\cA}\cap \cV^{*c}\neq\varnothing) \\
            &= \PP(\hat{\cA}\cap \cV^{*c}\neq\varnothing, \cA^* = \cA_1) + o(1).
        \end{align*}
        Recall the maximal statistic is defined as $M_1 = \max_{j\in\hat{\cA}_1}|\sqrt{n}(\hat{\tau}_j-\tau_j)/\hat{\sigma}_j|$ and the null hypothesis is rejected only if $M_1>\hat{q}_{1}(\alpha)$ where $\hat{q}_{1}(\alpha)$ is the multiplier bootstrap quantile.
        From \Cref{lem:max-Gaussian-quantile}, we have that
        \begin{align*}
            \limsup\PP(\hat{\cA}\cap \cV^{*c}\neq\varnothing) &\leq \limsup\PP\left(\max_{j\in{\cA}^*}|\sqrt{n}(\hat{\tau}_j-\tau_j^*)/\hat{\sigma}_j| > \hat{q}_{1}(\alpha), \cA^* = \cA_1\right)\\
            &\leq \alpha .
        \end{align*}
        Since when $\cA^* = \cA_1$, $\hat{q}_{1}(\alpha)$ is also the bootstrap quantile of $\max_{j\in{\cA}^*}|\sqrt{n}(\hat{\tau}_j-\tau_j^*)/\hat{\sigma}_j|$. Therefore, combining the above results yields that
        \begin{align*}
            \limsup \FWER\leq \alpha,
        \end{align*}
        which finishes the proof.
    \end{proof}

    \subsection{Proof of \Cref{thm:FDPex}}\label{app:subsec:FDPex}
    \begin{proof}[Proof of \Cref{thm:FDPex}]
        We split the proof into different parts.

        \paragraph{Part (1) FDPex.}
        Recall $\cV$ is the output of the step-down and augment processes, and $M_j$ is the maximal statistic at step $j$.
        Let $\cV_{\ell^*}$ be the set of discoveries returned by the step-down process.
        From \Cref{prop:simul-inference} (1), the active set $\cA_{\ell+1}\subseteq\cA_{\ell}\subseteq\cA^*$ for all $\ell=1,2,\ldots,\ell^*-1$ with probability tending to one.        
        If $\cV_{\ell^*}\cap\cV^{*}\neq\varnothing$, then for the first $H_0^{(j)}$ generating false discoveries, from \Cref{lem:max-Gaussian-quantile} and \Cref{prop:simul-inference}, under the null hypothesis $H_0^{(j)}$, we always have 
        \[\limsup\FWER = \limsup\PP(M_j>\hat{q}_{j}(\alpha)) \leq \alpha\]
        as $m,n,p\rightarrow\infty$.
        This shows that the step-down procedure controls the family-wise error rate.
        By \citet[Theorem 1]{genovese2006exceedance}, it follows that the FDP after the augmentation step satisfies that \[\limsup \PP(\FDP>c)\leq \alpha,\]
        which finishes the proof of the first conclusion.

        \paragraph{Part (2) Power.}
        From the proof of \citet[Corollary 1]{chang2018confidence}, the standard results on Gaussian maximum imply that $\max_{\ell=1,\ldots,\ell^*}\hat{q}_{\ell}(\alpha)= C\sqrt{\log p}+\op (1)$ for some constant $C>0$.

        We next show that if there exists a $j_0\in\cA_{\ell^*}\cap\cV^*$, then the proposed maximum test on $\cA_{\ell^*}$ is able to reject the null hypothesis that $\tau_{j_0}=\tau_{j_0}^*$ for $\tau_{j_0}^*=0$, which implies the power is converging to 1. Formally,
        Let $\hat{q}_{\ell^*}(\alpha)$ be the corresponding estimated upper $\alpha$ quantile of the maximum statistic from the multiplier bootstrap procedure at $\ell^*$-th step. Let $E$ denote the event that stepdown process stops at $\ell^*$-th step and by definition of $\ell^*$ we have $\PP(E)=1$. Notice that 
        \begin{align*}
        &\,\PP(\cA_{\ell^*}\cap\cV^* \neq \varnothing )\\
         = &\, \PP(\cA_{\ell^*}\cap\cV^* \neq \varnothing , E) \\
         = &\, \PP\left(\max _{j \in \cA_{\ell^*}} \sqrt{n}\frac{\left|\hat{\tau}_j\right|}{{\hat{\sigma}}_j} > \hat{q}_{\ell^*}(\alpha), \cA_{\ell^*}\cap\cV^* \neq \varnothing , E\right) + \PP\left(\max _{j \in \cA_{\ell^*}} \sqrt{n}\frac{\left|\hat{\tau}_j\right|}{{\hat{\sigma}}_j} \leq \hat{q}_{j_0}(\alpha), \cA_{\ell^*}\cap\cV^* \neq \varnothing , E\right)
        \end{align*}
        By definition the stepdown process does not stop at $\ell^*$-th step if $M_{\ell^*} = \max _{j \in \cA_{\ell^*}} \sqrt{n}\frac{\left|\hat{\tau}_j\right|}{{\hat{\sigma}}_j} > \hat{q}_{\ell^*}(\alpha)$. Hence, the first term is zero. For the second term, suppose $j_0 \in \cA_{\ell^*}\cap\cV^*$ and we have
        \begin{align*}
             &\mathbb{P}\left(\max _{j \in \cA_{\ell^*}} \sqrt{n}\frac{\left|\hat{\tau}_j\right|}{{\hat{\sigma}}_j} \leq \hat{q}_{\ell^*}(\alpha),\cA_{\ell^*}\cap\cV^* \neq \varnothing \right) \\
             \leq & \mathbb{P}\left(\sqrt{n}\frac{\left|\hat{\tau}_{j_0}\right|}{{\hat{\sigma}}_{j_0}} \leq \hat{q}_{\ell^*}(\alpha),\cA_{\ell^*}\cap\cV^* \neq \varnothing \right)\\ 
            \leq& \mathbb{P}\left(\sqrt{n}\frac{\left|\hat{\tau}_{j_0} - \tau_{j_0}^*\right|}{{\hat{\sigma}_{j_0}}} \geq \sqrt{n}\frac{\left|{\tau}_{j_0}^*\right|}{{\hat{\sigma}_{j_0}}} - \hat{q}_{\ell^*}(\alpha)\right) ,
        \end{align*}
        Because $\left|\tau_{j_0}^*\right| / \hat{\sigma}_{j_0} = \left|\tau_{j_0}^*\right| /\sigma_{j_0}\cdot(1+\op(1)) \geq c(\log (p) / n )^{1 / 2}$ for large $m$ and $n$, under the assumed minimal signal strength condition, by Lemma \ref{prop:max-Gaussian} we have
        \[\mathbb{P}\left(\sqrt{n}\frac{\left|\hat{\tau}_{j_0} - \tau_{j_0}^*\right|}{{\hat{\sigma}_{j_0}}} \geq \sqrt{n}\frac{\left|{\tau}_{j_0}^*\right|}{{\hat{\sigma}_{j_0}}} - \hat{q}_{\ell^*}(\alpha)\right) \rightarrow 0,\]        
        as $m,n,p \rightarrow \infty$, when $c>C$. Therefore we have
        \[
        \PP(\cA_{\ell^*}\cap\cV^* \neq \varnothing ) \rightarrow 0,
        \]
        and
        \[
        \PP(\cV^* \subseteq \cV_{\ell^*} ) \rightarrow 1.
        \]
        Because the augmentation step only adds more discoveries, $\cV_{\ell^*}\subset \cV$, it does not decrease the power.
        Therefore, the final set of discoveries $\cV$ has power 1 asymptotically as $m,n,p$ tend to infinity.
    \end{proof}

    \subsection{Helper lemmas}\label{app:subsec:multiple-test-helper}

    \begin{lemma}[Quantile estimation based on Gaussian approximation]\label{lem:max-Gaussian-quantile}
        Suppose the conditions of \Cref{prop:max-Gaussian} hold.
        For $\bg_{\cS} \sim \cN(\zero,\hat{\bD}_\cS^{-1} \hat{\bE}_\cS \hat{\bD}_\cS^{-1})$, define the conditional $\alpha$-quantile $q_{\cS}(\alpha) := \inf\{t\in\RR\mid \PP(\|\bg_{\cS}\|_{\infty} \leq t\mid\{\bZ_i\}_{i=1}^n) \geq \alpha\}$.
        As $m,n,p\rightarrow \infty$, it holds that
        \begin{align*}
            \sup_{H_0^{\cS}:\cS \subseteq \cA^*} \sup_{\alpha\in(0,1)} |\PP(\overline{M}_{\cS} \leq q_{\cS}(\alpha) \mid \{\bZ_i\}_{i=1}^n) - \alpha| \pto 0.
        \end{align*}
    \end{lemma}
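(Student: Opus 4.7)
My approach is to convert the uniform Kolmogorov-type approximation from Lemma~\ref{prop:max-Gaussian} into a coverage statement for the bootstrap quantile $q_\cS(\alpha)$ by evaluating the two CDFs at $t = q_\cS(\alpha)$ and invoking the continuity of the Gaussian-maximum distribution to identify $\hat F_\cS(q_\cS(\alpha))$ with $\alpha$ on an event of high probability. Once Lemma~\ref{prop:max-Gaussian} is granted, the rest is a short chain of triangle inequalities.

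\textbf{Step 1: Kolmogorov bound from Lemma~\ref{prop:max-Gaussian}.} Write $F_\cS(t) = \PP(\overline{M}_\cS \leq t \mid \{\bZ_i\}_{i=1}^n)$ and $\hat F_\cS(t) = \PP(\|\bg_\cS\|_\infty \leq t \mid \{\bZ_i\}_{i=1}^n)$. Lemma~\ref{prop:max-Gaussian} yields
\[
\Delta_n := \sup_{\cS \subseteq \cA^*} \sup_{t \in \RR} |F_\cS(t) - \hat F_\cS(t)| \pto 0,
\]
while by construction $q_\cS(\alpha) = \inf\{t\in\RR : \hat F_\cS(t) \geq \alpha\}$ is the conditional $\alpha$-quantile of $\|\bg_\cS\|_\infty$.

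\textbf{Step 2: Coverage at the bootstrap quantile.} The envelope $\bg_\cS$ is a centered Gaussian vector with conditional covariance $\hat{\bD}_\cS^{-1}\hat{\bE}_\cS\hat{\bD}_\cS^{-1}$. By Assumption~\ref{asm:cor} and the uniform convergence of $\hat{\sigma}_j^2$ in Proposition~\ref{prop:Var-STE}, the diagonal entries of this covariance are bounded away from zero with probability tending to one, so $\|\bg_\cS\|_\infty$ has a continuous conditional distribution on a high-probability event. On this event $\hat F_\cS$ is continuous, hence $\hat F_\cS(q_\cS(\alpha)) = \alpha$ exactly, and the triangle inequality gives
\[
|F_\cS(q_\cS(\alpha)) - \alpha| = |F_\cS(q_\cS(\alpha)) - \hat F_\cS(q_\cS(\alpha))| \leq \Delta_n.
\]

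\textbf{Step 3: Double supremum.} Since the bound in Step~2 does not depend on the particular value $q_\cS(\alpha)$, taking the supremum over $\alpha \in (0,1)$ and then over $\cS \subseteq \cA^*$ preserves it:
\[
\sup_{\cS \subseteq \cA^*}\sup_{\alpha \in (0,1)} |F_\cS(q_\cS(\alpha)) - \alpha| \leq \Delta_n \pto 0,
\]
which is the claimed convergence.

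\textbf{Main obstacle.} Essentially all the work has already been absorbed into Lemma~\ref{prop:max-Gaussian}; the remaining delicate point is the exact identity $\hat F_\cS(q_\cS(\alpha)) = \alpha$ uniformly in $(\cS,\alpha)$. If the continuity argument is not cleanly available (for example because one wishes to avoid conditioning on the event that $\hat{\bE}_\cS$ is nondegenerate), the defect $\hat F_\cS(q_\cS(\alpha)) - \alpha$ can instead be controlled by the Nazarov anti-concentration inequality for Gaussian maxima as used in \citet{chernozhukov2013gaussian}, which gives $\hat F_\cS(q_\cS(\alpha) + \delta) - \hat F_\cS(q_\cS(\alpha) - \delta) \lesssim \delta \sqrt{\log p}$ and closes the small remaining gap without altering the $\op(1)$ conclusion.
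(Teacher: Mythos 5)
Your route is genuinely different from the paper's: the paper does not deduce this lemma from the conclusion of \Cref{prop:max-Gaussian}, but instead verifies conditions (14) and (15) of \citet{chernozhukov2013gaussian} (bounds on the linearization error $\Delta_{1\cS}$ and on the covariance estimation error $\Delta_{2\cS}$) and then invokes the proof of their Corollary 3.1. Unfortunately your shortcut has a genuine gap in Step 2. Writing $F_{\cS}(t)=\PP(\overline{M}_{\cS}\le t)$ for the unconditional distribution function that \Cref{prop:max-Gaussian} actually controls, the inequality $|F_{\cS}(q_{\cS}(\alpha))-\hat F_{\cS}(q_{\cS}(\alpha))|\le \Delta_n$ only bounds the marginal c.d.f.\ of $\overline{M}_{\cS}$ evaluated at the random threshold $q_{\cS}(\alpha)$ --- i.e.\ the coverage for an \emph{independent copy} of the statistic. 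It is not the quantity the lemma asserts and that \Cref{prop:simul-inference} needs, namely $\PP(\overline{M}_{\cS}\le q_{\cS}(\alpha))$ with both objects computed from the same data: $\overline{M}_{\cS}$ and $q_{\cS}(\alpha)$ are both measurable with respect to $\{\bZ_i\}_{i=1}^n$ (and the nuisance sample), hence strongly dependent, and $\PP(\overline{M}_{\cS}\le q_{\cS}(\alpha))\neq\EE[F_{\cS}(q_{\cS}(\alpha))]$ in general. (If you instead read $F_{\cS}$ as the conditional c.d.f.\ given $\{\bZ_i\}_{i=1}^n$, as your Step 1 literally states, then \Cref{prop:max-Gaussian} does not control $\sup_t|F_{\cS}(t)-\hat F_{\cS}(t)|$ at all, since its first probability is unconditional.)

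The standard repair --- and what the cited proof of Corollary 3.1 in \citet{chernozhukov2013gaussian} actually does --- is a decoupling step: on an event of probability tending to one, a Gaussian comparison inequality applied to $\Delta_{2\cS}$ (equivalently, to $\|\hat\bD_{\cS}^{-1}\hat\bE_{\cS}\hat\bD_{\cS}^{-1}-\bD_{\cS}^{-1}\bE_{\cS}\bD_{\cS}^{-1}\|_{\max}$) sandwiches the random quantile $q_{\cS}(\alpha)$ between \emph{deterministic} quantiles of the fixed Gaussian $\|\bg_{0\cS}\|_{\infty}$ at levels $\alpha\pm\nu_n$ with $\nu_n\to0$; only after replacing the random threshold by a deterministic one can you apply the Gaussian approximation for $\overline{M}_{\cS}$ and use anti-concentration to absorb the level shift $\nu_n$, uniformly in $\cS$ and $\alpha$. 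This is precisely why the paper's proof bounds $\Delta_{1\cS}$ and $\Delta_{2\cS}$ explicitly rather than citing the Kolmogorov-distance conclusion of \Cref{prop:max-Gaussian}. Your closing remark does invoke anti-concentration, but only to patch the continuity of $\hat F_{\cS}$, which is the minor issue; the dependence between $\overline{M}_{\cS}$ and $q_{\cS}(\alpha)$ is the one your argument does not address.
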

    \begin{proof}[Proof of \Cref{lem:max-Gaussian-quantile}]
        From \eqref{eq:Delta-1} in the proof of \Cref{prop:max-Gaussian}, we know that condition (14) in \citet{chernozhukov2013gaussian} holds for $\Delta_{1\cS} = \max_{j\in[p]} |\overline{M}_{\cS}-\overline{M}_{0\cS}|$ uniformly over $\cS\subseteq\cA^*$.
        On the other hand, with probability tending to one,
        \begin{align*}
            \Delta_{2\cS} &\leq \Delta_{2\cA^*}\\
            &= \max_{j\in\cA^*}\PP_n [(\varphi_{ij}/\sigma_j - \hat{\varphi}_{ij}/\hat{\sigma}_j)^2] \\
            &\leq \max_{j\in\cA^*}\PP_n [(\varphi_{ij} - \hat{\varphi}_{ij})^2]/{\sigma}_j^2 + \max_{j\in\cA^*}\PP_n [ \hat{\varphi}_{ij}^2](1/{\sigma}_j- 1/\hat{\sigma}_j)^2 \\
            & \lesssim  \max_{j\in\cA^*}\PP_n [(\varphi_{ij} - \hat{\varphi}_{ij})^2] + \max_{j\in\cA^*}(\sigma_{j} - \hat{\sigma}_{j})^2 \\
            & \lesssim  \max_{j\in\cA^*}\PP_n [|\varphi_{ij} - \hat{\varphi}_{ij}|] + \max_{j\in\cA^*}|\sigma_{j} - \hat{\sigma}_{j}| \\
            & = \Op\left(r_{\varphi} + r_{\sigma} \right),
        \end{align*}
        which follows similarly as in the proof of Step (1) for \Cref{prop:max-Gaussian}.
        Under the conditions that $\max\{\log(pn)^7/n, \log(pn)^{2}n^{-(\alpha\wedge\beta)}\}\leq C n^{-c}$, we have 
        \begin{align*}
            \sup_{\cS\subseteq\cA^*}\PP(\log(pn)^2\Delta_{2\cS} > n^{-c}) \leq \sup_{\cS\subseteq\cA^*}\PP(\log(pn)^2\Delta_{2\cS} > n^{-c}/\log (n)) \rightarrow 0,    
        \end{align*}
        which verifies condition (15) in \citet{chernozhukov2013gaussian}.
        From the proof of Corollary 3.1 in \citet{chernozhukov2013gaussian}, the conclusion follows.
    \end{proof}

\clearpage
\section{Experiment details}\label{app:sec:experiment}

\subsection{Estimation of QTE}\label{app:subsec:est-QTE}

\subsubsection{Initial estimator}
The initial IPW estimator $\hat{\theta}_{aj}^{\init}$ of the quantile $Y_j(a)$ can be obtained by solving the following estimating equation:
\begin{align*}
    \PP_n \left\{\frac{\ind\{A=a\}}{\hat{\pi}_a(\bW)} \psi(Y_j, {\theta})  \right\} = 0
\end{align*}
where $\psi$ is defined in \eqref{eq:quantile-pop}.

\subsubsection{Counterfactual density estimation}
Consider an unconfounded observational study with $(W, A, Y ) \sim \cP$, where $A$ is binary and $Y$ is continuous. Assume consistency, positivity, and exchangeability as usual.
Below, we derive an identifying expression and estimator for the density of $Y(a)$ at a point.

Define $V := \ind\{Y \leq y\}$.
Noting that the density of $Y(a)$ is simply the derivative of $\mathbb{P}\left(Y(a) \leq y\right)$. Letting $p_{Y(a)}(y)$ and $p_Y(y)$ denote the densities of $Y(a)$ and $Y$ respectively, we have
\begin{align*}
    p_{Y(a)}(y) & =\frac{\rd}{\rd y} \mathbb{P}\left(Y(a) \leq y\right) \\
    & =\frac{\rd}{\rd y} \mathbb{E}\left(V^a\right)=\frac{\rd}{\rd y} \mathbb{E}\left\{\mathbb{E}\left(V^a \mid W\right)\right\} \\
    & =\frac{\rd}{\rd y} \mathbb{E}\left\{\mathbb{E}\left(V^a \mid W, A=a\right)\right\} \\
    & =\frac{\rd}{\rd y} \mathbb{E}\{\mathbb{E}(V \mid W, A=a)\} \\
    & =\mathbb{E}\left\{\frac{\rd}{\rd y} \mathbb{P}(Y \leq y \mid W, A=a)\right\} \\
    & =\mathbb{E}\left\{p_Y(y \mid W, A=a)\right\},
\end{align*}
where the exchanging of integrals and derivatives is permitted by Leibniz's integral rule combined with the fact that $V$ is bounded.
As for estimation, we can reduce counterfactual density estimation to statistical density estimation. A natural doubly robust analog for this problem \citep{kim2018causal} that takes inspiration from one-step correction and kernel density estimation is given by
$$
\widehat{p}_{Y(a)}(y):=\frac{1}{h} \PP_n\left\{\frac{\ind\{A=a\}}{\widehat{\pi}\left(a, \bW\right)}\left(K\left(\frac{y-Y}{h}\right)-K\left(\frac{y-\widehat{\mu}\left(a,\bW\right)}{h}\right)\right)+K\left(\frac{y-\widehat{\mu}\left(a,\bW\right)}{h}\right)\right\}
$$
where $K$ is a kernel and $h$ is the kernel smoothing bandwidth. 
Similar pseudo observations could be used with IPW \citep[Remark 4]{kallus2019localized} or plug-in style techniques as well, but we omit them for brevity.

\subsubsection{SQTE}\label{app:subsubsec:SQTE}
    When comparing the quantile effects among genes with different scales, one can also consider the standardized quantile treatment effects (SQTE):
    \begin{align}
        \tau_{j}^{\SQTE_{\varrho}} &= \frac{Q_{\varrho}(Y_j(1)) - Q_{\varrho}(Y_j(0))}{\IQR(Y_{j}(0))} , \label{eq:SQTE}
    \end{align}
    where for a random variable $U$, $Q_{\varrho}[U]$ denote the $\varrho$-quantile of random variable $U$, and $\IQR(U)=Q(0.75)-Q(0.25)$ denote the median and interquartile range of $U$ with quantile function $Q$.
    Typically, when  $\varrho=0.5$, the $\varrho$-quantile equals to the median $Q_{\varrho}(U)=\Median(U)$, and we reveal the standardized median treatment effects $\tau_{j}^{\SQTE} = (\Median[Y_{j}(1)] - \Median[Y_{j}(0)] ) / \IQR(Y_{j}(0))$.

    \subsection{Extra experimental results}\label{app:subsec:extra-experiment}

    \subsubsection{Simulation}\label{app:subsubsec:simu}

    Recall that $\lambda_j$ is the mean of the couterfactuals $X_j(0)$ for $j\in[p]$ and $X_j(1)$ for $j\not \in \cV^*$.
    To generate the set of active genes $\cV^*$, we draw a sample from a Multinomial with 200 trials and $p=8000$ categories with probability
    \[\mathrm{softmax}(\{ \log(\mathrm{sd}(\exp(\lambda_j))) \}_{j\in[p]}).\]
    where the sample standard deviation across cells $i$ is used to estimate the above probability.
    The setup suggests that genes with higher variations are more likely to be active.
    For a maximum signal strength $\theta_{\max}$, we first draw a relative signal strength $r_j\sim\mathrm{Beta}(1,\beta_r)$ and set the final signal strength to be $s_j := \theta_{\max}r_j$.
    
    Then, we consider two simulation scenarios for $X_j(1)$ with $j\in\cV^*$ in the treatment group.

    \begin{enumerate}[label=(\arabic*)]
        \item Mean shift with high SNR

        In this case, we set $(\theta_{\max},\beta_r) = (1,0.5)$ so that the more signals have magnitudes close to $\theta_{\max}$.
        Then, we adjust the effect sizes ($\lambda_j$) by adding or subtracting a signal ($\theta$) with equal probability. 
        Specifically, this can be represented as:
        \[
        X_j(1)\sim\mathrm{Poisson}(\lambda_{j} + s_j \delta_{j}),\qquad j\in\cV^*,
        \]
        where $\delta_{j} \sim\mathrm{Bernoulli}(0.5)$.

        \item Median shift with low SNR 

        In this case, we set $(\theta_{\max},\beta_r) = (10,2)$ so that the more signals have magnitudes close to $0$.
        Then we draw
        \[
        X_{j}(1) \sim [\mathrm{LogNormal}(\lambda_j - s_j^2/2, s_j)] ,\qquad j\in\cV^*,
        \]
        which ensures that $X_j(1)$ has the same mean as $X_j(0)$, while their medians are different.   
        Above, $[x]$ indicates rounding to ensure the generated expression levels are integers.
    \end{enumerate}
    These DGPs aim to simulate complex data structures that reflect real-world phenomena, such as varying levels of signal perturbation and the impact of such variations on statistical analyses, particularly in the context of mean and median shifts in treated distributions.

    We also inspect the effect of cross-fitting, which helps fulfill the sample splitting requirement and improves the estimation and inference accuracy.
    More specifically, we randomly split $n$ observations into $K$ disjoint folds $\cN_1,\ldots,\cN_K$.
    Then, for the $k$th fold $\cN_k$, we compute the influence function values $\hat{\varphi}_{ij}$ for $i\in\cN_k$ with nuisance functions estimated from observations in other folds $\cN_1,\ldots,\cN_{k-1},\cN_{k+1},\ldots\cN_K$.
    We set the number of folds $K=5$.
    
    As shown in \Cref{fig:simu-split}, the BH procedure for the ATE test controls the \FDR at the desired level.
    When a large sample size, the \FDR for the STE test with the BH procedure also gets closer to the desired threshold.
    This may be because the test statistics get more positively dependent on cross-fitting.
    However, the BH procedure does not control the \FDR for the QTE test.
    In terms of \FDPex, the proposed multiple testing procedure gets tighter control, while the BH procedure still fails to control it.
    Finally, the power deteriorates with cross-fitting compared to results in \Cref{fig:simu}.

    \begin{figure}[!ht]
        \centering
        \includegraphics[width=0.7\textwidth]{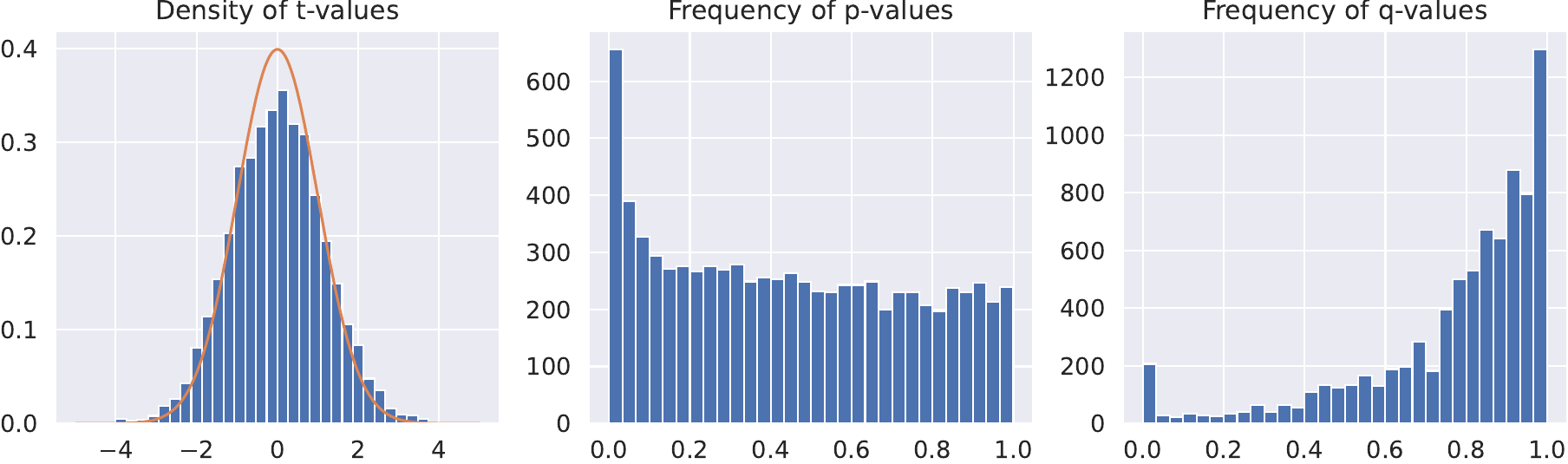}
        \caption{
        The histogram of different statistics in one simulation of \Cref{fig:simu} under mean shifts with $n=100$.
        In this experiment, the number of true non-nulls is 200, while BH produces 258 discoveries with a q-value cutoff of 0.1, yielding 30\% false discoveries.
        }
        \label{fig:BH-stat}
    \end{figure}

    \begin{figure}[!ht]
        \centering
        \includegraphics[width=0.65\textwidth]{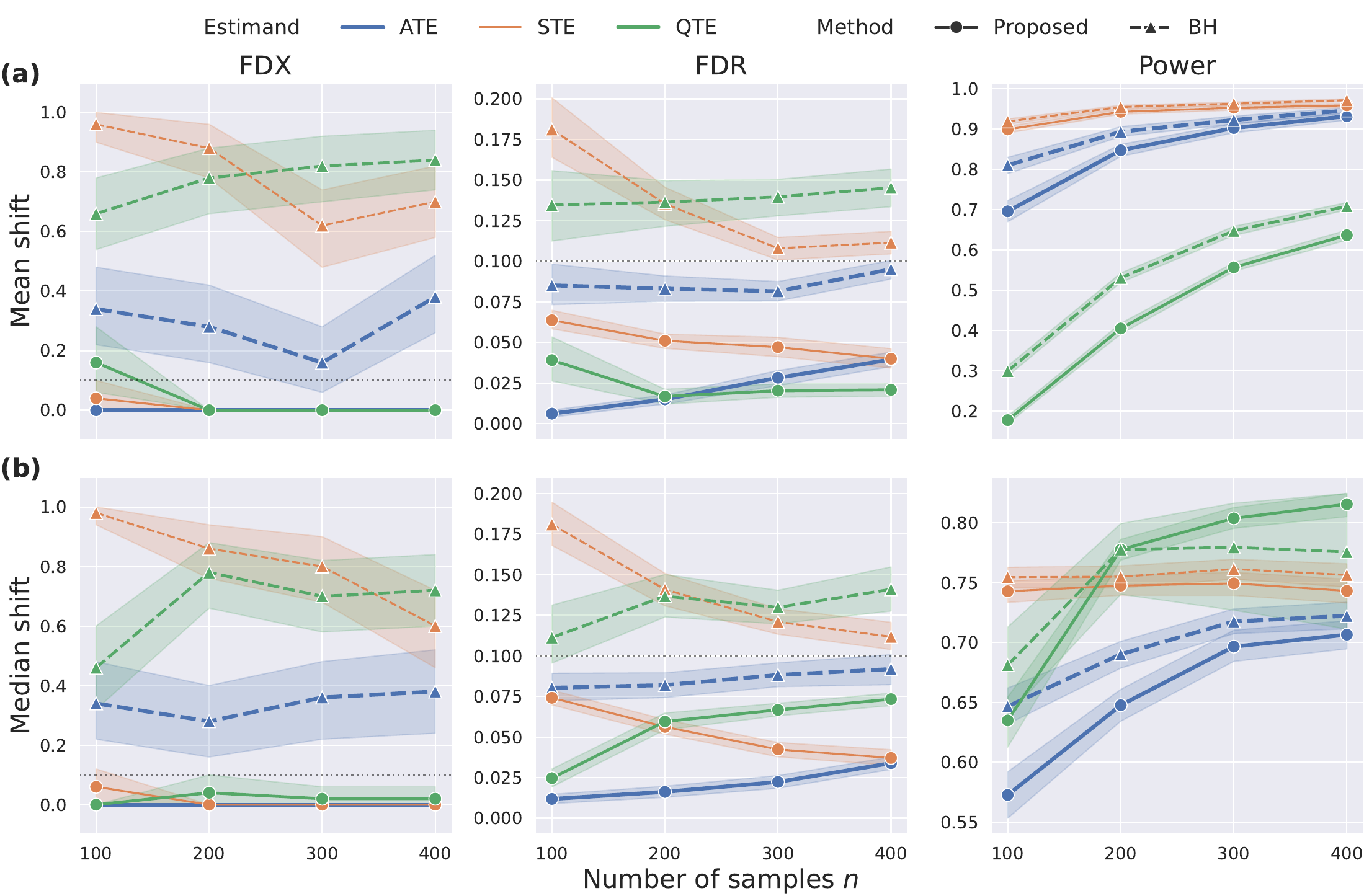}
        \caption{
        Simulation results of the hypothesis testing of $p=8000$ outcomes based on different causal estimands and FDP control methods for detecting differential signals under \textbf{(a)} mean shifts and \textbf{(b)} median shifts averaged over 50 randomly simulated datasets with 5-fold cross-fitting.        
        The gray dotted lines denote the nominal level of 0.1.}
        \label{fig:simu-split}
    \end{figure}

    \clearpage
    \subsubsection{LUHMES data}\label{app:subsec:LUHMES-data}

    \begin{table}[!ht]
        \centering
        \begin{tabularx}{0.8\textwidth}{cC{3.3cm}C{3.3cm}C{3.3cm}}
        \toprule
        \textbf{gRNA} & \textbf{Number of cells $n$} & \textbf{Number of genes $p$} & \textbf{Number of perturbed cells} \\
        \midrule
        \emph{PTEN} & 1014 & 1444 & 458 \\
        \emph{CHD2} & 756 & 1360 & 200\\
        \emph{ASH1L} & 725 & 1360 & 169 \\
        \emph{ADNP} & 895 & 1416 & 339 \\
        \bottomrule
        \end{tabularx}
        \caption{The summary of sizes of data under different perturbations.}
        \label{tab:LUHMES-data}
    \end{table}

    \begin{table}[!ht]
        \centering
        \begin{tabularx}{\textwidth}{cC{7cm}C{3.3cm}C{3.3cm}}
        \toprule
        \textbf{gRNA} & \textbf{Common} & \textbf{ATE only} & \textbf{STE only} \\
        \midrule
        \emph{PTEN} & PTH2, PTGDS, NEFM, EEF1A1, C21orf59, MFAP4, ALCAM, NEFL, ITM2C, EIF3E, CRABP2, SLC25A6, EIF3L, WLS, PPP1R1C, GNB2L1, SVIP, RGS10, H3F3A, DRAXIN, GNG3, TCP10L, EIF3K & PCP4, MYL1, RAI14, DNER, MAP7, SNCA, TSC22D1, NRP2, SKIDA1 & CCER2, PRDX1, TCF12 \\\cmidrule(lr){2-4}
        \emph{CHD2} & EEF1A1, NEFL, GNG3, ID4, EEF2, STMN2 & PRDX1, TUBB4B & PCP4 \\\cmidrule(lr){2-4}
        \emph{ASH1L} & MT-CO1, MT-CYB, FXYD7 &  & PKP4 \\\cmidrule(lr){2-4}
        \emph{ADNP} & C21orf59, MAP1B & PTGDS, KLHL35, LHX2 &  \\
        \bottomrule
        \end{tabularx}
        \caption{Sigfinicant genes for different guide RNA mutation on the late-stage cells.
        The last three columns show the discoveries that are significant in (1) both the ATE and the STE tests, (2) only the ATE tests, and (3) only the STE tests.}
        \label{tab:LUHMES-DE-genes}
    \end{table}
    
    \clearpage
    \subsubsection{Lupus data}
    \begin{figure}[!ht]
        \centering
        \includegraphics[width=0.55\textwidth]{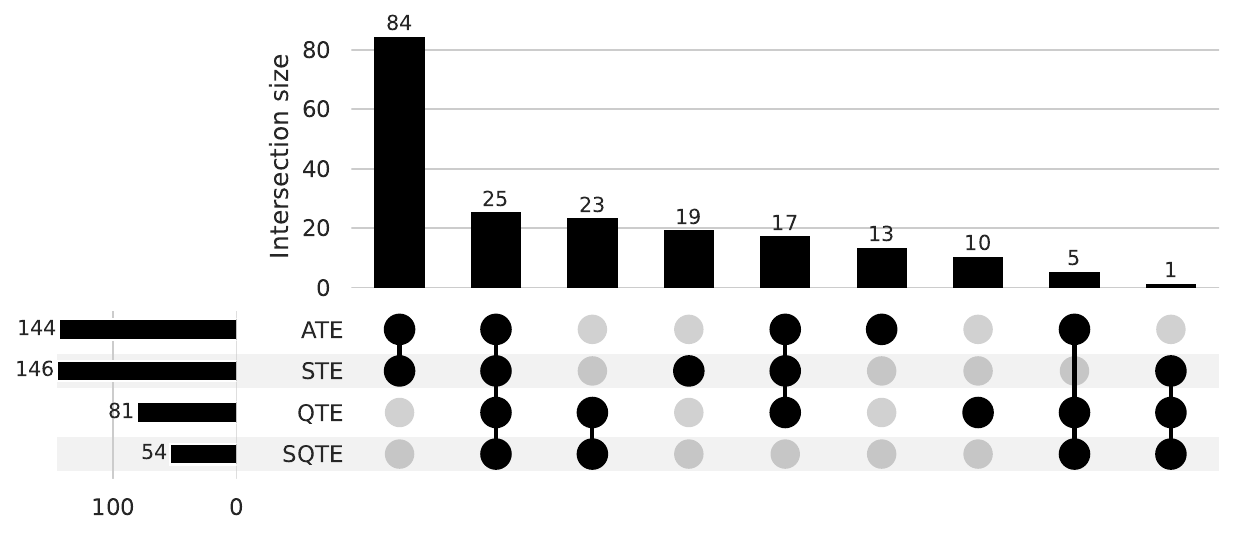}
        \includegraphics[width=0.55\textwidth]{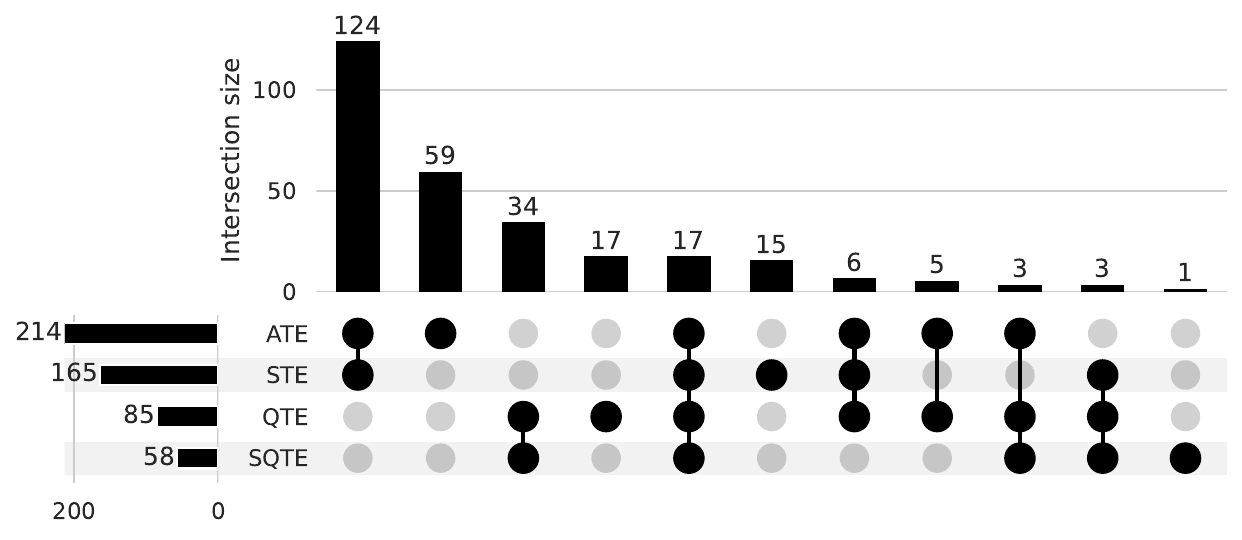}
        \includegraphics[width=0.55\textwidth]{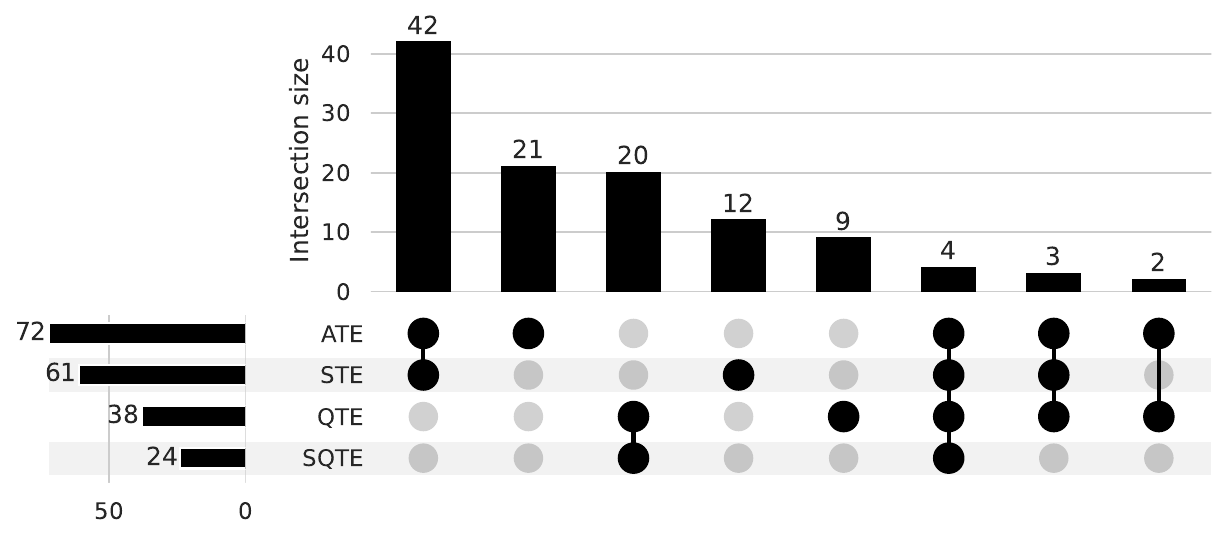}
        \includegraphics[width=0.55\textwidth]{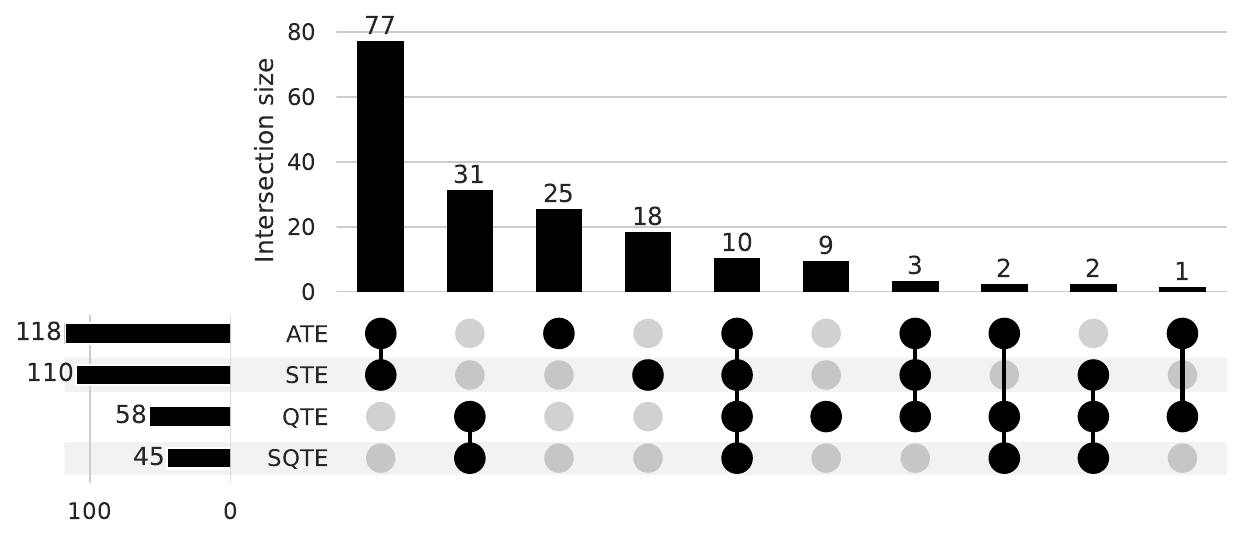}
        \includegraphics[width=0.55\textwidth]{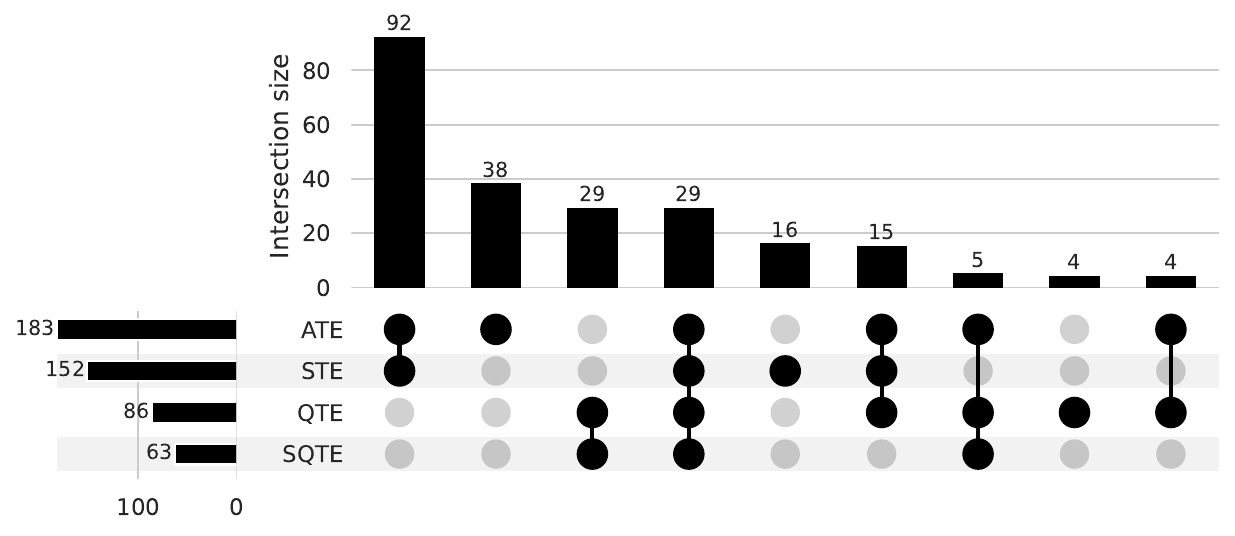}
        \caption{Upset plot of discoveries by tests based on different causal estimands on the T4, T8, NK, B, and cM cell types of Lupus data set.}
        \label{fig:lupus-upset-extra}
    \end{figure}

\clearpage


\begin{thebibliography}{}

\bibitem[Athey et~al., 2021]{athey2021semiparametric}
Athey, S., Bickel, P.~J., Chen, A., Imbens, G., and Pollmann, M. (2021).
\newblock Semiparametric estimation of treatment effects in randomized experiments.
\newblock Technical report, National Bureau of Economic Research.

\bibitem[Balakrishnan et~al., 2023]{balakrishnan2023fundamental}
Balakrishnan, S., Kennedy, E.~H., and Wasserman, L. (2023).
\newblock The fundamental limits of structure-agnostic functional estimation.
\newblock {\em arXiv preprint arXiv:2305.04116}.

\bibitem[Belloni et~al., 2018]{belloni2018high}
Belloni, A., Chernozhukov, V., Chetverikov, D., Hansen, C., and Kato, K. (2018).
\newblock High-dimensional econometrics and regularized gmm.
\newblock {\em arXiv preprint arXiv:1806.01888}.

\bibitem[Belloni et~al., 2017]{belloni2017program}
Belloni, A., Chernozhukov, V., Fernandez-Val, I., and Hansen, C. (2017).
\newblock Program evaluation and causal inference with high-dimensional data.
\newblock {\em Econometrica}, 85(1):233--298.

\bibitem[Benkeser et~al., 2017]{benkeser2017doubly}
Benkeser, D., Carone, M., Laan, M. V.~D., and Gilbert, P.~B. (2017).
\newblock Doubly robust nonparametric inference on the average treatment effect.
\newblock {\em Biometrika}, 104(4):863--880.

\bibitem[Billingsley, 1995]{billingsley1995probability}
Billingsley, P. (1995).
\newblock {\em Probability and measure}.
\newblock Wiley Series in Probability and Statistics. Wiley.

\bibitem[Chakrabortty et~al., 2022]{chakrabortty2022general}
Chakrabortty, A., Dai, G., and Tchetgen, E.~T. (2022).
\newblock A general framework for treatment effect estimation in semi-supervised and high dimensional settings.
\newblock {\em arXiv preprint arXiv:2201.00468}.

\bibitem[Chang et~al., 2018]{chang2018confidence}
Chang, J., Qiu, Y., Yao, Q., and Zou, T. (2018).
\newblock Confidence regions for entries of a large precision matrix.
\newblock {\em Journal of Econometrics}, 206(1):57--82.

\bibitem[Chernozhukov et~al., 2018]{chernozhukov2018double}
Chernozhukov, V., Chetverikov, D., Demirer, M., Duflo, E., Hansen, C., Newey, W., and Robins, J. (2018).
\newblock Double/debiased machine learning for treatment and structural parameters: Double/debiased machine learning.
\newblock {\em The Econometrics Journal}, 21(1).

\bibitem[Chernozhukov et~al., 2013]{chernozhukov2013gaussian}
Chernozhukov, V., Chetverikov, D., and Kato, K. (2013).
\newblock Gaussian approximations and multiplier bootstrap for maxima of sums of high-dimensional random vectors.
\newblock {\em The Annals of Statistics}, 41(6):2786--2819.

\bibitem[D{\'\i}az, 2017]{diaz2017efficient}
D{\'\i}az, I. (2017).
\newblock Efficient estimation of quantiles in missing data models.
\newblock {\em Journal of Statistical Planning and Inference}, 190:39--51.

\bibitem[Dixit et~al., 2016]{dixit2016perturb}
Dixit, A., Parnas, O., Li, B., Chen, J., Fulco, C.~P., Jerby-Arnon, L., Marjanovic, N.~D., Dionne, D., Burks, T., Raychowdhury, R., et~al. (2016).
\newblock Perturb-seq: dissecting molecular circuits with scalable single-cell rna profiling of pooled genetic screens.
\newblock {\em cell}, 167(7):1853--1866.

\bibitem[Du et~al., 2023]{gcate2023}
Du, J.-H., Wasserman, L., and Roeder, K. (2023).
\newblock Simultaneous inference for generalized linear models with unmeasured confounders.
\newblock {\em arXiv preprint arXiv:2309.07261}.

\bibitem[Editorial, 2023]{review:2023}
Editorial (2023).
\newblock A focus on single-cell omics.
\newblock {\em Nat Rev Genet}, 24(8):485.

\bibitem[Flanders and Klein, 2015]{flanders2015general}
Flanders, W.~D. and Klein, M. (2015).
\newblock A general, multivariate definition of causal effects in epidemiology.
\newblock {\em Epidemiology}, 26(4):481--489.

\bibitem[Genovese and Wasserman, 2006]{genovese2006exceedance}
Genovese, C.~R. and Wasserman, L. (2006).
\newblock Exceedance control of the false discovery proportion.
\newblock {\em Journal of the American Statistical Association}, 101(476):1408--1417.

\bibitem[Hampel et~al., 2011]{hampel2011robust}
Hampel, F.~R., Ronchetti, E.~M., Rousseeuw, P.~J., and Stahel, W.~A. (2011).
\newblock {\em Robust statistics: the approach based on influence functions}.
\newblock John Wiley \& Sons.

\bibitem[Imbens, 2004]{imbens2004nonparametric}
Imbens, G.~W. (2004).
\newblock Nonparametric estimation of average treatment effects under exogeneity: A review.
\newblock {\em Review of Economics and statistics}, 86(1):4--29.

\bibitem[Imbens and Rubin, 2015]{imbens2015causal}
Imbens, G.~W. and Rubin, D.~B. (2015).
\newblock {\em Causal inference in statistics, social, and biomedical sciences}.
\newblock Cambridge University Press.

\bibitem[Jiang et~al., 2022]{jiang2022statistics}
Jiang, R., Sun, T., Song, D., and Li, J.~J. (2022).
\newblock Statistics or biology: the zero-inflation controversy about scrna-seq data.
\newblock {\em Genome biology}, 23(1):1--24.

\bibitem[Jin and Syrgkanis, 2024]{jin2024structure}
Jin, J. and Syrgkanis, V. (2024).
\newblock Structure-agnostic optimality of doubly robust learning for treatment effect estimation.
\newblock {\em arXiv preprint arXiv:2402.14264}.

\bibitem[Kallus et~al., 2024]{kallus2019localized}
Kallus, N., Mao, X., and Uehara, M. (2024).
\newblock Localized debiased machine learning: Efficient inference on quantile treatment effects and beyond.
\newblock {\em Journal of Machine Learning Research}, 25(16):1--59.

\bibitem[Kennedy, 2022]{kennedy2022semiparametric}
Kennedy, E.~H. (2022).
\newblock Semiparametric doubly robust targeted double machine learning: a review.
\newblock {\em arXiv preprint arXiv:2203.06469}.

\bibitem[Kennedy et~al., 2020]{kennedy2020sharp}
Kennedy, E.~H., Balakrishnan, S., and G’Sell, M. (2020).
\newblock Sharp instruments for classifying compliers and generalizing causal effects.
\newblock {\em The Annals of Statistics}, 48(4):2008--2030.

\bibitem[Kennedy et~al., 2023]{kennedy2021semiparametric}
Kennedy, E.~H., Balakrishnan, S., and Wasserman, L. (2023).
\newblock Semiparametric counterfactual density estimation.
\newblock {\em Biometrika}, 110(4):875--896.

\bibitem[Kennedy et~al., 2019]{kennedy2019estimating}
Kennedy, E.~H., Kangovi, S., and Mitra, N. (2019).
\newblock Estimating scaled treatment effects with multiple outcomes.
\newblock {\em Statistical methods in medical research}, 28(4):1094--1104.

\bibitem[Kim et~al., 2018]{kim2018causal}
Kim, K., Kim, J., and Kennedy, E.~H. (2018).
\newblock Causal effects based on distributional distances.
\newblock {\em arXiv preprint arXiv:1806.02935}.

\bibitem[Kuchibhotla and Patra, 2022]{kuchibhotla2022least}
Kuchibhotla, A.~K. and Patra, R.~K. (2022).
\newblock On least squares estimation under heteroscedastic and heavy-tailed errors.
\newblock {\em The Annals of Statistics}, 50(1):277--302.

\bibitem[Laan and Robins, 2003]{laan2003unified}
Laan, M.~J. and Robins, J.~M. (2003).
\newblock {\em Unified methods for censored longitudinal data and causality}.
\newblock Springer.

\bibitem[Lalli et~al., 2020]{lalli2020high}
Lalli, M.~A., Avey, D., Dougherty, J.~D., Milbrandt, J., and Mitra, R.~D. (2020).
\newblock High-throughput single-cell functional elucidation of neurodevelopmental disease--associated genes reveals convergent mechanisms altering neuronal differentiation.
\newblock {\em Genome research}, 30(9):1317--1331.

\bibitem[Li and Ding, 2017]{li2017general}
Li, X. and Ding, P. (2017).
\newblock General forms of finite population central limit theorems with applications to causal inference.
\newblock {\em Journal of the American Statistical Association}, 112(520):1759--1769.

\bibitem[Lin et~al., 2000]{lin2000scaled}
Lin, X., Ryan, L., Sammel, M., Zhang, D., Padungtod, C., and Xu, X. (2000).
\newblock A scaled linear mixed model for multiple outcomes.
\newblock {\em Biometrics}, 56(2):593--601.

\bibitem[Lupparelli and Mattei, 2020]{lupparelli2020joint}
Lupparelli, M. and Mattei, A. (2020).
\newblock Joint and marginal causal effects for binary non-independent outcomes.
\newblock {\em Journal of Multivariate Analysis}, 178:104609.

\bibitem[Mattei et~al., 2013]{mattei2013exploiting}
Mattei, A., Li, F., and Mealli, F. (2013).
\newblock Exploiting multiple outcomes in bayesian principal stratification analysis with application to the evaluation of a job training program.
\newblock {\em The Annals of Applied Statistics}, pages 2336--2360.

\bibitem[Mealli and Pacini, 2013]{mealli2013using}
Mealli, F. and Pacini, B. (2013).
\newblock Using secondary outcomes to sharpen inference in randomized experiments with noncompliance.
\newblock {\em Journal of the American Statistical Association}, 108(503):1120--1131.

\bibitem[Mealli et~al., 2016]{mealli2016identification}
Mealli, F., Pacini, B., and Stanghellini, E. (2016).
\newblock Identification of principal causal effects using additional outcomes in concentration graphs.
\newblock {\em Journal of Educational and Behavioral Statistics}, 41(5):463--480.

\bibitem[Mercatanti et~al., 2015]{mercatanti2015improving}
Mercatanti, A., Li, F., and Mealli, F. (2015).
\newblock Improving inference of gaussian mixtures using auxiliary variables.
\newblock {\em Statistical Analysis and Data Mining: The ASA Data Science Journal}, 8(1):34--48.

\bibitem[Perez et~al., 2022]{perez2022single}
Perez, R.~K., Gordon, M.~G., Subramaniam, M., Kim, M.~C., Hartoularos, G.~C., Targ, S., Sun, Y., Ogorodnikov, A., Bueno, R., Lu, A., et~al. (2022).
\newblock {Single-cell RNA-seq reveals cell type-specific molecular and genetic associations to lupus}.
\newblock {\em Science}, 376(6589):eabf1970.

\bibitem[Pocock et~al., 1987]{pocock1987analysis}
Pocock, S.~J., Geller, N.~L., and Tsiatis, A.~A. (1987).
\newblock The analysis of multiple endpoints in clinical trials.
\newblock {\em Biometrics}, pages 487--498.

\bibitem[Qiu et~al., 2023]{qiu2023unveiling}
Qiu, Y., Sun, J., and Zhou, X.-H. (2023).
\newblock Unveiling the unobservable: Causal inference on multiple derived outcomes.
\newblock {\em Journal of the American Statistical Association}, pages 1--12.

\bibitem[Robins et~al., 1994]{robins1994estimation}
Robins, J.~M., Rotnitzky, A., and Zhao, L.~P. (1994).
\newblock Estimation of regression coefficients when some regressors are not always observed.
\newblock {\em Journal of the American statistical Association}, 89(427):846--866.

\bibitem[Roy et~al., 2003]{roy2003scaled}
Roy, J., Lin, X., and Ryan, L.~M. (2003).
\newblock Scaled marginal models for multiple continuous outcomes.
\newblock {\em Biostatistics}, 4(3):371--383.

\bibitem[Sammel et~al., 1999]{sammel1999multivariate}
Sammel, M., Lin, X., and Ryan, L. (1999).
\newblock Multivariate linear mixed models for multiple outcomes.
\newblock {\em Statistics in medicine}, 18(17-18):2479--2492.

\bibitem[Sarkar and Stephens, 2021]{sarkar2021separating}
Sarkar, A. and Stephens, M. (2021).
\newblock Separating measurement and expression models clarifies confusion in single-cell rna sequencing analysis.
\newblock {\em Nature genetics}, 53(6):770--777.

\bibitem[Scharfstein et~al., 1999]{scharfstein1999adjusting}
Scharfstein, D.~O., Rotnitzky, A., and Robins, J.~M. (1999).
\newblock Adjusting for nonignorable drop-out using semiparametric nonresponse models.
\newblock {\em Journal of the American Statistical Association}, 94(448):1096--1120.

\bibitem[Teixeira-Pinto and Mauri, 2011]{teixeira2011statistical}
Teixeira-Pinto, A. and Mauri, L. (2011).
\newblock Statistical analysis of noncommensurate multiple outcomes.
\newblock {\em Circulation: Cardiovascular Quality and Outcomes}, 4(6):650--656.

\bibitem[Thurston et~al., 2009]{thurston2009bayesian}
Thurston, S.~W., Ruppert, D., and Davidson, P.~W. (2009).
\newblock Bayesian models for multiple outcomes nested in domains.
\newblock {\em Biometrics}, 65(4):1078--1086.

\bibitem[Tsiatis, 2006]{tsiatis2006semiparametric}
Tsiatis, A.~A. (2006).
\newblock {\em Semiparametric theory and missing data}, volume~4.
\newblock Springer.

\bibitem[Van~der Laan, 2014]{van2014targeted}
Van~der Laan, M.~J. (2014).
\newblock Targeted estimation of nuisance parameters to obtain valid statistical inference.
\newblock {\em The international journal of biostatistics}, 10(1):29--57.

\bibitem[van~der Vaart, 2000]{van2000asymptotic}
van~der Vaart, A.~W. (2000).
\newblock {\em Asymptotic statistics}, volume~3.
\newblock Cambridge university press.

\bibitem[van~der Vaart and Wellner, 1996]{vaart1996weak}
van~der Vaart, A.~W. and Wellner, J.~A. (1996).
\newblock {\em Weak Convergence and Empirical Processes: With Applications to Statistics}.
\newblock Springer Series in Statistics. Springer New York.

\bibitem[Yoon et~al., 2011]{yoon2011alternative}
Yoon, F.~B., Fitzmaurice, G.~M., Lipsitz, S.~R., Horton, N.~J., Laird, N.~M., and Normand, S.-L.~T. (2011).
\newblock Alternative methods for testing treatment effects on the basis of multiple outcomes: simulation and case study.
\newblock {\em Statistics in medicine}, 30(16):1917--1932.

\bibitem[Zhang et~al., 2022]{zhang2022ideas}
Zhang, M., Liu, S., Miao, Z., Han, F., Gottardo, R., and Sun, W. (2022).
\newblock Ideas: individual level differential expression analysis for single-cell rna-seq data.
\newblock {\em Genome biology}, 23(1):1--17.

\end{thebibliography}
\end{document}